\documentclass[11pt]{article}
\usepackage{fullpage}
\usepackage{url}
\usepackage[colorlinks=true,linkcolor=Emerald,citecolor=RoyalBlue]{hyperref}
\usepackage{amsmath,nccmath,amsfonts,amsthm,multirow,mdframed,amssymb}

\usepackage{times}
\usepackage{paralist}
\usepackage{graphicx}
\usepackage{floatpag}
\usepackage{float}

\usepackage{bbold}
\usepackage{enumitem}
\usepackage{subcaption} 
\usepackage{soul}
\usepackage{comment}
\usepackage{calc}
\usepackage{upgreek}
\usepackage{mathtools}
\usepackage{color}
\usepackage[dvipsnames]{xcolor}
\usepackage{xspace}
\usepackage{tcolorbox}
\usepackage{cleveref}
\usepackage{stackengine}
\usepackage{tocloft}
\usepackage{physics}
\usepackage{quantikz}

\stackMath
\allowdisplaybreaks

\newtheorem{theorem}{Theorem}
\newtheorem*{theorem*}{Theorem}
\newtheorem{lemma}[theorem]{Lemma}
\newtheorem{corollary}[theorem]{Corollary}
\newtheorem{fact}[theorem]{Fact}

\crefname{fact}{Fact}{Facts}
\Crefname{fact}{Fact}{Facts}

\newcommand{\ad}{\operatorname{ad}}
\newcommand{\lr}[1]{\left(#1\right)}
\newcommand{\wt}{\widetilde}

\newcommand{\E}{\mathbb E}

\newcommand{\cB}{\mathcal{B}}

\newcommand{\cD}{\mathcal{D}}
\newcommand{\cE}{\mathcal{E}}

\newcommand{\cH}{\mathcal{H}}

\newcommand{\cK}{\mathcal{K}}
\newcommand{\cL}{\mathcal{L}}
\newcommand{\cM}{\mathcal{M}}

\newcommand{\cP}{\mathcal{P}}

\newcommand{\C}{\mathbb C}
\newcommand{\R}{\mathbb R}

\newcommand{\sM}{\mathsf M}

\numberwithin{equation}{section}

\makeatletter
\let\c@fconjecture\c@conjecture
\makeatother

\makeatletter
\let\c@fconj\c@conj
\makeatother

\title{\fontsize{15.5}{19}\selectfont Fast mixing of the SYK model at high temperatures}
\author{
\begin{tabular}{c@{\hspace{1.1cm}}c@{\hspace{0.8cm}}c}
Yiyi Cai &
Yongtao Zhan&
Alexander Zlokapa\\[2pt]
\small Stanford &
\small Caltech &
\small MIT
\end{tabular}
}
\date{}

\begin{document}
\maketitle
\thispagestyle{empty}
\begin{abstract}
The Sachdev-Ye-Kitaev (SYK) model is a strongly interacting fermionic system. Its dynamics are difficult to analyze with Lieb-Robinson bounds and cluster expansions due to its all-to-all disordered interactions. We prove that, with high probability over the disorder, the SYK model admits a quantum Gibbs sampler with a system-size-independent spectral gap at sufficiently high constant temperatures. This yields polynomial-time preparation of SYK Gibbs states on a quantum computer. While prior non-rigorous computations that suggested SYK thermalization were limited to studying local correlators, we emphasize that our result holds up to arbitrary inverse polynomial trace distance. Our proof introduces the pseudo-Lindbladian approach to fermions then applies two key new ideas: a Bochner inequality for quantum dynamics, and an especially simple SYK analysis based on a comparison to classical dynamics.
\end{abstract}

\section{Introduction}
Understanding strongly interacting fermionic systems is a central challenge in quantum algorithms. Despite their importance in condensed matter and quantum chemistry, they are historically challenging to simulate classically due to obstructions such as the fermionic sign problem and large entanglement, obstructing Quantum Monte Carlo \cite{troyer2005computational} and tensor network methods \cite{mortier2025fermionic,kraus2010fermionic}. One of the most well-studied strongly interacting systems is the Sachdev-Ye-Kitaev (SYK) model \cite{sachdev1993gapless,maldacena2016remarks,2015escq.progE...2K,kitaev2018soft}. It is a disordered mean-field model on $N$ Majorana fermions, with independent Gaussian coefficients $J_A \stackrel{\text { iid }}{\sim} \mathcal{N}\left(0,6 J^2 / N^3\right)$ coupling every set of four distinct Majorana fermions $\{\gamma_{i}\}$ (defined as Hermitian operators satisfying $\{\gamma_i, \gamma_j\} = 2\delta_{ij}$):
\[
H=\sum_{1 \leq i_1 < \cdots < i_4 \leq N} J_{i_1 i_2 i_3 i_4} \gamma_{i_1} \gamma_{i_2} \gamma_{i_3} \gamma_{i_4}.
\]
Rigorous results show that, with high probability over the disorder, preparing thermal states at any constant temperature requires quantum circuits whose depth grows polynomially with system size; moreover, these states are far from Hartree-Fock or Gaussian approximations \cite{hastings2022optimizing,hastings2023field,anschuetz2025strongly,ramkumar2026high, zlokapa2026rigorous}. Although recent results prove that \emph{local} thermal expectations are classically efficient to estimate~\cite{zlokapa2026rigorous,schmidhuber2026polynomial}, the \emph{global} state itself still appears to be far from classical. Hence, it remains open if physically preparing the SYK Gibbs state is a possible route towards demonstrating quantum advantage.

Non-rigorous results using techniques from statistical physics suggest that the SYK Gibbs state should be quantumly easy to prepare, e.g., by coupling it to a large thermal bath~\cite{zhang2019evaporation,maldacena2021syk,almheiri2024universal,schuster2025cooling}. These results use methods that fundamentally cannot analyze how close the resulting state is in \emph{trace distance}: they only address properties of local observables. Hence, while open system dynamics may eventually reproduce quantities such as thermal two-point functions, it is (even non-rigorously) open if global quantities (e.g., a Helstrom measurement to certify trace distance) can be correctly reproduced after thermalizing. Unfortunately, rigorous analyses of the SYK model are often arduous and highly technical, and have only recently confirmed some longstanding predictions from the physics literature~\cite{chen2021growth,lucas2020non,zlokapa2026rigorous,gamarnik2026free,he2026spectral,schmidhuber2026polynomial}.

Recent quantum Gibbs sampling algorithms offer a rigorous path to efficient thermal state preparation. They are often based on dissipative Lindbladian dynamics that are carefully designed to satisfy detailed balance and are thus natural candidates for algorithms that imitate natural thermalization processes~\cite{gilyen2024quantum, chen2023quantum, chen2023efficient, chen2025efficient, jiang2024quantum}. Similar to classical Markov chain Monte Carlo, the efficiency of the algorithm crucially depends on the mixing time: the characteristic timescale for dynamics to approach the stationary state, regardless of the initial state. For a quantum sampler, we measure convergence in trace distance, so a mixing time guarantee controls the outcome probabilities of arbitrary measurements, including global observables. Establishing such mixing time bounds is often technically challenging: even classically, the mixing times of mean-field disordered systems have only recently been understood (via spectral independence)~\cite{anari2024universality,anari2024trickle}. In the quantum setting, a spectral gap of the sampler can be used to show a mixing time bound \cite{temme2010chi}, while stronger functional inequalities can be used to improve its dependence on system size \cite{kastoryano2013quantum}.
Recent proofs of quantum fast mixing for noncommuting systems have used Lieb-Robinson bounds and spectral-gap stability~\cite{rouze2025efficient,rouze2026optimal}, quantum Dobrushin conditions~\cite{bakshi2026dobrushin}, and most recently pseudo-Lindbladian comparisons~\cite{bergamaschi2026fast}. Arguments based on Lieb-Robinson bounds have also established fast mixing of weakly interacting fermionic systems \cite{h1dx-ps5p, vsmid2025polynomial, vsmid2025rapid}. 

However, the SYK model is an \emph{all-to-all} system, where each fermion strongly interacts with all other fermions in the system, rather than only within a local neighborhood. Although each individual coupling is weak, the number of interactions involving a given fermion grows rapidly with system size, on the order of $O(N^3)$. Simply adding their absolute strengths then gives a bound that grows with the system size. However, such an estimate ignores a specific feature of the SYK model: the couplings between fermions have random signs, causing cancellations. Exploiting this randomness is essential for obtaining useful bounds at a temperature independent of system size. Standard Lieb-Robinson bounds are too loose for SYK, and it is unclear if more refined bounds on operator growth that exploit the random interactions~\cite{chen2021growth,lucas2020non} suffice to establish fast mixing at constant temperature.
An alternative strategy was recently developed that can show fast mixing for systems when Lieb-Robinson bounds fail~\cite{bergamaschi2026fast}. These techniques, based on cluster expansions, yield fast mixing at high temperatures for all-to-all models~\cite{bergamaschi2026fastall}. However, when applied directly to SYK, their sufficient temperature threshold grows with system size. It is thus unclear if any of the existing methods of bounding mixing times are powerful enough to apply to the SYK model.

In this work, we prove fast mixing of the SYK model at large constant temperature. This implies a polynomial-time quantum algorithm for preparing the state up to arbitrary inverse polynomial trace distance.

\begin{theorem}[Fast mixing of SYK]\label{thm:main_result}
Let $H$ be the Hamiltonian of the SYK model, where $\gamma_i$ are Majorana operators satisfying the anticommutator relations $\left\{\gamma_j, \gamma_k\right\}=2 \delta_{j k}$. Then for all $0 \leq \beta\leq 1/(1000J)$, there is at least a probability $1-2e^{-N}$ over the random couplings that, the Lindbladian defined in \eqref{eq:lindblad} admits a system-size-independent gap, and the mixing time of the Lindbladian is at most $O(N+\log (1/{\epsilon}))$.
\end{theorem}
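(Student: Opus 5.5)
The plan has three steps: analyze the infinite-temperature dynamics, treat finite $\beta$ as a perturbation through a pseudo-Lindbladian comparison, and control the SYK-specific error terms using random-sign cancellations.

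\textbf{Step 1: the reference dynamics at $\beta=0$.} We take the Lindbladian of \eqref{eq:lindblad} to be a detailed-balance (KMS) Gibbs sampler with jump operators the single Majoranas $\gamma_i$, $i=1,\dots,N$. At $\beta=0$ each jump acts as a depolarizing-type channel $X\mapsto \gamma_i X\gamma_i - X$ on the Majorana monomial basis. A monomial $\gamma_S$ is an eigenvector with eigenvalue $-2\cdot\#\{i: \gamma_i \text{ anticommutes with } \gamma_S\}$, which is of order $|S|$ or $N-|S|$ depending on parity. After normalizing, this gives a constant gap on the traceless part. This is the ``classical'' comparison dynamics: a product Markov chain on subsets $S\subseteq[N]$.

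\textbf{Step 2: perturbation at small $\beta$.} Conjugate by $\rho_\beta^{1/4}$, i.e.\ pass to the KMS inner product. The resulting self-adjoint generator is $\cL_\beta = \cL_0 + \cE_\beta$. Expanding the filtered jumps $\hat\gamma_i(\omega)$ in $\beta$ through nested commutators $\ad_H^k(\gamma_i)$, we write $\cE_\beta$ as a convergent series of terms. Each term involves products of $\gamma_i$ with $\ad_H^k(\gamma_i)$ and has a prefactor $\beta^k/k!$ or similar. We then aim for a Bochner-type inequality $-\langle X,\cL_\beta X\rangle \ge (1-c)\,(-\langle X,\cL_0X\rangle)$ for some $c<1$, which would yield a gap of at least $(1-c)\,\mathrm{gap}(\cL_0)$. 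The point of the comparison is that it is made in the Dirichlet form of $\cL_0$, not in operator norm. The error has to be bounded per jump index and summed, so that the $N$ jumps do not blow up the estimate.

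\textbf{Step 3: the main obstacle, controlling $\ad_H^k(\gamma_i)$.} Naively $\|[H,\gamma_i]\|$ sums $O(N^3)$ couplings of size $N^{-3/2}$, which grows with $N$. The plan is to use random signs instead. $[H,\gamma_i]=2\sum_{A\ni i}J_A\gamma_A\gamma_i$ is a Gaussian matrix series, so matrix concentration (noncommutative Khintchine plus a union bound over $i$, $k$ with failure probability $e^{-N}$) gives $\|\ad_H^k(\gamma_i)\|\le (CJ)^k k!$. Better still, we bound the relevant quadratic forms $\sum_i \langle X,\ad_H^k(\gamma_i)^\dagger\ad_H^k(\gamma_i)X\rangle$ in terms of the number operator $\sum_i\langle X,[\gamma_i,X]\rangle$-type Dirichlet form. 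For these sums we expect $\sum_{A}J_A^2\gamma_A(\cdot)\gamma_A$ to concentrate around its mean, which is exactly a classical Dirichlet form, with fluctuations bounded by an $\epsilon$-net over the Gaussian couplings. This is where the failure probability $2e^{-N}$ would come from. Choosing $\beta\le 1/(1000J)$ then makes the series $\sum_k(C\beta J)^k$ small enough that $c<1$.

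\textbf{Mixing time.} Finally, the standard $\chi^2$ argument converts the gap into a mixing time: $t_{\rm mix}\le \mathrm{gap}^{-1}\bigl(\tfrac12\log\|\rho_\beta^{-1}\|+\log(1/\epsilon)\bigr)$. Here $\log\|\rho_\beta^{-1}\|\le N\log 2+2\beta\|H\|$, and $\|H\|=O(N)$ with high probability. This gives $O(N+\log(1/\epsilon))$.
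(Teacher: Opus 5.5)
Your Step~2 cannot work as stated, and this is the central gap. After KMS conjugation the generator (the paper's discriminant $\cD$) annihilates $\sqrt{\rho_\beta}$, while at $\beta=0$ its kernel is $\operatorname{span}\{I\}$. So the inequality $-\langle X,\cL_\beta X\rangle\ge(1-c)(-\langle X,\cL_0X\rangle)$ already fails at $X=\sqrt{\rho_\beta}$ for every $\beta>0$.

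Quantitatively, expanding the discriminant gives $\cD_\beta(X)=NX-\sum_i\gamma_iX\gamma_i+2\beta\{H,X\}+O(\beta^2)$. The first-order correction has $2\to2$ norm $4\beta\|H\|=\Theta(\beta JN)$. Near $X=I$ it is linear in the deviation from $I$, while the $\beta=0$ Dirichlet form is quadratic. So it cannot be absorbed by Weyl-type perturbation, which would need $\beta=O(1/N)$, nor by a per-jump form comparison.

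The same obstruction defeats the quadratic-form bound in your Step~3. At $X=I$ the $\beta=0$ Dirichlet form vanishes, but $\sum_i\|[H,\gamma_i]X\|_2^2=16d\sum_AJ_A^2\approx4J^2Nd$.

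The missing idea is to build the moving kernel into the reference instead of treating it perturbatively. The paper uses the Gibbs-conjugated Majoranas $\wt\gamma_i=e^{-\beta H/4}\gamma_ie^{\beta H/4}$, which still satisfy $\{\wt\gamma_i,\wt\gamma_j\}=2\delta_{ij}$. The Clifford-derivative maps $b_i(X)=\frac12(\wt\gamma_iX-\cP(X)\wt\gamma_i^\dagger)$ then anticommute exactly and have common kernel $\operatorname{span}\{\sqrt\rho\}$. The Bochner identity $\wt\cK^2=\sum_{i,j}b_i^\dagger\{b_i,b_j^\dagger\}b_j+\sum_ib_i^\dagger\wt\cK b_i$ reduces the gap of the unfiltered pseudo-Lindbladian to a lower bound on the Hermitian $N\times N$ block matrix $[\frac12\{\wt\gamma_i,\wt\gamma_j^\dagger\}]$, which is similar to the Jacobian $D\gamma(\beta/2)$. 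The actual sampler, including the coherent term $B$, is then reached exactly, not by an expansion in $\beta$. This uses the averaging identity $\cD=2\int\frac2\beta\operatorname{sech}(2\pi s/\beta)\,e^{is\ad_H}\cK_fe^{-is\ad_H}\,ds$ and a Gr\"onwall estimate for the Gaussian filter controlled by the block operator $\sM(\wt\gamma,\wt\gamma^\dagger)$.

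Your Step~3 also does not provide the $N$-independent control that is needed:
\begin{itemize}
\item Noncommutative Khintchine only yields $\|[H,\gamma_i]\|\lesssim J\sqrt{\log d}=O(J\sqrt N)$.
\item For $k\ge2$, $\ad_H^k(\gamma_i)$ is a degree-$k$ Gaussian chaos rather than a Gaussian matrix series.
\item Absolute per-site bounds summed over the $N$ jumps are extensive, and relative ones fail near $I$ as shown above.
\item The diagonal superoperator $\sum_AJ_A^2\gamma_A(\cdot)\gamma_A$ is the easy part. The cross terms $J_AJ_B$ with $A\neq B$, and all higher Taylor orders, are where the sign cancellations must be captured uniformly over a $2^N$-dimensional operator space. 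An $\epsilon$-net over that space has $e^{\Omega(2^N)}$ points, which is incompatible with failure probability $e^{-N}$.
\end{itemize}
What is actually required is an operator-norm bound on the entire block matrices $D\gamma(s)-I$ and $\sM(\gamma(s),\gamma(-s))$ on $\C^N\otimes\C^d$, uniformly in the Taylor order.

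The paper gets this by bounding the trace moments $\frac1d\E\Tr[(F_m^\dagger F_m)^p]$ of the Taylor coefficients by those of a classical Jacobian. That Jacobian comes from $\dot x=2\nabla G(x)$ with $G(x)=\sum_AJ_Ax_A$, started at $x=\mathbf1$. The comparison works because Gaussian moments are nonnegative and the fermionic reordering signs are discarded by the triangle inequality. The paper then bounds the classical Jacobian via Banach's theorem on symmetric multilinear forms, Sudakov--Fernique and Borell--TIS. It takes $p\approx6N/m$ so that Markov's inequality absorbs the dimension factors $d$ and $d^2$, and uses the crude bound $\|F_m\|\le N(2\|H\|)^m/m!$ for $m>N$.

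Your final conversion of a constant gap into an $O(N+\log(1/\epsilon))$ mixing time is fine.
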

\noindent
To obtain an efficient state-preparation algorithm, we combine this mixing-time bound with the efficient implementation of the Gibbs sampler \cite{ding2025efficient}. The resulting procedure prepares the SYK Gibbs state to inverse-polynomial trace-distance error in polynomial time.

Our proof combines two new techniques. First, we extend the pseudo-Lindbladian framework \cite{bergamaschi2026fastall, bergamaschi2026fast} to a more convenient form for fermionic systems. To bound its spectral gap, we use a fermionic version of the \textit{integrated Bochner method}, which bounds the square of a nonnegative generator by its Dirichlet form. This strategy underlies classical Bakry-\'Emery arguments \cite{bakry1985diffusions, boudou2006spectral, kondratiev2013spectral} and recent classical mixing-time proofs~\cite{gobel2026simple, chen2026spectral2, chen2026spectral}. We carry out the calculation using Clifford derivatives, which are the fermionic counterparts of ordinary derivatives~\cite{gross1975clifforddirichlet,carlen2014fermionic}. (Whereas ordinary derivatives commute, Clifford derivatives anticommute.) These relations allow us to obtain a criterion expressed in terms of the Jacobian of imaginary-time evolution. Roughly speaking, it sufficies to show that the block matrix $D\gamma(s)$ with elements $(D\gamma(s))_{ji} = \frac{1}{2}\{\gamma_i, e^{-sH} \gamma_j e^{sH}\}$ is close to the identity. This matrix is precisely the Jacobian defined with respect to the Clifford derivative, used in fermionic Dirichlet forms and Fokker-Planck dynamics. This Jacobian measures how a perturbation at site $i$ affects the imaginary-time-evolved fermion at site $j$. To relate $e^{-sH} \gamma_j e^{sH}$ back to the original jump operators used in algorithms like~\cite{ding2025efficient}, we show that it suffices to similarly analyze $\partial_s D\gamma(s)$.

The second key idea is an analysis of the SYK model that hinges on a comparison to a \emph{classical} spin glass. For a Hamiltonian $\sum_{i_1<\cdots<i_4} J_{i_1\cdots i_4} x_{i_1}\cdots x_{i_4}$ defined on commuting variables $x_{i_k}$ with the same disorder as the original SYK model, we show that the classical Jacobian $(Dx(s))_{ji} = \partial_i x_j(s)$, evaluated at a uniformly random Ising spin configuration, can be used to bound the SYK Jacobian $D\gamma(s)$. The proof relies on the classical dynamics having the same form as the fermionic dynamics, up to signs introduced by reordering fermions; these signs are bounded away by the triangle inequality. This is a noticeably simpler approach than prior rigorous SYK analyses based on the Lovasz theta function~\cite{anschuetz2025strongly}, mean-field cluster expansions~\cite{zlokapa2026rigorous}, or the cavity method~\cite{gamarnik2026free,he2026spectral,schmidhuber2026polynomial}.

\subsection{Proof sketch}
We study the Lindbladian construction of Ding, Li, and Lin ~\cite{ding2025efficient}, in the line of efficiently implementable quantum Gibbs samplers with exact Kubo-Martin-Schwinger (KMS) detailed balance \cite{chen2023efficient} for preparing the Gibbs state $\rho = e^{-\beta H}/\Tr(e^{-\beta H})$. KMS detailed balance is the quantum reversibility condition used here, which ensures the correct stationary state but does not by itself bound the mixing time. We build jump operators from Majoranas $\gamma_i$ in order to later exploit their algebra and make the gap accessible through Clifford derivatives.  The construction uses real-time evolution with a Gaussian filter to prepare the jump operators
\begin{align}
    A_i = e^{-\beta H/4}\left[e^{-\beta^2\operatorname{ad}_H^2/8}(\gamma_i)\right]
    e^{\beta H/4} \quad \text{for} \quad \ad_H(X) = [H, X].
\end{align}
Together with the coherent correction in \eqref{eq:lindblad}, these jumps define the Gibbs sampler we use in this paper.
To prove fast mixing for these dynamics, we use a proof technique based on the pseudo-Lindbladian construction of~\cite{bergamaschi2026fast}, which relates the spectral gap of the Gibbs sampler to that of an auxiliary operator built from the same jumps. Specifically, they showed (up to sign and normalization conventions) that the gap of the pseudo-Lindbladian
\begin{align}
    \cK_f(X)
    =\frac{1}{4} \sum_{i=1}^N
      \left[A_i^\dagger(A_iX-XA_i^\dagger)-(A_iX-XA_i^\dagger)A_i\right]
\end{align}
implies the gap of the Gibbs sampler. Unfortunately, proving a gap of $\cK_f$ can be challenging for SYK, whose dense, disordered interactions are not directly controlled by the Lieb-Robinson estimates \cite{rouze2025efficient, h1dx-ps5p, vsmid2025polynomial, vsmid2025rapid} or cluster-expansion bounds \cite{bergamaschi2026fastall} used in prior mixing-tiume analyses. 

\paragraph{Classical Bochner argument.} To explain the spectral-gap argument for SYK, we first consider its classical analogue. Denote the Gibbs measure
\begin{align}
    d\mu (x) = Z^{-1} e^{-U(x)} dx
\end{align}
on $\R^n$. For real-valued functions, write 
\begin{align}
    \langle f, g\rangle_{\mu} = \int fg \, d\mu, \qquad \|f\|_{2,\mu}^2=\langle f,f\rangle\mu=\int f^2\,d\mu, \qquad \mu(f)=\int f\,d\mu.
\end{align}
We consider overdamped Langevin dynamics, whose generator can be written in terms of the derivatives $\partial_i, \partial_i^\dagger$. The adjoint $\partial_i^\dagger$ of the coordinate derivative $\partial_i$ is taken with respect to this weighted inner product and satisfies 
\begin{align}\label{eq:adjoint_relation}
    \langle \partial_i f, g \rangle_{\mu} = \langle f, \partial_i^{\dagger} g \rangle_{\mu}.
\end{align}
Integrating by parts gives 
\begin{align}
    \partial_i^\dagger = - \partial_i + (\partial_i U), \quad [\partial_i, \partial_j^\dagger] = \partial_i \partial_j U,
\end{align}
and these identities hold for smooth functions with suitable decay at infinity.
The positive generator $L$ (so that the semigroup is $e^{-tL}$) and its Dirichlet form are 
\begin{align}
    L = \sum_i \partial_i^\dagger \partial_i = - \Delta + \nabla U \cdot \nabla,\quad  \cE(f,g) := \langle f, Lg \rangle_{\mu} = \int \nabla f \cdot \nabla g \,du.
\end{align}
The key step is to expand $L^2$. Using the commutation of ordinary derivatives gives 
\begin{align}\label{eq:L_squared}
    L^2 = \sum_{i,j} \partial_i^\dagger [\partial_i, \partial_j^\dagger] \partial_j + \sum_i \partial_i^\dagger L \partial_i.
\end{align}
Taking the inner product of ~\eqref{eq:L_squared} with $f$ and using the adjoint relation in ~\eqref{eq:adjoint_relation} give the integrated Bochner identity 
\begin{align}
    \int (Lf)^2 d\mu = \sum_i \langle \partial_i f, L \partial_i f\rangle_{\mu} + \int (\nabla f)^T (\nabla^2U) \nabla f \, d\mu.
\end{align}
The first term is nonnegative since it is $\int \|\nabla^2 f\|_{\text{HS}}^2 d\mu$ where $\|\cdot\|_{\text{HS}}$ denotes the Hilbert-Schmidt norm of the matrix. Therefore, if $\nabla^2 U(x) \geq \kappa I$ for every $x$, with $\kappa >0$, then the second term is at least $\kappa \int |\nabla f|^2 d\mu$. Then, 
\begin{align}\label{eq:integrated_inequality}
    \int (Lf)^2 d\mu \geq \kappa \int |\nabla f|^2 d\mu = \kappa \langle f, Lf\rangle_{\mu}.
\end{align}
Since $L$ is self-adjoint, the left-hand side equals $\langle f, L^2 f\rangle_\mu$, so 
\begin{align}
    \langle f, (L^2 - \kappa L) f\rangle_{\mu} \geq 0
\end{align}
for every suitable function $f$. This implies that every nonzero spectral value $\lambda$ satisfies $\lambda^2 \geq \kappa \lambda$ so $\lambda \geq \kappa$, which gives us a spectral gap of at least $\kappa$, or equivalently $\text{Var}_\mu (f) \leq \frac{1}{\kappa} \int |\nabla f|^2 d\mu$. The spectral gap controls how quickly the dynamics approach equilibrium:
\begin{align}
    \|e^{-Lt} f - \mu(f)\|_{2, \mu} \leq e^{-\kappa t} \|f - \mu(f)\|_{2,\mu}
\end{align}
The Hessian condition $\nabla^2 U \geq \kappa I$ gives the spectral-gap above through the Bakry-\'Emery \cite{bakry1985diffusions}, and Bochner identities give analogous arguments for interacting particle systems \cite{boudou2006spectral, kondratiev2013spectral}. More fundamentally, the integrated inequality in ~\eqref{eq:integrated_inequality} suffices for every test function, even when the point-wise Hessian condition is unavailable. 
The Hessian in this calculation can also be viewed as a Jacobian: for $F=\nabla U$, we have $DF=\nabla^2U$. For example, a uniform bound $|DF(x)-I|\leq\delta<1$ implies $\nabla^2U(x)\geq(1-\delta)I$, and hence a spectral gap of at least $1-\delta$. Our fermionic argument follows the same idea, where we replace ordinary derivatives by Clifford derivatives, and the square expansion involves anticommutators of these derivatives with their adjoints in place commutators. The resulting block operator plays the role of the classical Hessian, which, as discussed below, can be expressed in terms of the Jacobian $D\gamma(s)$ of imaginary-time evolution, so controlling this Jacobian gives a spectral-gap bound.

\paragraph{Pseudo-Lindbladians for anticommuting systems.} For fermionic systems, our first observation is that it is easier to avoid $A_i$ and instead study
\begin{align}
    \gamma_i(\beta/4) = e^{-\beta H/4}\gamma_i e^{\beta H/4}
\end{align}
since they exactly satisfy the anticommutation relation $\{\gamma_i(\beta/4), \gamma_j(\beta/4)\} = 2\delta_{ij}$, while the original jump operators $A_i$ do not. We thus use an \emph{unfiltered} pseudo-Lindbladian $\cK$ that replaces $A_i$ with $\gamma_i(\beta/4)$ compared to the definition of $\cK_f$, inspired by~\cite{bakshi2026dobrushin,bergamaschi2026fast,bergamaschi2026fastall}. To show fast mixing, we will need to show that $\cK$ is gapped, and that this gap can be transferred to $\cK_f$ by checking that
\begin{align}\label{eq:ccomp}
    \sum_i \norm{\Big(\gamma_i(\beta/4) X - X \gamma_i(-\beta/4)\Big) - \Big(A_i X - X A_i^\dagger\Big)}_2^2
\end{align}
is small. To analyze $\cK$, we further decompose $X$ by fermionic parity: the parity operator $\cP$ satisfies $\cP(X) = X$ for even operators and $\cP(X) = -X$ for odd operators, we show a gap for
\begin{align}
    \widetilde{\mathcal K}=\sum_i \wt\partial_i^\dagger \wt\partial_i \quad \text{for} \quad \wt\partial_i(X)=\frac12\Big(\gamma_i(\beta/4)X-\mathcal P(X)\gamma_i(-\beta/4)\Big)
\end{align}
similarly to the classical Bochner argument described above.
We use the notation $\wt\partial_i$ to be suggestive: up to a similarity transform $V(X) = \rho^{1/4} X \rho^{1/4}$, $\wt\partial_i = V\partial_i V^{-1}$ is the Clifford derivative
\begin{align}
    \partial_iX&=\frac12\left(\gamma_iX-\cP(X)\gamma_i\right), \qquad \partial_i(XY) = (\partial_i X)Y + \cP(X) \partial_i Y
\end{align}
studied, for example, in fermionic Fokker-Planck dynamics~\cite[Section~2.3]{carlen2014fermionic}. To show that $\wt\cK$ is gapped, we decompose its square as
\begin{align}\label{eq:k2}
    \wt\cK^2 = \sum_{i,j} \wt\partial_i^\dagger \{\wt\partial_i, \wt\partial_j^\dagger\}\wt\partial_j + \sum_i \wt\partial_i^\dagger \wt\cK \wt \partial_i.
\end{align}
The second term is already nonnegative; the contribution from the first term gives
\begin{align} \label{eq:fermion_squared_bound}
    \langle X, \wt \cK^2 X \rangle \geq \sum_{i,j} \langle \wt \partial_i X, \{\wt\partial_i, \wt\partial_j^\dagger\}(\wt\partial_j X)\rangle.
\end{align}
Compared to the classical Bochner calculation above, the anticommutators $\{\wt\partial_i, \wt\partial_j^\dagger\}$ replace the classical commutator $[\partial_i, \partial_j^\dagger].$ If the block operator satisfies 
\begin{align}
    \left[\{\wt\partial_i, \wt\partial_j^\dagger\}\right]_{i,j} \geq \kappa I,
\end{align}
then ~\eqref{eq:fermion_squared_bound} implies 
\begin{align}
    \langle X, \wt \cK^2 X \rangle \geq \kappa \sum_i \|\wt\partial_i X\|_2^2 = \kappa \langle X, \wt\cK X\rangle.
\end{align}
This gives a spectral gap of at least $\kappa$, exactly as in the classical argument. To establish the required lower bound, we express the anticommutators in terms of the imaginary-time-evolved Majoranas: 
\begin{align}
    4\{\wt\partial_i, \wt\partial_j^\dagger\}(X) = \{\gamma_i(\beta/4), \gamma_j(-\beta/4)\} X + X\{\gamma_j(\beta/4), \gamma_i(-\beta/4)\}. 
\end{align}
Therefore, it suffices to bound below the block matrix with entries 
\begin{align}
    \wt{D\gamma}_{ij} = \frac{1}{2}\{\gamma_i(\beta/4), \gamma_j(-\beta/4)\}.
\end{align}

If $\wt{D\gamma} \geq \kappa I$, then ${\rm gap}(\wt\cK) \geq \kappa$ is gapped. This matrix is related to the Jacobian via a similarity transform:
\begin{align}
    \wt{D\gamma}
    =(I_N\otimes e^{\beta H/4})D\gamma(\beta/2)(I_N\otimes e^{-\beta H/4}) \quad\text{for}\quad (D\gamma)_{ji}(s)&=\partial_i\gamma_j(s)=\frac12\{\gamma_i,\gamma_j(s)\}.
\end{align}
In particular, showing $\norm{D\gamma(\beta/2)-I} \leq 1/4$ gives $\wt{D\gamma} \geq (3/4)I$. 
The matrix $\wt{D\gamma}$ is Hermitian and has the same eigenvalues as $D\gamma$, so the norm bound places its spectrum in $[3/4, 5/4].$ Again, we can refer to a classical system to understand the Jacobian $D\gamma$. We will formally bound the quantum Jacobian $D\gamma$ in terms of a classical Jacobian to prove that $D\gamma$ is close to the identity.

\paragraph{Bounding SYK with a classical system.} While the preceding discussion is generically useful for analyzing fermionic mixing times, it does not address the core difficulties of rigorous SYK analyses. At each site $i$, the SYK model has $\sum_{A \ni i} |J_A|$ that grows with system size (with high probability). This obstructs standard cluster expansions, such as those used in~\cite{bergamaschi2026fastall} to show fast mixing of all-to-all systems. The Gaussian disorder must be taken into account: this averaging ensures that $\sum_{A \ni i} J_A^2 = O(1)$ is bounded. Unfortunately, cluster expansions~\cite{zlokapa2026rigorous} and Lieb-Robinson bounds~\cite{chen2021growth,lucas2020non} are extremely technical, and it is unclear if the existing versions of these methods suffice to obtain strong enough bounds. Here, we introduce a far simpler approach that compares the SYK model to a \emph{classical} disordered mean-field model. This allows us to address the disorder with standard probabilistic techniques to control $D\gamma$.

We consider a classical system defined with the same disorder as the SYK model, but defined on commuting variables $x_1,\dots,x_N$:
\begin{align}
    G(x) = \sum_{A\in \binom{[N]}{4}} J_A \prod_{i\in A} x_i.
\end{align}
Analogous to the fermionic imaginary-time evolution $\gamma_i(s) = e^{-sH}\gamma_i e^{sH}$, we consider the classical imaginary-time evolution
\begin{align}\label{eq:dsx}
    \frac{d}{ds}x(s; \sigma) = 2\nabla G(x(s;\sigma)), \qquad x(0;\sigma) = \sigma.
\end{align}
Here the initial condition $\sigma$ is an Ising configuration in $\{\pm1\}^N$, while the evolving coordinates take values in $\R^N$. These dynamics are deterministic and serve as an auxiliary comparison. We note that they are distinct from the Langevin sampling dynamics used in the classical Bochner argument.
The classical Jacobian 
\begin{align}
    (Dx(s;\sigma))_{ji} = \frac{\partial x_j(s;\sigma)}{\partial \sigma_i}
\end{align}
evaluated at the initial condition $\sigma$ measures the sensitivity of the evolved coordinate $x_j(s)$ to an initial coordinate $x_i(0) = \sigma_i$. Our main technical step compares the moments of the quantum and classical Jacobians in their Taylor expansion. For $D\gamma(s) = \sum_{m \geq 0} s^m F_m$ and $Dx(s;\sigma) = \sum_{m\geq 0}s^m \wt F_m(\sigma)$, we show that
\begin{align}\label{eq:jcomp}
    2^{-N/2} \E \Tr[\lr{F_m^\dagger F_m}^p] \leq 2^{-N} \sum_{\sigma\in\{\pm1\}^N} \E \Tr[\lr{\wt F_m(\sigma)^\dagger \wt F_m(\sigma)}^p],
\end{align}
where $\E$ denotes expectation over the Gaussian disorder. Thus, bounds on the classical coefficients yield moment bounds on the fermionic coefficients.  
The factors of $2^{-N/2}$ and $2^{-N}$ normalize the fermionic trace and the average over Ising configurations, respectively. By symmetry of the Gaussian disorder, the disorder-averaged expression on the right is the same for every $\sigma$, so we may tae $\sigma = (1, \dots, 1)$. 

The comparison follows from the similar action of the quantum and classical generators on products. For each factor in a product, an interaction containing its site replaces that factor with the three remaining sites in the interaction. For example, the interaction on $\{1,2,3,4\}$ replaces $x_1$ with $2J_{1234}x_2x_3x_4$, and the corresponding quantum interaction replaces $\gamma_1$ with $2J_{1234} \gamma_2\gamma_3\gamma_4$. The genereators therefore produce corresponding terms with the same couplings, except that the quantum terms may carry additional minus signs because Majorana operators anticommute. To see this, observe that the generator in the quantum evolution $\gamma_j(s) = e^{-s\ad_H} \gamma_j$ acts as
\begin{align}\label{eq:hw}
    [H, \gamma_{w_1}\cdots\gamma_{w_L}] = -2\sum_{a=1}^L \sum_{A\ni w_a} (-1)^{|\{b\in A:b < w_a\}|} J_A \gamma_{w_1}\cdots \gamma_{w_{a-1}} \gamma_{A\setminus\{w_a\}} \gamma_{w_{a+1}} \cdots \gamma_{w_L}.
\end{align}
Similarly the classical solution to \eqref{eq:dsx} $x_j(s) = e^{s(2\nabla G \cdot \nabla)} x_j \Big|_{x=\sigma}$ has a generator that acts as
\begin{align}
    (2\nabla G \cdot \nabla) x_{w_1}\cdots x_{w_L} = 2\sum_{a=1}^L \sum_{A \ni w_a} J_A x_{w_1} \cdots x_{w_{a-1}} x_{A\setminus\{w_a\}} x_{w_{a+1}} \cdots x_{w_L},
\end{align}
which is identical to the action of $-[H,\cdot]$ in \eqref{eq:hw} up to the absent sign $(-1)^{|\{b\in A:b < w_a\}|}$ that originates from reordering fermions. 
Repeated application of the generators therefore produces corresponding terms in the Taylor expansions of the two evolutions. To compare their moments, we average over the Gaussian disorder, where terms in which a coupling appears an odd number of times vanish, while the remaining Gaussian moments are nonnegative. Bounding the fermionic signs by the triangle inequality, together with the fermionic trace and the average over Ising configurations, gives \eqref{eq:jcomp}. The comparison thus retains the disorder cancellations before bounding the contributions from fermionic signs.

It remains to bound the classical Taylor coefficients. Appendix \ref{app:classical} obtains these bounds using Gaussian concentration and estimates for derivatives of polynomials. Through \eqref{eq:jcomp}, they give high-probability bounds on $D\gamma(s)$. An analogous comparison involving $\partial_s D x(s;\sigma)$ controls the imaginary-time derivative of the fermionic Jacobian and supplies the estimate needed in \eqref{eq:ccomp}. Together, these bounds establish the gap of the unfiltered pseudo-Lindbladian and transfer it to the filtered sampler, which completes the fast-mixing argument.

\subsection{Discussion}
Our work is part of a long line of efforts to show fast mixing of quantum Gibbs samplers at high temperature. In the qubit setting, prior proofs based on Lieb-Robinson bounds~\cite{rouze2025efficient,rouze2026optimal,bakshi2026dobrushin,bergamaschi2026fast} and cluster expansions~\cite{bergamaschi2026fastall} showed fast mixing at large constant temperature. However, these results hold only at completely separable temperatures (hotter than the death of entanglement transition) and admit polynomial-time classical algorithms that prepare mixtures of product states, even in the presence of all-to-all interactions~\cite{putterman2026quantum}. In the fermionic setting, fast mixing has only been proven for sufficiently weak interactions (via Lieb-Robinson bounds)~\cite{h1dx-ps5p, vsmid2025polynomial, vsmid2025rapid}, whereas relevant applications such as Fermi-Hubbard and high-temperature superconductivity lie in the strongly coupled regime that is challenging for classical algorithms~\cite{arovas2022hubbard,qin2022hubbard}. Proofs of fast mixing in the strongly interacting (or, on qubits, far some separable) regime are thus especially interesting.

Our proof of fast mixing departs from both the prior Lieb-Robinson and cluster expansion approaches, and instead more closely resembles classical arguments based on Bakry-\'Emery~\cite{bakry1985diffusions,boudou2006spectral}. Combined with renormalization, these classical results also show log Sobolev inequalities for high-temperature spin systems, including the Sherrington-Kirkpatrick model~\cite{bauerschmidt2019very}. (See~\cite{carlen2017gradient,datta2020relating,carlen2020non} for a similar quantum argument.) Moreover, our proof strategy is fairly generic: the only SYK-specific step is finding a suitable classical Jacobian to bound the quantum Jacobian. We thus expect that our proof strategy straightforwardly extends to other relevant strongly interaction fermionic systems by choosing a different classical Jacobian. It would be interesting if rapid mixing $t_{\rm mix}(\epsilon) = O(\log N + \log 1/\epsilon)$ could also be shown, although our comparison step to dynamics with KMS detailed balance is currently too loose.

Although recent works give efficient classical algorithms for computing SYK thermal expectations of local observables~\cite{zlokapa2026syk,zlokapa2026rigorous,schmidhuber2026polynomial}, our result shows that the Gibbs state can be quantumly realized up to arbitrary inverse polynomial trace distance. Due to the large circuit complexity of the SYK Gibbs state at constant temperature~\cite{anschuetz2025strongly}, one may hope that questions about the dynamics caused by exciting the thermal state may lead to quantum advantage; if so, it will be necessary to prepare the Gibbs state faithfully, as guaranteed by our algorithm. Although any $\beta > 0$ suffices to satisfy the circuit complexity lower bounds~\cite{zlokapa2026rigorous} and thus may suffice for quantum advantage with SYK (if it exists), it remains interesting to show fast mixing at all $\beta$.

Conversely, it is also unclear if our proof of fast mixing gives rise to a new route for dequantization. The circuit complexity lower bounds of~\cite{anschuetz2025strongly} concern the size (not depth) of the quantum circuit; although Gaussian states fail to represent SYK Gibbs states, other efficient classical representations may exist. Nonetheless, it does not seem obvious how to dequantize our proof technique. (Notably, our result applies in a far more quantum regime than the prior high-temperature results for separable Gibbs states~\cite{putterman2026quantum}.) It would be interesting if a deeper relationship exists between the quantum Jacobian and the classical Jacobian used to bound it, with the aim of developing of new classical algorithms or understanding more fundamental physical similarities between SYK and classical models~\cite{facoetti2019classical}.

\section{Preliminaries}
Let $N \geq 4$ be even, and let $\cH$ be a Hilbert space of dimension $d = 2^{N/2}$.
We consider Hermitian Majorana operators $\gamma_1, \ldots , \gamma_N$ on $\cH$ satisfying $\{\gamma_i, \gamma_j\} = 2\delta_{ij} I$. For a subset $S = \{s_1 < \cdots < s_k\} \subseteq [N]$, we denote the ordered Majorana string by
\begin{align}
    \gamma_S = \gamma_{s_1}\cdots \gamma_{s_k}, \quad \gamma_{\emptyset} = I.
\end{align}
Note that due to anticommutation,
\begin{align}
    \gamma_S^{\dagger} = (-1)^{k(k-1)/2} \gamma_S.
\end{align}
Each $\gamma_S$ is unitary; when $|S|=4$ it is also Hermitian. 
Hence, we can write the SYK Hamiltonian as
\begin{align}
H=\sum_{\substack{A\subseteq[N],\\ |A|=4}}J_A\gamma_A,
\qquad
J_A\overset{\mathrm{iid}}{\sim}
\mathcal N\left(0,\frac{6J^2}{N^3}\right),
\end{align}
where $J > 0$ is a constant specifying the coupling scale.

To study the dynamics, we regard the space of operators on $\cH$, denoted by $\mathcal{B}(\cH)$, as a Hilbert space with the inner product 
\begin{align}
    \langle X,Y\rangle_{\text{HS}} = \Tr(X^\dagger Y).
\end{align}
All traces are unnormalized, so $\Tr(I) = d$.
By orthogonality $\Tr(\gamma_S^{\dagger} \gamma_T) = d \delta_{S,T}$, an orthonormal basis of $\cB(\cH)$ is
\begin{align}
    \{d^{-1/2}\gamma_S : S \subseteq [N]\}.
\end{align}
We write $\|X\|$ for the standard operator norm and 
\begin{align}
    \|X\|_p = (\Tr|X|^p)^{1/p}, \quad |X| = (X^\dagger X)^{1/2},
\end{align}
for the Schatten $p$-norm, $1 \leq p < \infty$.
For a linear map $\cM$ acting on operator space, its Hilbert-Schmidt operator norm is 
\begin{align}
    \|\cM\|_{2\to 2} = \underset{X \neq 0}{\sup} \frac{\|\cM(X)\|_2}{\|X\|_2}.
\end{align}
We also use the commutator map 
\begin{align}
    \text{ad}_H(X) = [H,X] = HX - XH.
\end{align}
Since $H$ is Hermitian, $\text{ad}_H$ is self-adjoint on the Hilbert-Schmidt space.

Fermion parity distinguishes states with even and odd fermion number. 
We use this distinction to describe which states an operator can connect. 
With our sign convention, the physical parity operator is 
\begin{align}
    P_f = i^{N/2} \gamma_1 \cdots \gamma_N.
\end{align}
It is Hermitian and squares to identity, so it has eigenvalues $+1$ and $-1$.
An operator is even if it preserves each sector and odd if it exchanges them. 
Equivalently, the parity map 
\begin{align}
    \cP(X) = P_f X P_f
\end{align}
satisfies $\cP(X) = X$ for even operators and $\cP(X) = -X$ for odd operators.
Each Majorana is odd, so $\cP(\gamma_S) = (-1)^{|S|}\gamma_S$. Note that any operator decomposes into orthogonal even and odd parts, i.e., $X = (X+\cP(X))/2 + (X-\cP(X))/2$.

\subsection{Fermionic Jacobian}
The main technical step of our proof uses the Jacobian associated with imaginary time evolution. Recall that the derivative on the Clifford algebra is
\begin{align}
    \partial_i(X) = \frac{1}{2}\lr{\gamma_i X - \cP(X)\gamma_i}, \qquad \partial_i(XY) = (\partial_i X)Y + \cP(X) \partial_i Y.
\end{align}
It removes $\gamma_i$ from an ordered Majorana string when $i$ is present and otherwise returns zero. Because $H$ is even, an imaginary time-evolved fermion $\gamma_j(s) = e^{-sH} \gamma_j e^{sH}$ is odd. Consequently, its Jacobian is
\begin{align}
    (D\gamma)_{ji}(s) = \partial_i \gamma_j(s) = \frac{1}{2}\{\gamma_i, \gamma_j(s)\}, \qquad D\gamma(0) = I_N \otimes I_d.
\end{align}
The gap argument will use that $D\gamma(\beta/2)$ is close to the identity. The intuition for this, if the Hamiltonian were defined on classical states, is that perturbing the coordinate at $i$ will not strongly affect the imaginary-time-evolved coordinate $j$. To show $D\gamma(\beta/2)$ is close to the identity, we will later apply this classical intuition and directly bound $D\gamma$ by the Jacobian of a classical Hamiltonian.

We will also use $\partial_s (D\gamma)_{ij}(s)$. Define the antisymmetric tensor $J_{ijkl}$ by $J_{ijkl} = J_{\{i,j,k,l\}}$ for $i < j < k < l$ and $J_{ijkl}=0$ when indices repeat. For operators $(A_i)_{i=1}^N$ and $(B_i)_{i=1}^N$, define
\begin{align}\label{eq:mdef}
    \sM(A,B)_{ij}(X) = \frac{1}{3} \sum_{k,l} J_{ijkl} (A_k A_l X - A_k X B_l + X B_k B_l).
\end{align}
We apply $\sM$ to operators $X_1, \dots, X_N$ as
\begin{align}
    \lr{\sM(A,B)(X_1,\dots,X_N)}_i = \sum_j \sM(A,B)_{ij}(X_j),
\end{align}
and use its induced norm for $\norm{(X_1,\dots,X_N)}_2^2 = \sum_i \norm{X_i}_2^2$.
This operator is defined such that
\begin{align}
    \partial_s (D\gamma)_{ij}(s) &= \frac{1}{3}\sum_{a,k,l} J_{iakl}\lr{(D\gamma)_{aj}(s)\gamma_k(s)\gamma_l(s) + \gamma_k(s)(D\gamma)_{aj}(s)\gamma_l(s) + \gamma_k(s)\gamma_l(s)(D\gamma)_{aj}(s)}\label{eq:dmg}\\
    &= \sum_a \sM(\gamma(s), -\gamma(s))_{ia}\lr{(D\gamma)_{aj}(s)}.\label{eq:mprop}
\end{align}

\subsection{Gibbs sampling}\label{subsec:gibbs_sampling}
For inverse temperature $\beta \geq 0$, the Gibbs state of the Hamiltonian $H$ is 
\begin{align}
\rho_{\beta} = \frac{e^{-\beta H}}{\Tr(e^{-\beta H})}.
\end{align}
We call $Z_\beta = \Tr(e^{-\beta H})$ the partition function of $H$.
This state describes thermal equilibrium of the Hamiltonian at inverse temperature $\beta$. 
We write $\rho = \rho_\beta$ when the temperature is fixed. We will also use notation
\begin{align}
    \wt{\gamma}_i=\rho^{1/4}\gamma_i \rho^{-1/4}, \quad \wt{\gamma}_i^\dagger=\rho^{-1/4}\gamma_i \rho^{1/4}.
\end{align}
We prepare $\rho$ through continuous-time evolution 
\begin{align}
    \frac{d\rho_\tau}{d\tau} = \cL(\rho_\tau), \quad \rho_\tau = e^{\tau \cL} (\rho_0),
\end{align}
where $\rho_0$ is the initial density matrix. 
Specifically, we use a KMS-detailed-balanced Gibbs sampler from the construction in \cite{ding2025efficient}, with the Majoranas $\gamma_1, \ldots, \gamma_N$ as coupling operators.
The corresponding Lindblad generator is
\begin{align}
\label{eq:lindblad}
\cL(X) = -i[B, X] + \sum_{i=1}^N \left(A_i X A_i^{\dagger} - \frac{1}{2}\{A_i^{\dagger} A_i, X\}\right).
\end{align}
The jump operators $A_i$ describe dissipative transitions, while the Hermitian operator $B$ supplies the coherent correction required by the construction. 
This Lindblad form makes $e^{\tau \cL}$ completely positive and trace preserving. We use the normalization in \eqref{eq:lindblad}, with a sum over the $N$ dissipative terms and no prefactor of $1/N$. (All spectral gaps and mixing times in this paper refer to this choice of normalization. Replacing $\cL$ by $\cL/N$ would divide the spectral gap by $N$ and multiply the mixing time by $N$.) 
Its jumps are 
\begin{align}
    A_i = e^{-\beta H/4} \underset{Y_i}{\underbrace{\left[\exp\left(-\frac{\beta^2}{8} \operatorname{ad}_H^2\right) (\gamma_i)\right]}}e^{\beta H/4}.
\end{align}
Here $\operatorname{ad}_H^2(X) = [H, [H,X]]$, and the exponential is defined by spectral calculus on Hilbert-Schmidt space.
The operator $Y_i$ is a Hermitian, Gaussian-filtered Majorana.
A matrix element that changes the energy by $w$ is multiplied in $A_i$ by 
\begin{align}
    e^{-\beta w/4 - \beta^2 w^2/8}.
\end{align}
The Gaussian factor suppresses transitions involving large energy changes, whether they increase or decrease the energy.
The factor $e^{-\beta w/4}$ gives greater weight to a transition that lowers the energy than to the reverse transition, which encourages relaxation toward lower energies.
For $\beta >0$, the jumps also admit the real-time integral representation
\begin{align}
    A_i=\int_{\mathbb{R}} f_\beta(s) e^{i s H} \gamma_i e^{-i s H} d s,\quad f_\beta(s)=\frac{\sqrt{2/\pi}}{\beta} \exp\left( \frac{1}{8} +\frac{is}{\beta} - \frac{2s^2}{\beta^2}\right).
\end{align}
At $\beta = 0$, we recover $A_i = Y_i = \gamma_i$.

The \cite{ding2025efficient} construction also contains the Hermitian correction term
\begin{align}
    B = \frac{i}{2} \tanh{\left(\frac{\beta}{4} \text{ad}_H\right)} \left(\sum_{i=1}^N A_i^\dagger A_i\right),
\end{align}
so that the complete generator satisfies KMS detailed balance, the quantum analog of reversibility at thermal equilibrium. 
Precisely, the adjoint generator $\cL^\dagger$ is self-adjoint in the Gibbs-weighted inner product defined below. 
This condition ensures that the Gibbs state is stationary under the dynamics.
At $\beta = 0$, both the filtering and the temperature bias disappear, so $A_i = \gamma_i, B=0$, and $\rho = I/d$. 

The relevant notion of reversibility is Kubo-Martin-Schwinger (KMS) detailed balance. 
It expresses symmetry of the evolution in the Gibbs-weighted inner product 
\begin{align}
    \langle X, Y\rangle_{\rho, \text{KMS}} = \Tr(\rho^{1/2}X^\dagger \rho^{1/2}Y).
\end{align}
The associated norm is 
\begin{align}
    \|O\|_{\rho, \text{KMS}}^2 = \langle O, O\rangle_{\rho, \text{KMS}}.
\end{align}
In our setting, the Hilbert-Schmidt adjoint $\cL^\dagger$, acting on observables, satisfies
\begin{align}
    \langle X, \cL^\dagger (Y)\rangle_{\rho, \text{KMS}} = \langle \cL^{\dagger}(X), Y\rangle_{\rho, \text{KMS}}.
\end{align}
Together with trace preservation, this implies $\cL(\rho) = 0$, and once the system reaches the Gibbs state, the evolution leaves it unchanged. 
The remaining question is then how quickly an arbitrary initial state approaches it. 

To study this rate, we express the generator in coordinates adapted to the Gibbs state. 
The resulting operator, called the discriminant, is 
\begin{align}\label{eq:gap_bound}
    \cD(X) = -\rho^{-1/4} \cL(\rho^{1/4}X\rho^{1/4}) \rho^{-1/4}.
\end{align}
The KMS detailed balance condition makes $\cD$ self-adjoint and nonnegative on the Hilbert-Schmidt space. 
It is similar to $-\cL$, so its eigenvalues are the negatives of those of $\cL$, and its zero eigenvector $\sqrt{\rho}$ represents equilibrium, $\cD(\sqrt{\rho}) = 0$. 
We say the discriminant $\cD$ has spectral gap at least $\lambda > 0$ if 
\begin{align}
    \langle X, \cD(X) \rangle_{\text{HS}} \geq \lambda \|X\|_2^2
\end{align}
for all $X \in \cB(\cH)$ satisfying $\langle \sqrt{\rho}, X\rangle_{\text{HS}} = 0$.
The expression on the left is the Dirichlet form of the generator. 
For $X = \rho^{1/4} O \rho^{1/4}$, it equals one half of the initial rate of decrease of the KMS variance under the observable evolution $e^{t\cL^\dagger}(O)$.
The inequality therefore bounds how quickly observables relax toward their Gibbs expectations. 
The two formulations are related by the transformation $X \to \rho^{1/4} X \rho^{1/4}$, which satisfies 
\begin{align}
    \|\rho^{1/4} X \rho^{1/4}\|_2^2 = \|X\|_{\rho, \text{KMS}}^2, \quad \langle \sqrt{\rho}, \rho^{1/4} X \rho^{1/4}\rangle_{\text{HS}} = \Tr(\rho X).
\end{align}
Thus subtracting the Gibbs expectation in the KMS formulation corresponds to removing the component along $\sqrt{\rho}$ in Hilbert-Schmidt space.

A positive gap bound guarantees that $\rho$ is the unique stationary state and controls convergence to it from any density matrix. 
The gap lower bound $\lambda$ controls the convergence rate, as a larger gap gives a faster convergence rate. 
We quantify this dependence through the trace-norm mixing time defined below.

We also note that the generator preserves operator parity because each $A_i$ is odd and $B$ is even.

We measure the convergence time by how long it takes every initial state to become close to $\rho$ in trace norm.
For $\epsilon \in (0,1)$, define the $\epsilon$-mixing time
\begin{align}
    t_{\text{mix}}(\epsilon) = \inf\left\{\tau \geq 0: \sup_{\rho_0} \|e^{\tau \cL}(\rho_0) - \rho\|_1 \leq \epsilon\right\},
\end{align}
where the supremum is over all initial density matrices $\rho_0$. 
A gap lower bound $\lambda$ yields 
\begin{align}
    \|e^{\tau \cL}(\rho_0) - \rho\|_1 \leq e^{-\lambda \tau}\sqrt{d} e^{\beta \|H\|},
\end{align}
and therefore 
\begin{align}
    t_{\text{mix}}(\epsilon) \leq \frac{1}{\lambda} \left(\frac{1}{2}\log d + \beta \|H\| + \log \frac{1}{\epsilon}\right).
\end{align}
Lastly, since $d = 2^{N/2}$, a constant gap and a bound $\|H\| = O(JN)$ give mixing time $O(N + \log(1/\epsilon))$ when $\beta J$ is bounded.

\section{Spectral gap of pseudo-Lindbladians}\label{sec:pseudo_lindblad}
To bound the spectral gap of the Gibbs sampler, our first step is to relate this gap to that of a positive auxiliary operator built directly from the jump operators $A_i$.
We will show that a lower bound on its gap gives a lower bound on the gap of the discriminant $\cD$, and therefore of the Lindblad generator $\cL$.

The Gaussian-filtered Majorana $Y_i = \exp\left(-\frac{\beta^2}{8} \operatorname{ad}_H^2\right) (\gamma_i)$ is Hermitian because the filter is real and even in the energy difference. 
Since $A_i = \rho^{1/4} Y_i \rho^{-1/4}$, we have the identity
\begin{align}
    A_i \sqrt{\rho} = \rho^{1/4} Y_i \rho^{1/4} = \sqrt{\rho} A_i^\dagger.
\end{align}
For an arbitrary operator $X\in \cB(\cH)$, consider the nonnegative quantity, which we call the quadratic form:
\begin{align}
    \frac{1}{4} \sum_{i=1}^N \|A_i X - XA_i^{\dagger}\|_2^2.
\end{align}
Each summand measures failure of one relation $A_i X = XA_i^{\dagger}$, and the sum vanishes precisely when all these relations hold; in particular, it vanishes for $X = \sqrt{\rho}$.
This sum can be written as 
\begin{align}
    \langle X, \cK_f(X)\rangle_{\text{HS}}
    &=\frac{1}{4} \sum_{i=1}^N \|A_i X - XA_i^{\dagger}\|_2^2 \quad\text{for}\quad \cK_f(X)
    =\frac{1}{4} \sum_{i=1}^N
      \left[A_i^\dagger(A_iX-XA_i^\dagger)-(A_iX-XA_i^\dagger)A_i\right].
\end{align}
We call $\cK_f$ the \emph{filtered pseudo-Lindbladian}.
By construction, $\cK_f$ is self-adjoint and nonnegative, with $\cK_f(\sqrt{\rho}) = 0$. It is related to the pseudo-Lindbladian $\cK^{\rm BC}_Q(X) = [Q,X]\rho^{1/2}Q^\dagger \rho^{-1/2} - \rho^{-1/2}Q^\dagger \rho^{1/2}[Q,X]$ introduced by~\cite{bergamaschi2026fast} by the similarity transform $V(X) = \rho^{1/4} X \rho^{1/4}$:
\begin{align}
    \cK_f = -\frac{1}{4} V \lr{\sum_i \cK^{\rm BC}_{Y_i}} V^{-1}.
\end{align}
We further introduce the \emph{unfiltered pseudo-Lindbladian} $\cK$, obtained by removing the Gaussian filter from each jump while retaining its Gibbs factors. (It is related to $\cK^{\rm BC}_{\gamma_i}$ instead of $\cK^{\rm BC}_{Y_i}$ under the same similarity transform as above.) Equivalently, $\cK$ is the positive self-adjoint operator specified by 
\begin{align}
    \langle X, \cK(X)\rangle_{\text{HS}} = \frac{1}{4}\sum_{i=1}^N \| \wt{\gamma}_i  X - X  \wt{\gamma}_i^\dagger\|_2^2.
\end{align}
This gives 
\begin{align}
    \cK(X) = \frac{1}{4}\sum_{i=1}^N
    \left[\wt{\gamma}_i^\dagger(\wt{\gamma}_iX-X\wt{\gamma}_i^\dagger)
    -(\wt{\gamma}_iX-X\wt{\gamma}_i^\dagger)\wt{\gamma}_i\right].
\end{align}
The same argument used for $\cK_f$ also gives $\cK(\sqrt{\rho}) = 0$, since the unfiltered jumps also satisfy $\wt{\gamma}_i \sqrt{\rho} = \sqrt{\rho} \wt{\gamma_i}^\dagger$ for every $i$.

At $\beta = 0$, the Gaussian filter and Gibbs conjugation both become the identity.
Therefore, $\cK_f = \cK$ and 
\begin{align}
    \langle X, \cK(X)\rangle_{\text{HS}} = \frac{1}{4} \sum_{i=1}^N \|[\gamma_i, X]\|_2^2,
\end{align}
which measures the failure of $X$ to commute with the Majoranas. 
At general positive inverse temperature, the same interpretation holds in the KMS product after the change of variables introduced in \Cref{subsec:gibbs_sampling}.
Specifically 
\begin{align}
    \langle \rho^{1/4}X\rho^{1/4}, \cK\left(\rho^{1/4}X\rho^{1/4}\right)\rangle_{\text{HS}} = \frac{1}{4} \sum_{i=1}^N \|[\gamma_i, X]\|_{\rho, \text{KMS}}^2,
\end{align}
and 
\begin{align}
    \langle \rho^{1/4}X\rho^{1/4}, \cK_f\left(\rho^{1/4}X\rho^{1/4}\right)\rangle_{\text{HS}} = \frac{1}{4} \sum_{i=1}^N \|[Y_i, X]\|_{\rho, \text{KMS}}^2.
\end{align}
Thus comparing the squared commutators with $Y_i$ and $\gamma_i$ in the KMS norm is equivalent to comparing the quadratic forms of $\cK_f$ and $\cK$.

The unfiltered $\cK$ is simpler to analyze because the jumps still square to the identity and anticommute for distinct Majoranas. These algebraic identities reduce the spectral gap argument for $\cK$ to a bound on $D\gamma(\beta/2)$ in \Cref{sec:majorana}, while \eqref{eq:dmg} supplies the estimates needed to compare $\cK$ and $\cK_f$.

\subsection{Filtered pseudo-Lindbladian}
We first prove the remaining comparison: a gap bound for $\cK_f$ implies a gap bound for the Gibbs sampler.
The main observation is that the discriminant $\cD$ is twice a positive weighted average of unitary conjugates of $\cK_f$.
We establish this representation in two steps, then use it to compare the gaps.
Throughout, $Y_i$ and $A_i$ are the Gaussian-filtered operators as defined in \Cref{subsec:gibbs_sampling}:
\begin{align}
Y_i = \exp\left(-\frac{\beta^2}{8} \operatorname{ad}_H^2\right) (\gamma_i), \quad A_i = e^{-\beta H/4} Y_i e^{\beta H/4}.
\end{align}

\begin{lemma}\label{lem:D_average_rep}
    For every $\beta > 0$, the discriminant $\cD$ can be written as
    \begin{align}\label{eq:averaging_identy}
        \cD = 2 \int_{\R} \frac{2}{\beta} \sech\left(\frac{2\pi s}{\beta}\right) e^{is \text{ad}_H} \cK_f e^{-is \text{ad}_H} ds,
    \end{align}
    where $e^{is \operatorname{ad}_H} (X) = e^{isH} X e^{-isH}$.
\end{lemma}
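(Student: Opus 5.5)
The plan is to check \eqref{eq:averaging_identy} term by term in the Bohr-frequency (energy eigenbasis) decomposition of $H$. The integral has a simple closed form there, because the kernel is essentially self-dual under the Fourier transform:
\begin{align}
\int_{\R}\frac{2}{\beta}\sech\lr{\frac{2\pi s}{\beta}}e^{is\nu}\,ds=\sech\lr{\frac{\beta\nu}{4}}.
\end{align}
Consider a superoperator of the form $X\mapsto P X Q$ with $P=\sum_{\omega}P(\omega)$ and $Q=\sum_{\omega'}Q(\omega')$, where $\ad_H P(\omega)=\omega P(\omega)$ and similarly for $Q$. Under $\cM\mapsto e^{is\ad_H}\cM e^{-is\ad_H}$, the piece $X\mapsto P(\omega)XQ(\omega')$ picks up the phase $e^{is(\omega+\omega')}$. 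So the right-hand side of \eqref{eq:averaging_identy} is $2\cK_f$ with each such piece multiplied by $\sech(\beta(\omega+\omega')/4)$. The same holds for products of three or more Bohr components, with the total frequency shift in place of $\omega+\omega'$.

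\textbf{Step 1: expand $\cK_f$.} Write $Y_i=\sum_\omega Y_i(\omega)$. The filter is real and even in the energy difference, so $Y_i(\omega)^\dagger=Y_i(-\omega)$. The Gibbs conjugation gives $A_i=\sum_\omega e^{-\beta\omega/4}Y_i(\omega)$ and $A_i^\dagger=\sum_\omega e^{\beta\omega/4}Y_i(\omega)$. Expanding, $4\cK_f(X)$ is
\begin{align}
A_i^\dagger A_iX+XA_i^\dagger A_i-A_i^\dagger XA_i^\dagger-A_iXA_i .
\end{align}
For the two sandwich terms, the component $Y_i(\omega_1)XY_i(\omega_2)$ carries the factor $e^{-\beta\nu/4}+e^{\beta\nu/4}$, where $\nu=\omega_1+\omega_2$. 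After averaging this becomes $2\cosh(\beta\nu/4)\sech(\beta\nu/4)=2$. Including the prefactor $2\cdot\frac14$ and summing over components, the sandwich terms contribute exactly $-\sum_iY_iXY_i$.

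\textbf{Step 2: match the sandwich part of $\cD$.} From \eqref{eq:gap_bound}, the jump part of $\cD$ is
\begin{align}
-\sum_i\rho^{-1/4}A_i\rho^{1/4}X\rho^{1/4}A_i^\dagger\rho^{-1/4}=-\sum_iY_iXY_i,
\end{align}
using $A_i=\rho^{1/4}Y_i\rho^{-1/4}$. So the sandwich parts agree.

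\textbf{Step 3: match the one-sided terms.} In $\cD$, these are
\begin{align}
\tfrac12\rho^{-1/4}\{A^\dagger A,\rho^{1/4}X\rho^{1/4}\}\rho^{-1/4}+i\rho^{-1/4}[B,\rho^{1/4}X\rho^{1/4}]\rho^{-1/4},
\end{align}
where $A^\dagger A:=\sum_iA_i^\dagger A_i$. Decompose $A^\dagger A=\sum_\nu(A^\dagger A)(\nu)$. Then the left-multiplication part of $\cD$ acts on the component $(A^\dagger A)(\nu)X$ with coefficient
\begin{align}
e^{\beta\nu/4}\Big(\tfrac12-\tfrac12\tanh(\beta\nu/4)\Big)=\tfrac12 e^{\beta\nu/4}\,\frac{e^{-\beta\nu/4}}{\cosh(\beta\nu/4)}=\tfrac12\sech(\beta\nu/4).
\end{align}
This uses $B=\frac i2\tanh(\frac\beta4\ad_H)(A^\dagger A)$, so that $iB(\nu)=-\frac12\tanh(\beta\nu/4)(A^\dagger A)(\nu)$. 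On the right-multiplication side, the factor $e^{-\beta\nu/4}$ enters instead, and the sign of the $B$ term flips. This gives $\frac12 e^{-\beta\nu/4}(1+\tanh(\beta\nu/4))=\frac12\sech(\beta\nu/4)$ as well. On the averaged side, $2\cdot\frac14 A^\dagger A X$ averaged gives $\frac12\sech(\beta\nu/4)(A^\dagger A)(\nu)X$, and likewise on the right. Hence the one-sided terms agree, which proves the lemma.

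\textbf{Main obstacle.} I expect the obstacle to be bookkeeping rather than conceptual. The points to watch are:
\begin{itemize}
\item getting the sign and direction of $e^{\pm\beta\nu/4}$ right for $\rho^{\mp1/4}(\cdot)\rho^{\pm1/4}$ with $\rho\propto e^{-\beta H}$;
\item confirming the $\tanh$ sign convention in $B$, so that the left and right coefficients collapse to the same $\sech$;
\item justifying the exchange of the $s$-integral with the finite Bohr sum, which is fine in finite dimension because $\sech$ is integrable.
\end{itemize}
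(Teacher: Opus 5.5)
Your proposal is correct and follows essentially the same route as the paper. Both arguments reduce to the scalar Fourier identity $\int_{\R}\frac{2}{\beta}\operatorname{sech}(2\pi s/\beta)e^{is\nu}\,ds=\operatorname{sech}(\beta\nu/4)$, show that the sandwich terms of $\cK_f$ average to $-\sum_i Y_iXY_i$ via $2\cosh(\beta\nu/4)\operatorname{sech}(\beta\nu/4)=2$, and match the one-sided terms using $e^{\pm x}(1\mp\tanh x)=\operatorname{sech}x$; the only cosmetic difference is that you use Bohr-frequency components where the paper writes matrix elements in the energy eigenbasis.
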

\begin{proof}
    We may rewrite \(A_i=e^{-\beta H/4}Y_i e^{\beta H/4}\) and
    \(A_i^\dagger=e^{\beta H/4}Y_i e^{-\beta H/4}\).
    Substituting the Lindblad generator into the definition of $\cD$, and cancelling the scalar normalization factors from $\rho^{\pm 1/4}$, we obtain 
    \begin{align}\label{eq:discriminant_plugin}
        \cD(X) = e^{\beta H/4} \left(\frac{1}{2} \sum_i A_i^\dagger A_i + iB\right) e^{-\beta H/4} X + X e^{-\beta H/4} \left(\frac{1}{2}\sum_i A_i^\dagger A_i - iB\right)e^{\beta H/4} - \sum_i Y_i X Y_i.
    \end{align}
    In particular, the jump term becomes $Y_iXY_i$ because 
    \begin{align}
        e^{\beta H/4} A_i e^{-\beta H/4} = e^{-\beta H/4} A_i^{\dagger} e^{\beta H/4} = Y_i.
    \end{align}
    The choice of $B = \frac{i}{2} \tanh{\left(\frac{\beta}{4} \text{ad}_H\right)} \left(\sum_{i=1}^N A_i^\dagger A_i\right)$ simplifies the two remaining terms.
    Indeed, 
    \begin{align}
         e^{\pm (\beta/4) \text{ad}_H} \left(\frac{1}{2} \sum_i A_i^{\dagger} A_i \pm iB\right) &= \frac{1}{2} e^{\pm (\beta/4) \text{ad}_H} \left(I \mp  \tanh\left(\frac{\beta}{4} \text{ad}_H\right)\right) \left(\sum_i A_i^\dagger A_i\right) \\
         &= \frac{1}{2} \sech\left(\frac{\beta}{4} \text{ad}_H\right) \left(\sum_i A_i^\dagger A_i\right).
    \end{align}
    The last equality follows by the scalar identity $e^{\pm x} (1\mp \tanh{x}) = \sech(x)$.
    Therefore, \eqref{eq:discriminant_plugin} becomes 
    \begin{align}\label{eq:discriminant_decomp}
        \cD(X) = \frac{1}{2} \sum_i \left[\sech\left(\frac{\beta}{4} \text{ad}_H\right) (A_i^\dagger A_i)\right]X + \frac{1}{2} \sum_i X\left[\sech\left(\frac{\beta}{4} \text{ad}_H\right) (A_i^\dagger A_i)\right]- \sum_i Y_i X Y_i.
    \end{align}
    We now verify \eqref{eq:averaging_identy} by averaging the corresponding expansion of $\cK_f$:
    \begin{align}\label{eq:pseudo_lindblad_filtered}
        \cK_f(X) = \frac{1}{4}\sum_i (A_i^\dagger A_i X + XA_i^\dagger A_i - A_i^\dagger X A_i^\dagger - A_i X A_i).
    \end{align}
    Since $e^{is \text{ad}_H}(X) = e^{isH}X e^{-isH}$, conjugating $\cK_f$ by $e^{is \text{ad}_H}$ conjugates each matrix multiplying $X$ by $e^{isH}$.
    The effect of averaging these conjugations is determined by the scalar Fourier identity 
    \begin{align}\label{eq:scalar_fourier}
        \int_{\R} \frac{2}{\beta} \sech\left(\frac{2\pi s}{\beta}\right) e^{is\omega} ds = \sech\left(\frac{\beta \omega }{4}\right).
    \end{align}
    To recover the first two terms in \eqref{eq:discriminant_decomp}, spectral calculus and \eqref{eq:scalar_fourier} give 
    \begin{align}
        \int_{\R} \frac{2}{\beta} \sech\left(\frac{2\pi s}{\beta}\right) e^{is H} A_i^\dagger A_i e^{-isH} ds = \sech\left(\frac{\beta}{4} \text{ad}_H\right)(A_i^\dagger A_i).
    \end{align}
    Thus, their averages produce the left and right multiplication terms that appear in \eqref{eq:discriminant_decomp}.
    To treat the remaining two cross terms, choose an orthonormal energy eigenbasis $H \ket{m} = E_m \ket{m}$. 
    The definitions of $A_i$ and $Y_i$ imply 
    \begin{align}
        (A_i)_{mp} = e^{-\beta(E_m - E_p)/4} (Y_i)_{mp}, \quad (A_i^\dagger)_{mp} = e^{\beta(E_m - E_p)/4} (Y_i)_{mp}.
    \end{align}
    For the map $X \to A_i X A_i + A_i^\dagger X A_i^\dagger$, the coefficient taking $\ket{p}\bra{q}$ to $\ket{m}\bra{n}$ is therefore 
    \begin{align}
        (A_i)_{mp} (A_i)_{qn} + (A_i^\dagger)_{mp} (A_i^\dagger)_{qn} = 2 \cosh\left(\frac{\beta \omega }{4}\right) (Y_i)_{mp} (Y_i)_{qn},
    \end{align}
    where $\omega = (E_m - E_p) + (E_q - E_n)$.
    Conjugating this map by $e^{is \text{ad}_H}$ multiplies the coefficient by 
    \begin{align}
        e^{is ((E_m - E_n) - (E_p - E_q))} = e^{is\omega}.
    \end{align}
    Combining with \eqref{eq:scalar_fourier}, the scalar factor after averaging is then 
    \begin{align}
        2\cosh\left(\frac{\beta \omega}{4}\right) \int_{\R} \frac{2}{\beta} \sech\left(\frac{2\pi s}{\beta}\right)e^{is\omega} ds = 2\cosh\left(\frac{\beta\omega}{4} \right)\sech\left(\frac{\beta \omega}{4}\right) = 2.
    \end{align}
    We then get that the average of $X \to A_i X A_i + A_i^\dagger X A_i^\dagger$ is $X \to 2Y_i X Y_i$.
    Combining these calculations with the signs and the factor of $1/4$ in \eqref{eq:pseudo_lindblad_filtered}, we obtain the desired expression in \eqref{eq:averaging_identy}.
\end{proof}

\begin{lemma}[From the filtered pseudo-Lindbladian to the Gibbs sampler]\label{lem:filtered_to_lind_gap}
    Assume that there exists $a > 0$ such that 
    \begin{align}
        \langle X, \cK_f(X)\rangle_{\text{HS}} \geq a \|X\|_2^2
    \end{align}
    for every $X \in \cB(\cH)$ satisfying $\langle \sqrt{\rho}, X \rangle_{\text{HS}} = 0.$
    Then 
    \begin{align}
        \text{gap}(\cL) \geq 2a.
    \end{align}
\end{lemma}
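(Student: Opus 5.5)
The plan is to use Lemma~\ref{lem:D_average_rep}, which writes $\cD$ as twice a probability average of unitary conjugates of $\cK_f$, and then to check that the hypothesis on $\cK_f$ is preserved under each of these conjugations.

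\textbf{Probability weights.} The weight $w(s)=\frac{2}{\beta}\sech(2\pi s/\beta)$ is nonnegative. Evaluating the Fourier identity \eqref{eq:scalar_fourier} at $\omega=0$ gives $\int_\R w(s)\,ds = \sech(0) = 1$. So $w$ is a probability density.

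\textbf{Unitarity of the conjugations.} Set $U_s := e^{is\ad_H}$. Each $U_s$ is unitary on Hilbert--Schmidt space, since $\|e^{isH}Xe^{-isH}\|_2=\|X\|_2$, and its adjoint is $U_s^\dagger = U_{-s} = e^{-is\ad_H}$. Substituting into Lemma~\ref{lem:D_average_rep},
\begin{align}
\langle X,\cD(X)\rangle_{\text{HS}} = 2\int_\R w(s)\,\langle U_{-s}X,\ \cK_f(U_{-s}X)\rangle_{\text{HS}}\,ds .
\end{align}

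\textbf{Preservation of the orthogonality constraint.} Fix $X$ with $\langle\sqrt\rho,X\rangle_{\text{HS}}=0$. Because $\sqrt\rho$ commutes with $H$, we have $U_s(\sqrt\rho)=\sqrt\rho$ for every $s$. Hence
\begin{align}
\langle \sqrt\rho, U_{-s}X\rangle_{\text{HS}} = \langle U_s\sqrt\rho, X\rangle_{\text{HS}} = \langle\sqrt\rho,X\rangle_{\text{HS}} = 0,
\end{align}
so each $U_{-s}X$ is an admissible test operator for the hypothesis.

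\textbf{Conclusion.} Applying the hypothesis pointwise in $s$ gives $\langle U_{-s}X,\cK_f(U_{-s}X)\rangle_{\text{HS}} \ge a\|U_{-s}X\|_2^2 = a\|X\|_2^2$. Integrating against the probability density $w$ yields $\langle X,\cD X\rangle_{\text{HS}}\ge 2a\|X\|_2^2$, i.e. $\cD$ has spectral gap at least $2a$ in the sense defined in Section~\ref{subsec:gibbs_sampling}.

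To pass to $\cL$: $\cD$ is self-adjoint and similar to $-\cL$, and $\sqrt\rho$ spans its kernel (the bound just proved shows the kernel is exactly one-dimensional). Therefore the nonzero spectrum of $-\cL$ lies in $[2a,\infty)$, which is $\mathrm{gap}(\cL)\ge 2a$.

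There is no real obstacle. The only points needing care are that the $\sech$ weight integrates to one and that the constraint $\langle\sqrt\rho,X\rangle_{\text{HS}}=0$ is invariant under the conjugations $U_s$. The case $\beta=0$ is immediate, since there $\cD=2\cK_f$ directly.
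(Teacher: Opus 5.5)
Your proposal is correct and follows the paper's proof essentially step for step. You insert the hypothesis into the averaging identity of Lemma~\ref{lem:D_average_rep} after checking that the conjugations $e^{-isH}(\cdot)e^{isH}$ preserve both the Hilbert--Schmidt norm and orthogonality to $\sqrt{\rho}$. You then use that the $\sech$ weight is a probability density, treat $\beta=0$ via $\cD=2\cK_f$, and pass from $\cD$ to $\cL$ by similarity, as the paper does (your check that the weight integrates to one, via \eqref{eq:scalar_fourier} at $\omega=0$, is a detail the paper only asserts).
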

\begin{proof}
    We first consider $\beta > 0$, and let $X$ be orthogonal to $\sqrt{\rho}$.
    To apply the assumed bound for $\cK_f$ inside the average in \Cref{lem:D_average_rep}, we must check that the operators $e^{-isH} X e^{isH}$ satisfy the same orthogonality condition.
    Since $H$ commutes with $\sqrt{\rho}$ and $e^{isH}$ is unitary, 
    \begin{align}
        \langle \sqrt{\rho}, e^{-isH}X e^{isH}\rangle_{\text{HS}} = \langle \sqrt{\rho}, X\rangle_{\text{HS}} = 0, \quad \|e^{-isH}Xe^{isH}\|_2 = \|X\|_2.
    \end{align}
    The assumption of $\langle X, \cK_f(X)\rangle_{\text{HS}} \geq a \|X\|_2^2$ then gives the same lower bound for every $s \in \R$: 
    \begin{align}
        \langle e^{-isH}Xe^{isH}, \cK_f(e^{-isH}Xe^{isH}) \rangle_{\text{HS}} \geq a \|X\|_2^2.
    \end{align}
    Inserting this inequality into the averaging identity from \Cref{lem:D_average_rep} gives us 
    \begin{align}
        \langle X, \cD(X) \rangle_{\text{HS}} &= 2 \int_{\R} \frac{2}{\beta} \sech\left(\frac{2\pi s}{\beta}\right) \langle e^{-isH} X e^{isH}, \cK_f(e^{-isH}X e^{isH})\rangle_{\text{HS}} ds\\
        &\geq 2a \|X\|_2^2 \int_{\R} \frac{2}{\beta} \sech\left(\frac{2\pi s}{\beta}\right) ds\\
        &= 2a \|X\|_2^2, \label{eq:bound_quad_term}
    \end{align}
    The weight here is nonnegative and integrates to one.
    Moreover, the conjugations in the integrand preserve the Hilbert-Schmidt norm and fix $\sqrt{\rho}$. 
    
    For $\beta = 0$, the Gaussian filter becomes the identity, so
    \begin{align}
        A_i = \gamma_i, \quad B = 0, \quad \rho = I/d,
    \end{align}
    which gives us 
    \begin{align}
        \cD(X) = \frac{1}{2} \sum_i [A_i,[A_i,X]] = 2\cK_f(X),
    \end{align}
    so the same lower bound in \eqref{eq:bound_quad_term} holds.

    Since $\cD$ is self-adjoint and annihilates $\sqrt{\rho}$, it admits an orthonormal eigenbasis containing $\sqrt{\rho}$.
    Every other eigenvector $X$, with eigenvalue $\lambda$, is orthogonal to $\sqrt{\rho}$. 
    Applying the bound in \eqref{eq:bound_quad_term} gives us 
    \begin{align}
        \lambda \|X\|_2^2 = \langle X, \cD(X)\rangle_{\text{HS}} \geq 2a \|X\|_2^2,
    \end{align}
    and therefore $\lambda \geq 2a$.
    Lastly, the definition of $\cD$ implies 
    \begin{align}
        \cL(\rho^{1/4} X \rho^{1/4}) = -\lambda \rho^{1/4} X \rho^{1/4}.
    \end{align}
    The transformation $X\to \rho^{1/4} X \rho^{1/4}$ is invertible, so $\cD$ and $-\cL$ have the same eigenvalues. 
    Therefore, 
    \begin{align}
        \operatorname{gap}(\cL) = \operatorname{gap}(\cD) \geq 2a.
    \end{align}
\end{proof}

\subsection{Unfiltered pseudo-Lindbladian}
The above \Cref{lem:filtered_to_lind_gap} reduces the desired gap bound for the Gibbs sampler to a lower bound for the filtered pseudo-Lindbladian $\cK_f$.
To use the estimates for the unfiltered operator $\cK$ developed in \Cref{sec:majorana}, we must show that filtering preserves a sufficiently large fraction of this bound. 
We do this by controlling how much the filter changes the individual jump differences. 

The filter acts on the jumps before these squared norms are formed.
To track its effect, it is useful to name the individual differences already appearing in the definitions of $\cK$ and $\cK_f$, which we define as 
\begin{align}\label{eq:c_defn}
    c_i(X) = \frac{1}{2} \left(\wt{\gamma}_i X - X \wt{\gamma}_i^\dagger \right), \quad c_i^f(X) = \frac{1}{2} (A_iX - XA_i^\dagger).
\end{align}
Each map above records the failure of one equilibrium relation, and $c_i(X)$ uses the unfiltered jump while $c_i^f(X)$ uses the filtered jump. 
In particular, both vanish at $X = \sqrt{\rho}$.
Their squared norms recover the quantities we wish to compare: 
\begin{align}
    \langle X, \cK(X) \rangle_{\text{HS}} = \sum_i \|c_i(X)\|_2^2, \quad \langle X, \cK_f(X)\rangle_{\text{HS}} = \sum_i \|c_i^f(X)\|_2^2.
\end{align}
Therefore it suffices to bound the difference between the tuples $\left(c_i^f(X)\right)_{i=1}^N$ and $(c_i(X))_{i=1}^N$.

For $\beta >0$, the filtered jump $A_i$ can be written as a weighted average of Hamiltonian evolutions of $\wt{\gamma_i} = e^{-\beta H/4} \gamma_i e^{\beta H/4}$:
\begin{align}
    A_i = \frac{\sqrt{2/\pi}}{\beta}\int_{\R} e^{-2s^2/\beta^2} e^{isH} \wt{\gamma}_i e^{-isH} ds.
\end{align}
Using the definitions of $c_i$ and $c_i^f$, the same identity gives 
\begin{align}
    c_i^f(X) = \frac{\sqrt{2/\pi}}{\beta} \int_{\R} e^{-2s^2/\beta^2} e^{isH} c_i\left(e^{-isH} X e^{isH}\right) e^{-isH} ds.
\end{align}
Thus $c_i^f (X)$ is an average of the conjugated expressions in this integral, whose value at $s=0$ is $c_i(X)$.
To show that filtering changes $c_i(X)$ only slightly, we therefore estimate how much these expressions change as $s$ moves away from zero. 

The jump differences are controlled by $\sM$ \eqref{eq:mdef} applied to tuples $\wt \gamma = \gamma(\beta/4)$ and $\wt \gamma^\dagger = \gamma(-\beta/4)$. The following identity explains how $\mathsf{M}$ controls the rate of change of the jump differences under Hamiltonian evolution relative to their norm. 

\begin{lemma}\label{lem:commutator_identity}
    For SYK Hamiltonian $H$ and the maps $c_i$ and $\mathsf{M}(\wt\gamma,\wt\gamma^\dagger)_{ij}$ as defined in \eqref{eq:c_defn} and \eqref{eq:mdef} respectively, the following identity holds 
    \begin{align}
        c_i([H,X]) - [H, c_i(X)] = \sum_j \mathsf{M}(\wt\gamma,\wt\gamma^\dagger)_{ij}(c_j(X))
    \end{align}
    for every $X \in \cB(\cH)$ and every $i$.
\end{lemma}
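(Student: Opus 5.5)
The plan is a direct algebraic computation in three steps. First reduce the left-hand side to commutators of $H$ with the jumps alone, then use the SYK structure to write those commutators as cubic Majorana polynomials, and finally recognise the result as $\sum_j \sM(\wt\gamma,\wt\gamma^\dagger)_{ij}(c_j(X))$ by telescoping and antisymmetry of $J_{ijkl}$.

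\emph{Step 1 (Leibniz reduction).} By the Leibniz rule $[H,\wt\gamma_iX]=[H,\wt\gamma_i]X+\wt\gamma_i[H,X]$ and $[H,X\wt\gamma_i^\dagger]=[H,X]\wt\gamma_i^\dagger+X[H,\wt\gamma_i^\dagger]$. In $c_i([H,X])-[H,c_i(X)]$ the terms containing $[H,X]$ cancel, leaving
\begin{align}
c_i([H,X])-[H,c_i(X)]=\tfrac12\left(-[H,\wt\gamma_i]X+X[H,\wt\gamma_i^\dagger]\right).
\end{align}

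\emph{Step 2 (commutator with a single Majorana).} Since $\gamma_A\gamma_i=-\gamma_i\gamma_A$ when $i\in A$ and $|A|=4$, we get $[H,\gamma_i]=2\sum_{A\ni i}J_A\gamma_A\gamma_i$. Reordering $\gamma_A\gamma_i$ so that $\gamma_i$ is removed and the remaining three Majoranas stand in increasing order produces a sign, which should be exactly the sign of the antisymmetric tensor. Summing over ordered triples $(j,k,l)$ overcounts each $A$ by $3!=6$, and each ordered product $\gamma_j\gamma_k\gamma_l$ of distinct Majoranas carries the permutation sign. Hence
\begin{align}
[H,\gamma_i]=c\sum_{j,k,l}J_{ijkl}\gamma_j\gamma_k\gamma_l
\end{align}
for an explicit constant $c$ (I expect $c=-2/3$ up to the convention for $J_{ijkl}$). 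Conjugating by $e^{\mp\beta H/4}$, which commutes with $H$, gives the same formula with $\gamma$ replaced by $\wt\gamma$, and likewise by $\wt\gamma^\dagger$. Pinning down $c$ and checking it is consistent with the $1/3$ in \eqref{eq:mdef} and with \eqref{eq:dmg} is the step I expect to be the fiddly one: it is the only place where ordering signs enter.

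\emph{Step 3 (telescoping and antisymmetry).} Expand $\sum_j\sM(\wt\gamma,\wt\gamma^\dagger)_{ij}(c_j(X))$ with $c_j(X)=\tfrac12(\wt\gamma_jX-X\wt\gamma_j^\dagger)$. The summand is
\begin{align}
\tfrac1{6}\sum_{j,k,l}J_{ijkl}\Big(&\wt\gamma_k\wt\gamma_l\wt\gamma_jX-\wt\gamma_k\wt\gamma_lX\wt\gamma_j^\dagger-\wt\gamma_k\wt\gamma_jX\wt\gamma_l^\dagger\\
&+\wt\gamma_kX\wt\gamma_j^\dagger\wt\gamma_l^\dagger+\wt\gamma_jX\wt\gamma_k^\dagger\wt\gamma_l^\dagger-X\wt\gamma_j^\dagger\wt\gamma_k^\dagger\wt\gamma_l^\dagger\Big).
\end{align}
The two terms of type $\wt\gamma\wt\gamma X\wt\gamma^\dagger$ cancel: relabel $j\leftrightarrow l$ in the third and use $J_{ilkj}=-J_{ijkl}$. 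The two terms of type $\wt\gamma X\wt\gamma^\dagger\wt\gamma^\dagger$ cancel in the same way with $j\leftrightarrow k$. In the first term, the cyclic relabeling $(j,k,l)\to(k,l,j)$ is an even permutation, so it becomes $\wt\gamma_j\wt\gamma_k\wt\gamma_l X$. What remains is
\begin{align}
\tfrac16\sum J_{ijkl}\left(\wt\gamma_j\wt\gamma_k\wt\gamma_lX-X\wt\gamma^\dagger_j\wt\gamma^\dagger_k\wt\gamma^\dagger_l\right).
\end{align}
By Step 2, this is a multiple of $-[H,\wt\gamma_i]X+X[H,\wt\gamma_i^\dagger]$, and comparing with Step 1 should give the identity once the constant from Step 2 is matched with the $1/3$ and the $1/2$.

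The main obstacle is therefore not conceptual but bookkeeping: making the sign and normalisation of $[H,\gamma_i]$ in terms of $J_{ijkl}$ agree with the definition of $\sM$. If the constants do not close, the convention in \eqref{eq:dmg}, where the same $\sM$ arises from $\partial_s\gamma_j(s)=-[H,\gamma_j(s)]$, serves as the consistency check for fixing them.
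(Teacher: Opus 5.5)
Your route is essentially the paper's: it also reduces the left side to $\frac12([\wt\gamma_i,H]X - X[\wt\gamma_i^\dagger,H])$ and uses $[\gamma_i,H]=\frac13\sum_{j,k,l}J_{ijkl}\gamma_j\gamma_k\gamma_l$ conjugated by the Gibbs factors. It then matches $\sM$ by antisymmetric relabeling, writing the cubic difference as $2\wt\gamma_j\wt\gamma_k c_l(X)+2\wt\gamma_j c_k(X)\wt\gamma_l^\dagger+2c_j(X)\wt\gamma_k^\dagger\wt\gamma_l^\dagger$, which is your Step 3 run in reverse. The one thing to fix is the constant you left open. The reordering sign is the opposite of the tensor sign: $J_A\gamma_A\gamma_i=-J_{ijkl}\gamma_j\gamma_k\gamma_l$ for $A=\{i,j,k,l\}$ with $j<k<l$, e.g.\ $\gamma_1\gamma_2\gamma_3\gamma_4\gamma_1=-\gamma_2\gamma_3\gamma_4$. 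With the overcount of $6$, this gives $c=2\cdot(-1)/6=-1/3$, not $-2/3$. That is exactly the value for which your residue $\frac16\sum J_{ijkl}(\cdots)$ equals Step 1's $-\frac{c}{2}\sum J_{ijkl}(\cdots)$; $c=-2/3$ would leave a factor-of-two mismatch.
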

\begin{proof}
    The Majorana relations and antisymmetry of the couplings give 
    \begin{align}
        [\gamma_i, H] = \frac{1}{3} \sum_{j,k,l} J_{ijkl}\gamma_j \gamma_k \gamma_l.
    \end{align}
    Since the Gibbs factors commute with $H$, conjugating this identity gives 
    \begin{align}
        [\wt{\gamma}_i, H] = \frac{1}{3} \sum_{j,k,l} J_{ijkl}\wt{\gamma}_j \wt{\gamma}_k \wt{\gamma}_l, \quad [\wt{\gamma}_i^\dagger, H] = \frac{1}{3} \sum_{j,k,l} J_{ijkl}\wt{\gamma}_j^\dagger \wt{\gamma}_k^\dagger \wt{\gamma}_l^\dagger.
    \end{align}
    Directly expanding $c_i$, we obtain 
    \begin{align}
         c_i([H,X]) - [H, c_i(X)] &= \frac{1}{2} \left([\wt{\gamma}_i, H]X - X[\wt{\gamma}_i^\dagger, H]\right)\\
         &= \frac{1}{6}\sum_{j,k,l} J_{ijkl} \left(\wt{\gamma}_j \wt{\gamma}_k \wt{\gamma_l}X - X\wt{\gamma}_j^\dagger \wt{\gamma_k}^\dagger \wt{\gamma}_l^\dagger\right).
    \end{align}
    Note that the difference of the two cubic products can be written in terms of the maps $c_j, c_k, c_l$:
    \begin{align}
        \wt{\gamma}_j \wt{\gamma}_k \wt{\gamma}_lX - X\wt{\gamma}_j^\dagger \wt{\gamma}_k^\dagger \wt{\gamma}_l^\dagger = 2\wt{\gamma}_j \wt{\gamma}_k c_l(X) + 2\wt{\gamma}_j c_k(X)\wt{\gamma}_l^\dagger + 2c_j(X)\wt{\gamma}_k^\dagger \wt{\gamma}_l^\dagger. 
    \end{align}
    To express every contribution in terms of $c_j(X)$, we rename the summation indices in the first two terms:
    \begin{align}
        \sum_{j,k,l} J_{ijkl}\wt{\gamma}_j \wt{\gamma}_k c_l(X) = \sum_{j,k,l} J_{iklj}\wt{\gamma}_k \wt{\gamma}_l c_j(X) = \sum_{j,k,l} J_{ijkl}\wt{\gamma}_k\wt{\gamma}_l c_j(X),
    \end{align}
    and similarly 
    \begin{align}
        \sum_{j,k,l} J_{ijkl} \wt{\gamma}_j c_k(X) \wt{\gamma}_l^\dagger = \sum_{j,k,l} J_{ikjl} \wt{\gamma}_k c_j(X) \wt{\gamma}_l^\dagger.
    \end{align}
    Here antisymmetry gives $J_{iklj} = J_{ijkl}$ for the cyclic permutation and $J_{ikjl} = -J_{ijkl}$ for the exchange of $j$ and $k$.
    The third term already contains $c_j(X)$.
    Combining these three terms then gives us 
    \begin{align}
        c_i([H,X]) - [H, c_i(X)] &= \frac{1}{3} \sum_{j,k,l} J_{ijkl} \left(\wt{\gamma}_k\wt{\gamma}_l c_j(X) - \wt{\gamma}_k c_j(X) \wt{\gamma}_l^\dagger + c_j(X)\wt{\gamma}_k^\dagger \wt{\gamma}_l^\dagger\right) \\
        &= \sum_j \mathsf{M}(\wt\gamma,\wt\gamma^\dagger)_{ij} (c_j(X)).
    \end{align}
\end{proof}
We now have an estimate that controls how Hamiltonian conjugation changes the unfiltered expression $c_i(X)$.
Since Gaussian averaging of these conjugated expressions gives $c_i^f(X)$, we can next bound the difference between $c_i^f(X)$ and $c_i(X)$.
The condition on $\beta J$ ensures that the resulting difference is small. 

\begin{lemma}\label{lem:filter_effect}
    If $0 \leq \beta J \leq \frac{1}{1000}$ and $\|\mathsf{M}(\wt\gamma,\wt\gamma^\dagger)\| \leq 150J$, then for every $X \in \cB(\cH)$,
    \begin{align}
        \left(\sum_i \left\|c_i^f(X) - c_i(X)\right\|_2^2\right)^{1/2} \leq \frac{1}{14} \left(\sum_i \|c_i(X)\|_2^2\right)^{1/2}.
    \end{align}
\end{lemma}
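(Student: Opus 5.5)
We view the tuple $c(X) = (c_i(X))_{i=1}^N$ as a vector in $\cB(\cH)^N$ and track how it changes under Hamiltonian conjugation. For $s\in\R$, define
\[
\Phi_i(s) = e^{isH} c_i\!\left(e^{-isH}Xe^{isH}\right)e^{-isH}.
\]
By the integral representation above,
\[
c_i^f(X) = \frac{\sqrt{2/\pi}}{\beta}\int_\R e^{-2s^2/\beta^2}\,\Phi_i(s)\,ds, \qquad \Phi_i(0)=c_i(X).
\]
Since the Gaussian weight is a probability density, Minkowski's inequality in $\ell^2(\cB(\cH)^N)$ gives
\[
\|c^f(X)-c(X)\|_2 \le \frac{\sqrt{2/\pi}}{\beta}\int_\R e^{-2s^2/\beta^2}\,\|\Phi(s)-\Phi(0)\|_2\,ds.
\]
It therefore suffices to bound $\|\Phi(s)-\Phi(0)\|_2$ by a multiple of $|s|$ times $\|c(X)\|_2$.

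Differentiating $\Phi_i$ in $s$ gives
\[
\frac{d}{ds}\Phi_i(s) = i\,e^{isH}\Big([H,c_i(Y_s)] - c_i([H,Y_s])\Big)e^{-isH}, \qquad Y_s = e^{-isH}Xe^{isH}.
\]
By \Cref{lem:commutator_identity}, this equals
\[
-i\,e^{isH}\Big(\textstyle\sum_j \sM(\wt\gamma,\wt\gamma^\dagger)_{ij}\big(c_j(Y_s)\big)\Big)e^{-isH}.
\]
Unitary conjugation preserves Hilbert–Schmidt norms, so $\|\Phi'(s)\|_2 \le \|\sM\|\,\|c(Y_s)\|_2 = \|\sM\|\,\|\Phi(s)\|_2$. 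Note that the conjugated tuple is exactly $\Phi(s)$, which closes the differential inequality. Grönwall's inequality then yields $\|\Phi(s)\|_2 \le e^{\|\sM\||s|}\|c(X)\|_2$. Integrating once more gives
\[
\|\Phi(s)-\Phi(0)\|_2 \le \big(e^{\|\sM\||s|}-1\big)\|c(X)\|_2 .
\]

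The last step is to evaluate the Gaussian average. With $m=\|\sM\|\le 150J$ and $s\sim\mathcal N(0,\beta^2/4)$, we have
\[
\E\big[e^{m|s|}-1\big] \le 2e^{m^2\beta^2/8}-2 + \big(\text{correction}\big).
\]
A cleaner route is to use $e^{x}-1\le xe^{x}$ and Cauchy–Schwarz, or simply to compute
\[
\E e^{m|s|} \le 2e^{m^2\beta^2/8}\Phi_{\rm cdf}(\cdot)\le 1 + m\beta\sqrt{2/\pi}/2 + O\big((m\beta)^2\big).
\]
Since $m\beta \le 150/1000 = 0.15$, the bound is roughly $0.15\cdot 0.4 + O(0.02)\approx 0.07 < 1/14\approx 0.0714$. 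This gives the claim.

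The main obstacle is this final numerical step. The constant $1/14$ is tight enough that the Gaussian integral $\E[e^{m|s|}]-1$ must be bounded carefully rather than crudely. The explicit formula is
\[
\E e^{m|s|} = 2e^{\sigma^2m^2/2}\,\Phi_{\rm cdf}(\sigma m), \qquad \sigma=\beta/2.
\]
With $x=\sigma m\le 0.075$, this equals $e^{x^2/2}\,(1+\mathrm{erf}(x/\sqrt2))\le 1+x\sqrt{2/\pi}+x^2\approx 1.0655$, which stays below $1+1/14$. If this margin proved too slim, we would fall back on the sharper bound
\[
\|\Phi(s)-\Phi(0)\|_2\le \int_0^{|s|} m\,\|\Phi\|_2 .
\]
We must also double-check that the derivative identity carries the right sign and conjugation convention, so that \Cref{lem:commutator_identity} applies verbatim with $Y_s$ in place of $X$.
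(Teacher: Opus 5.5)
Your proposal is correct and follows the paper's proof essentially step for step. It uses the same conjugated tuple $\Phi(s)$ (the paper's $v(s)$), the commutator identity plus Gr\"onwall to get $\|\Phi(s)-\Phi(0)\|_2\le (e^{\|\sM\||s|}-1)\|c(X)\|_2$, and the same Gaussian integral. The paper likewise evaluates that integral by completing the square as $e^{x^2/2}(1+\mathrm{erf}(x/\sqrt{2}))-1\approx 0.0655<1/14$ at $x=3/40$.

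Two small tidy-ups are needed. First, treat $\beta=0$ separately: there $c_i^f=c_i$ trivially, and the Gaussian representation is unavailable. Second, replace your heuristic intermediate estimates with a one-line justification of the final inequality, e.g.\ $\mathrm{erf}(x/\sqrt{2})\le x\sqrt{2/\pi}$ together with $(e^{x^2/2}-1)(1+x\sqrt{2/\pi})\le x^2$ for $0\le x\le 3/40$.
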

\begin{proof}
    At $\beta = 0$, we have $A_i = \wt{\gamma}_i = \gamma_i$, so $c_i^f = c_i$ and the desired bound directly follows. 
    Assume now $\beta > 0$.
    By the Gaussian Fourier identity 
    \begin{align}
        e^{-\beta^2 w^2 /8} = \frac{\sqrt{2/\pi}}{\beta} \int_{\R} e^{-2s^2/\beta^2} e^{isw} ds,
    \end{align}
    we can write 
    \begin{align}
        Y_i = \frac{\sqrt{2/\pi}}{\beta} \int_{\R} e^{-2s^2/\beta^2} e^{isH} \gamma_i e^{-isH} ds.
    \end{align}
    Since the Gibbs factors commute with Hamiltonian evolution, the same average maps $\wt{\gamma}_i$ to $A_i$.
    Therefore,
    \begin{align}
        c_i^f(X) = \frac{\sqrt{2/\pi}}{\beta} \int_{\R} e^{-2s^2/\beta^2} e^{isH} c_i\left(e^{-isH} X e^{isH}\right) e^{-isH}ds.
    \end{align}
     To estimate how much this average differs from the unfiltered $c_i(X)$, fix $X$ and write 
    \begin{align}
        v_i(s) = e^{isH} c_i(e^{-isH}X e^{isH}) e^{-isH}, \quad v(s) = (v_1(s), \ldots, v_N(s)).
    \end{align}
    Thus
    \begin{align}
        v(0) = (c_1(X), \ldots, c_N(X))
    \end{align}
    is the tuple of unfiltered jump differences, and its weighted average is the tuple of filtered jump differences.
    We first bound how much this tuple changes with $s$.
    Differentiating and using \Cref{lem:commutator_identity}, we obtain 
    \begin{align}
        v'_i(s) &= i e^{isH} \left([H, c_i(e^{-isH} X e^{isH})] - c_i([H, e^{-isH} X e^{isH}])\right) e^{-isH} \\
        &= -i \sum_j e^{isH} \mathsf{M}(\wt\gamma,\wt\gamma^\dagger)_{ij} \left(e^{-isH} v_j(s) e^{isH}\right) e^{-isH}.
    \end{align}
    Conjugating by $e^{isH}$ preserves the Hilbert-Schmidt norm of every component, so the assumed bound on the block operator $\mathsf{M}(\wt\gamma,\wt\gamma^\dagger)$ gives
    \begin{align}
        \|v'(s)\|_2 \leq \|\mathsf{M}(\wt\gamma,\wt\gamma^\dagger)\| \|v(s)\|_2 \leq 150J \|v(s)\|_2.
    \end{align}
    Applying Grönwall's inequality to $v(s)$ and $v(-s)$ for $s \geq 0$ gives 
    \begin{align}
        \|v(s)\|_2 \leq e^{150J|s|} \|v(0)\|_2.
    \end{align}
    Integrating the derivative bound along the interval between $0$ and $s$, we then obtain 
    \begin{align}\label{eq:bound_of_v_diff}
        \|v(s) - v(0)\|_2 \leq 150J \int_{0}^{|s|} e^{150Jr} dr \|v(0)\|_2 = \left(e^{150 J|s|} -1\right) \|v(0)\|_2.
    \end{align}
    We now use this bound to estimate the effect of filtering. 
    The Gaussian integral representation of $c_i^f(X)$ gives 
    \begin{align}
        \left(c_i^f(X)\right)_{i=1}^N = \frac{\sqrt{2/\pi}}{\beta} \int_{\R} e^{-2s^2/\beta^2} v(s) ds.
    \end{align}
    Because the Gaussian weight integrates to one, subtracting the unfiltered tuple gives 
    \begin{align}
        \left(c_i^f(X) - c_i(X)\right)_{i=1}^N = \frac{\sqrt{2/\pi}}{\beta} \int_{\R} e^{-2s^2/\beta^2} (v(s) - v(0)) ds.
    \end{align}
    By the triangle inequality and \eqref{eq:bound_of_v_diff}, we get 
    \begin{align}
        \left(\sum_i \|c_i^f(X) - c_i(X)\|_2^2\right)^{1/2} &\leq \frac{\sqrt{2/\pi}}{\beta} \int_{\R} e^{-2s^2/\beta^2} \|v(s) - v(0)\|_2 ds\\
        &\leq \left(\sum_i \|c_i(X)\|_2^2\right)^{1/2} \frac{\sqrt{2/\pi}}{\beta} \int_{\R} e^{-2s^2/\beta^2} \left(e^{150J|s|}-1\right) ds,
    \end{align}
    where we also used $\|v(0)\|_2^2 = \sum_i \|c_i(X)\|_2^2$ for the last inequality. 
    It remains to bound the scalar integral. 
    Consider the bound
    \begin{align}
        &\frac{\sqrt{2/\pi}}{\beta}\int_{\R} e^{-2s^2/\beta^2} \left(e^{150J|s|} - 1\right) ds = \sqrt{\frac{2}{\pi}} \int_{0}^{\infty} e^{-r^2/2} \left(e^{75\beta Jr} - 1\right) dr\\
        &\leq \sqrt{\frac{2}{\pi}} \int_{0}^{\infty} e^{-r^2/2} \left(e^{3r/40} - 1\right) dr = e^{9/3200} \left(1 + \sqrt{\frac{2}{\pi}} \int_{0}^{3/40} e^{-r^2/2} dr\right) - 1\\
        &\leq \frac{1 + (5/6) (3/40)}{1-9/3200} - 1 = \frac{209}{3191} < \frac{1}{14}.
    \end{align}
    Here, the first equality uses the symmetry of the integrand and the substitution $r = 2s/\beta$.
    The first inequality follows from the assumption $\beta J \leq 1/1000$, which gives $75\beta J \leq 3/40$.
    The next equality follows by completing the square,
    \begin{align}
        -\frac{r^2}{2} + \frac{3r}{40} = -\frac{1}{2} \left(r-\frac{3}{40}\right)^2 + \frac{9}{3200}.
    \end{align}
    Lastly, the second inequality uses $e^{-r^2/2} \leq 1$, $\sqrt{2/\pi} < 5/6$, and $e^x \leq (1-x)^{-1}$ for $0 \leq x < 1$, applied at $x = 9/3200$.
    This completes the proof.
\end{proof}

We can now directly relate $\cK_f$ and $\cK$.
\begin{corollary}[Comparison of pseudo-Lindbladians]\label{cor:pseudo_comparisons}
    If $0\leq\beta J \leq 1/1000$ and $\|\mathsf{M}(\wt\gamma,\wt\gamma^\dagger)\| \leq 150 J$, then for every $X \in \mathcal{B}(\mathcal{H})$, 
\begin{align}
    \langle X, \cK_f(X)\rangle_{\text{HS}} \geq \left(\frac{13}{14}\right)^2 \langle X, \cK(X) \rangle_{\text{HS}}.
\end{align}
Moreover, $\ker{\cK_f} = \ker{\cK}$.
\end{corollary}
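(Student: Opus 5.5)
The plan is to apply the reverse triangle inequality in the Hilbert-space direct sum $\cB(\cH)^{\oplus N}$ to the two tuples $c^f(X) = (c_i^f(X))_{i=1}^N$ and $c(X) = (c_i(X))_{i=1}^N$, with the norm $\|(X_1,\dots,X_N)\|_2^2 = \sum_i \|X_i\|_2^2$. Recall from the discussion after \eqref{eq:c_defn} that
\begin{align}
\langle X, \cK(X)\rangle_{\text{HS}} = \|c(X)\|_2^2, \qquad \langle X, \cK_f(X)\rangle_{\text{HS}} = \|c^f(X)\|_2^2 .
\end{align}
Under the stated hypotheses, \Cref{lem:filter_effect} gives $\|c^f(X) - c(X)\|_2 \le \tfrac{1}{14}\|c(X)\|_2$. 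The reverse triangle inequality then yields
\begin{align}
\|c^f(X)\|_2 \ge \|c(X)\|_2 - \|c^f(X)-c(X)\|_2 \ge \tfrac{13}{14}\|c(X)\|_2 .
\end{align}
Since both sides are nonnegative, squaring gives the quadratic-form inequality $\langle X,\cK_f(X)\rangle_{\text{HS}} \ge (13/14)^2\langle X,\cK(X)\rangle_{\text{HS}}$.

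For the kernel statement, I would use that both $\cK$ and $\cK_f$ are self-adjoint and nonnegative on Hilbert–Schmidt space. Hence $\cK(X)=0$ iff $\langle X,\cK(X)\rangle_{\text{HS}}=0$, iff $c(X)=0$, and likewise for $\cK_f$ with $c^f$. (For a nonnegative self-adjoint $T$, $\langle X,TX\rangle=\|T^{1/2}X\|^2$, so vanishing of the form forces $T^{1/2}X=0$ and hence $TX=0$.)

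The inclusion $\ker\cK \subseteq \ker\cK_f$ follows from \Cref{lem:filter_effect} directly. If $c(X)=0$, then $\|c^f(X)\|_2 = \|c^f(X)-c(X)\|_2 \le \tfrac1{14}\cdot 0 = 0$. Alternatively, it follows from the Gaussian-average representation of $c_i^f$ combined with the Grönwall bound $\|v(s)\|_2 \le e^{150J|s|}\|v(0)\|_2$ from that proof.

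The inclusion $\ker\cK_f \subseteq \ker\cK$ is immediate from the inequality just proved. If $\langle X,\cK_f X\rangle_{\text{HS}}=0$, then $(13/14)^2\langle X,\cK X\rangle_{\text{HS}} \le 0$, and nonnegativity of $\cK$ forces $\langle X,\cK X\rangle_{\text{HS}} = 0$.

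There is no real obstacle; all the analytic work sits in \Cref{lem:filter_effect}. The only point needing care is the step from vanishing quadratic form to membership in the kernel, which requires the positivity and self-adjointness of $\cK$ and $\cK_f$ noted when they were defined.
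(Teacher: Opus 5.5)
Your proposal is correct and follows essentially the same route as the paper: apply the reverse triangle inequality to the tuples $(c_i^f(X))_i$ and $(c_i(X))_i$ using \Cref{lem:filter_effect}, square, and identify both kernels as the common zero sets of the $c_i$ (respectively $c_i^f$), with each inclusion obtained exactly as the paper does. Your explicit justification that a vanishing quadratic form implies kernel membership (via $\cK=\sum_i c_i^\dagger c_i$ and $\cK_f=\sum_i (c_i^f)^\dagger c_i^f$ being nonnegative and self-adjoint) fills in a step the paper merely asserts.
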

\begin{proof}
    From \Cref{lem:filter_effect} and by the reverse triangle inequality 
    \begin{align}
        \left(\sum_i \left\|c_i^f(X)\right\|_2^2\right)^{1/2} &\geq \left(\sum_i \|c_i(X)\|_2^2\right)^{1/2} - \left(\sum_i \|c_i^f(X) - c_i(X)\|_2^2\right)^{1/2}\\ &\geq \frac{13}{14} \left(\sum_i \|c_i(X)\|_2^2\right)^{1/2}.
    \end{align}
    Squaring and using the definitions of $\cK_f$ and $\cK$, we obtain 
    \begin{align}\label{eq:K_Kf_relation}
        \langle X, \cK_f(X)\rangle_{\text{HS}} &= \sum_i \left\|c_i^f(X)\right\|_2^2 \geq \left(\frac{13}{14}\right)^2 \sum_i \|c_i(X)\|_2^2 = \left(\frac{13}{14}\right)^2 \langle X, \cK(X) \rangle_{\text{HS}}.
    \end{align}
    For both of the pseudo-Lindbladians, an operator belongs to one's kernel exactly when all the corresponding maps $c_i$ or $c_i^f$ send it to zero. 
    \eqref{eq:K_Kf_relation} then gives us $\ker \cK_f \subseteq \ker \cK$.
    Conversely, if $X \in \ker \cK$, then $c_i(X) = 0$ for every $i$. 
    \Cref{lem:filter_effect} then gives 
    \begin{align}
        \left(\sum_i \left\|c_i^f(X)\right\|_2^2\right)^{1/2} = 0,
    \end{align}
    so $X \in \ker \cK_f$. 
    This completes the proof.
\end{proof}

\section{Bounding the gap of $\cK$}\label{sec:majorana}
\Cref{sec:pseudo_lindblad} reduces the spectral-gap estimate for the Gibbs sampler to two bounds: a lower bound for the unfiltered pseudo-Lindbladian $\cK$, and an upper bound for the block operator $\mathsf{M}(\wt\gamma,\wt\gamma^\dagger)$ controlling the effect of Gaussian filtering. 
We now obtain the first bound from the Jacobian.

Recall that the unfiltered jumps are obtained by conjugating the Majoranas with the Gibbs factors:
\begin{align}
    \wt{\gamma_i} = e^{-\beta H/4} \gamma_i e^{\beta H/4}, \quad i= 1,\ldots, N.
\end{align}
This conjugation preserves their mutual anticommutation relations since 
\begin{align}
    \{\wt{\gamma}_i, \wt{\gamma}_j\} = e^{-\beta H/4} \{\gamma_i, \gamma_j\} e^{\beta H/4} = 2\delta_{ij}I.
\end{align}
However, this conjugation generally does not preserve Hermiticity, since 
\begin{align}
    \wt{\gamma}_i^\dagger = e^{\beta H/4} \gamma_i e^{-\beta H/4}.
\end{align}
The mixed anticommutators are the Jacobian under a similarity transformation:
\begin{align}\label{eq:sim}
    \frac{1}{2}\{\wt \gamma_i, \wt\gamma_j^\dagger\} = e^{\beta H/4}(D\gamma)_{ij}(\beta/2) e^{-\beta H/4}.
\end{align}
The block matrix on the left is Hermitian and has the same spectrum as $D\gamma(\beta/2)$ and thus a bound on $\norm{D\gamma(\beta/2)-I}$ controls all the mixed anticommutators together.

To make use of these relations, we introduce auxiliary maps obtained by inserting operator parity into the right multiplication term of $c_i$.
At $\beta = 0$, these maps have a particularly simple action, which simply removes a Majorana from a monomial when it is present and give zero when it is absent. 
This will help us identify their common kernel and derive a gap bound.

\begin{lemma}\label{lem:aux_majorana_map}
    Consider the maps
    \begin{align}
        b_i(X) = \frac{1}{2} \left(\wt{\gamma}_i X - \cP(X) \wt{\gamma}_i^\dagger\right).
    \end{align}
    They satisfy 
    \begin{align}
        \{b_i, b_j\} = 0, \quad \{b_i, b_j^\dagger\}(X) = \frac{1}{4}\lr{\{\wt\gamma_i,\wt\gamma_j^\dagger\} X + X\{\wt\gamma_j, \wt\gamma_i^\dagger\}}.
    \end{align}
    Moreover, 
    \begin{align}
        \bigcap_{i=1}^N \ker b_i = \operatorname{span}\{\sqrt{\rho}\}.
    \end{align}
\end{lemma}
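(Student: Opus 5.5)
The cleanest route is to conjugate everything back to $\beta=0$ via the similarity $V(X)=\rho^{1/4}X\rho^{1/4}$. Since $H$ is even, $\rho^{\pm1/4}$ are even, so $\cP$ commutes with left and right multiplication by them. A direct check then gives
\begin{align}
b_i(\rho^{1/4}X\rho^{1/4}) = \tfrac12\left(\rho^{1/4}\gamma_i X\rho^{1/4} - \rho^{1/4}\cP(X)\gamma_i\rho^{1/4}\right) = V\partial_i(X),
\end{align}
where $\partial_i$ is the Clifford derivative. So $b_i=V\partial_i V^{-1}$. Note that $V$ is not unitary, so the adjoint relations are best computed directly rather than transported through $V$.

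\textbf{Anticommutators $\{b_i,b_j\}$.} Work with the two-sided form. Let $L_a(X)=aX$ and $R_a(X)=Xa$. Because $\wt\gamma_i$ is odd, $\cP L_{\wt\gamma_i}=-L_{\wt\gamma_i}\cP$ and $\cP R_{\wt\gamma_i^\dagger}=-R_{\wt\gamma_i^\dagger}\cP$. Also $L$ and $R$ maps commute with each other. Expanding $b_ib_j+b_jb_i$ gives three kinds of terms.
\begin{itemize}
\item The $LL$ terms give $\tfrac14\{\wt\gamma_i,\wt\gamma_j\}X$.
\item The $RR$ terms, after $\cP^2=1$, give $\tfrac14X\{\wt\gamma_j^\dagger,\wt\gamma_i^\dagger\}$.
\item The cross terms come with opposite signs from the parity anticommutation, so they cancel.
\end{itemize}
The surviving pieces are $\tfrac14(2\delta_{ij}X - 2\delta_{ij}X)=0$, using $\{\wt\gamma_i,\wt\gamma_j\}=2\delta_{ij}$ and its adjoint.

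\textbf{Adjoints and $\{b_i,b_j^\dagger\}$.} With respect to the Hilbert–Schmidt product, $\cP$ is self-adjoint and unitary, $L_a^\dagger=L_{a^\dagger}$ and $R_a^\dagger=R_{a^\dagger}$. Hence
\begin{align}
b_j^\dagger(X)=\tfrac12\left(\wt\gamma_j^\dagger X-\cP(X\wt\gamma_j)\right)=\tfrac12\left(\wt\gamma_j^\dagger X+\cP(X)\wt\gamma_j\right).
\end{align}
Expanding $\{b_i,b_j^\dagger\}$ the same way, the cross terms again cancel by the parity sign. The diagonal pieces give $\tfrac14\{\wt\gamma_i,\wt\gamma_j^\dagger\}X$ and $\tfrac14X\{\wt\gamma_j,\wt\gamma_i^\dagger\}$, where the sign of the second term is fixed by $\cP^2=1$. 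The main care here is sign bookkeeping, and I would verify it on $X=I$ and $X=\gamma_k$ at $\beta=0$ as a sanity check.

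\textbf{Common kernel.} Since $b_i=V\partial_iV^{-1}$ and $V$ is invertible, $\bigcap_i\ker b_i=V\bigl(\bigcap_i\ker\partial_i\bigr)$. It therefore suffices to show $\bigcap_i\ker\partial_i=\operatorname{span}\{I\}$, because $V(I)=\sqrt\rho$.

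Expand $X=\sum_S x_S\gamma_S$. For $i\in S=\{s_1<\dots<s_k\}$, the Clifford derivative gives
\begin{align}
\partial_i\gamma_S=\pm\gamma_{S\setminus\{i\}},
\end{align}
which follows by moving $\gamma_i$ to the front and using $\cP(\gamma_S)=(-1)^{|S|}\gamma_S$. For $i\notin S$, one gets $\gamma_i\gamma_S-(-1)^{|S|}\gamma_S\gamma_i=0$.

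Distinct $S\ni i$ map to distinct basis strings $S\setminus\{i\}$, so $\partial_iX=0$ forces $x_S=0$ for all $S\ni i$. Intersecting over all $i$ leaves only $S=\emptyset$, so the common kernel is $\operatorname{span}\{I\}$.
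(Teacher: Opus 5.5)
Your proposal is correct and follows essentially the same route as the paper. For both anticommutators you use the same left/right-multiplication bookkeeping with the parity sign, including the same formula for $b_j^\dagger$, and for the kernel you conjugate by the Gibbs factors to reduce to the Clifford derivative on Majorana monomials, whose common kernel is $\operatorname{span}\{I\}$, mapped to $\operatorname{span}\{\sqrt{\rho}\}$. The only blemish is cosmetic: the $RR$ terms in $\{b_i,b_j\}$ contribute $-\tfrac14 X\{\wt\gamma_i^\dagger,\wt\gamma_j^\dagger\}$ (with a minus sign), as your final line correctly uses but your bullet omits.
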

\begin{proof}
    Since $H$ is even, both $\wt{\gamma}_i$ and $\wt{\gamma}_i^\dagger$ are odd.
    Consequently, $\cP$ anticommutes with their left and right multiplication maps, and
    \begin{align}
        b_i^\dagger(X) &= \frac{1}{2}\lr{\wt\gamma_i^\dagger X + \cP(X) \wt\gamma_i}\\
        \{b_i, b_j\}(X) &= \frac{1}{4}\lr{\{\wt\gamma_i, \wt\gamma_j\} X - X \{\wt\gamma_i^\dagger, \wt\gamma_j^\dagger\}} = 0\\
        \{b_i, b_j^\dagger\}(X) &= \frac{1}{4}\lr{\{\wt\gamma_i, \wt\gamma_j^\dagger\}X + X\{\wt\gamma_j, \wt\gamma_i^\dagger\}}.
    \end{align}
    To identify the common kernel, observe that the Gibbs factors commute with physical parity, so 
    \begin{align}
        b_i\left(e^{-\beta H/4}X e^{-\beta H/4}\right) = \frac{1}{2} e^{-\beta H/4} \left(\gamma_i X - \cP(X)\gamma_i\right) e^{-\beta H/4}.
    \end{align}
    For a Majorana monomial $\gamma_S$, 
    \begin{align}
        \frac{1}{2} (\gamma_i \gamma_S - \cP(\gamma_S)\gamma_i) = 
        \begin{cases}
            \gamma_i \gamma_S, &i\in S\\
            0, &i\notin S.
        \end{cases}
    \end{align}
    For fixed $i$, the nonzero outputs have distinct supports $S \backslash \{i\}$, so they cannot cancel in a linear combination.
    Therefore, vanishing for every $i$ then forces every coefficient of a nonempty Majorana monomial to be zero.
    The common kernel of the maps inside the Gibbs factors is therefore $\operatorname{span}\{I\}$.
    The transformation $X \to e^{-\beta H/4} X e^{-\beta H/4}$ is invertible and sends $I$ to $e^{-\beta H/2}.$
    Therefore, 
    \begin{align}
        \bigcap_{i}\ker b_i = \operatorname{span}\{e^{-\beta H/2}\} = \operatorname{span}\{\sqrt{\rho}\}.
    \end{align}
\end{proof}
We can now turn these quantities into the required gap bound.
The Jacobian bound ensures that the mixed anticommutators remain uniformly positive. 
Together with $\{b_i, b_j\} = 0$, this excludes small positive eigenvalues of the associated sum of squared norms. 
\begin{lemma}[Jacobian bound implies unfiltered gap]\label{lem:curvature_implies_gap}
If $\|D\gamma(\beta/2)- I\| \leq \frac{1}{4},$ then for every $X \in \cB(\cH)$ satisfying $\langle\sqrt{\rho}, X \rangle_{\text{HS}} = 0$, 
\begin{align}
    \langle X, \cK(X) \rangle_{\text{HS}} \geq \frac{3}{4} \|X\|_2^2.
\end{align}
In particular, $\ker \cK = \operatorname{span}\{\sqrt{\rho}\}$ and $\operatorname{gap}(\cK) \geq \frac{3}{4}$. 
\end{lemma}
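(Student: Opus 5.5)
The plan is to run the fermionic Bochner argument from the proof sketch, applied separately to the even and odd sectors of operator space.

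\textbf{Step 1: reduce to parity sectors.} Since $\wt\gamma_i,\wt\gamma_i^\dagger$ are odd, each $c_i$ maps even operators to odd ones and odd operators to even ones, and $\cK=\sum_i c_i^\dagger c_i$ preserves parity. Writing $X=X_e+X_o$, the cross terms $\langle c_i X_e, c_i X_o\rangle_{\text{HS}}$ vanish, because the two arguments have opposite parity. Hence
\begin{align}
\langle X,\cK X\rangle_{\text{HS}}=\sum_i\|c_i X_e\|_2^2+\sum_i\|c_iX_o\|_2^2 .
\end{align}
Moreover $\sqrt\rho$ is even, so the condition $\langle \sqrt\rho, X\rangle_{\text{HS}}=0$ is equivalent to $\langle\sqrt\rho,X_e\rangle_{\text{HS}}=0$.

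On even operators $c_i=b_i$, with $b_i$ as in \Cref{lem:aux_majorana_map}. On odd operators $c_i=b_i'$, where
\begin{align}
b_i'(X)=\tfrac12\big(\wt\gamma_iX+\cP(X)\wt\gamma_i^\dagger\big).
\end{align}
A direct expansion, in the same way as in \Cref{lem:aux_majorana_map}, should give $\{b_i',b_j'\}=0$ and the same formula for $\{b_i',b_j'^\dagger\}$. The cross terms cancel after symmetrizing in $i,j$, leaving $\tfrac14(2\delta_{ij}X-2\delta_{ij}X)=0$ for the first identity. It therefore suffices to prove a general statement: for any family $(\beta_i)$ with $\{\beta_i,\beta_j\}=0$ and $\{\beta_i,\beta_j^\dagger\}(Y)=\tfrac14(\{\wt\gamma_i,\wt\gamma_j^\dagger\}Y+Y\{\wt\gamma_j,\wt\gamma_i^\dagger\})$, the operator $\wt\cK=\sum_i\beta_i^\dagger\beta_i$ has every nonzero eigenvalue at least $3/4$. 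We then identify the kernels in each sector.

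\textbf{Step 2: Bochner identity.} Expanding the square gives
\begin{align}
\wt\cK^2=\sum_{i,j}\beta_i^\dagger\beta_i\beta_j^\dagger\beta_j=\sum_{i,j}\beta_i^\dagger\{\beta_i,\beta_j^\dagger\}\beta_j-\sum_{i,j}\beta_i^\dagger\beta_j^\dagger\beta_i\beta_j .
\end{align}
Using $\beta_i\beta_j=-\beta_j\beta_i$, the last sum equals $-\sum_i\beta_i^\dagger\wt\cK\beta_i\le 0$, so it contributes nonnegatively to $\wt\cK^2$. Thus
\begin{align}
\langle X,\wt\cK^2X\rangle\ge\sum_{i,j}\langle v_i,\{\beta_i,\beta_j^\dagger\}v_j\rangle, \qquad v_i=\beta_iX .
\end{align}

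\textbf{Step 3: positivity of the block operator (main obstacle).} Set $\wt D_{ij}=\tfrac12\{\wt\gamma_i,\wt\gamma_j^\dagger\}$. This is a Hermitian block matrix, and by \eqref{eq:sim} it is similar to $D\gamma(\beta/2)$. Hence its eigenvalues lie in the disc of radius $\|D\gamma(\beta/2)-I\|\le1/4$ about $1$, since the spectral radius is at most the norm. Being real, they lie in $[3/4,5/4]$, so $\wt D\ge\tfrac34 I$.

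The right-hand side of Step 2 splits into two pieces:
\begin{align}
\tfrac12\sum_{ij}\operatorname{Tr}(v_i^\dagger\wt D_{ij}v_j)\quad\text{and}\quad\tfrac12\sum_{ij}\operatorname{Tr}(v_i^\dagger v_j\wt D_{ji}).
\end{align}
The first is $\tfrac12\langle v,(\wt D\otimes\text{right id})v\rangle\ge\tfrac38\|v\|^2$. The second can be rewritten as $\tfrac12\sum_{ij}\operatorname{Tr}(w_j^\dagger\wt D_{ji}w_i)$ with $w_i=v_i^\dagger$, which is also at least $\tfrac38\|v\|^2$. The careful index and adjoint bookkeeping in this rewriting is where I expect the main difficulty. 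Altogether $\langle X,\wt\cK^2X\rangle\ge\tfrac34\langle X,\wt\cK X\rangle$, so every nonzero eigenvalue $\lambda$ of the self-adjoint $\wt\cK$ satisfies $\lambda\ge3/4$.

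\textbf{Step 4: kernels.} In the even sector, \Cref{lem:aux_majorana_map} gives common kernel $\operatorname{span}\{\sqrt\rho\}$. In the odd sector, conjugating by the Gibbs factors as in that lemma, $b_i'$ becomes $\tfrac12(\gamma_i\gamma_S+(-1)^{|S|}\gamma_S\gamma_i)$, which is nonzero exactly when $i\notin S$. So the common kernel is spanned by (the image of) $\gamma_{[N]}$, which is even because $N$ is even; hence there is no odd kernel. Combining the two sectors gives $\langle X,\cK X\rangle\ge\tfrac34\|X\|_2^2$ whenever $X\perp\sqrt\rho$, and $\ker\cK=\operatorname{span}\{\sqrt\rho\}$.
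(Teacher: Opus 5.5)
Your proposal is correct and follows the paper's proof essentially step for step. Both use the same Bochner expansion with $\{b_i,b_j\}=0$, and both get positivity of the Hermitian block matrix $[\tfrac12\{\wt\gamma_i,\wt\gamma_j^\dagger\}]$ from its similarity to $D\gamma(\beta/2)$. The bookkeeping over columns of $v_i$ and of $v_i^\dagger$ matches, as does the kernel computation on Majorana monomials after removing the Gibbs factors.

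The only difference is cosmetic and concerns the odd sector. The paper reuses its even-sector bound through the identity $b_i(XP_f)=c_i(X)P_f$. Your $b_i'$ is exactly the map $X\mapsto b_i(XP_f)P_f$. So your separate check of its anticommutation relations and of its kernel (spanned by the even operator $e^{-\beta H/4}\gamma_{[N]}e^{-\beta H/4}$) is the same argument written out directly.
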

\begin{proof}
    By \eqref{eq:sim} and the Hermiticity of the block matrix $\left[\{\wt\gamma_i, \wt\gamma_j^\dagger\}\right]$, we can bound the spectrum of the following block matrix:
    \begin{align}
        {\rm spec}\lr{\left[\frac{1}{2}\{\wt\gamma_i,\wt\gamma_j^\dagger\}\right]} = {\rm spec}(D\gamma(\beta/2)) \subseteq [3/4, 5/4], \quad \left[\frac{1}{2}\{\wt\gamma_i,\wt\gamma_j^\dagger\}\right] \geq \frac{3}{4}I.
    \end{align}
    Applying the bound to the columns of $X_i$ and of $X_i^\dagger$, \Cref{lem:aux_majorana_map} gives
    \begin{align}
        \sum_{i,j} \langle X_i, \{b_i, b_j^\dagger\}(X_j)\rangle_{\rm HS} = \frac{1}{4}\sum_{i,j}\lr{\Tr[X_i^\dagger\{\wt\gamma_i,\wt\gamma_j^\dagger\}X_j] + \Tr[X_i\{\wt\gamma_i,\wt\gamma_j^\dagger\}X_j^\dagger]} \geq \frac{3}{4} \sum_i \norm{X_i}_2^2.\label{eq:tuple_inequality}
    \end{align}
    We first use this inequality to bound the following positive auxiliary operator
    \begin{align}\label{eq:defn_K_tilde}
        \wt{\cK} = \sum_{i} b_i^\dagger b_i,\quad \langle X, \wt{\cK}(X)\rangle_{\text{HS}} = \sum_i \|b_i(X)\|_2^2.
    \end{align}
    The relations $b_i b_j = -b_j b_i$ give 
    \begin{align}
        \wt{\cK}^2 = \sum_{i,j} b_i^\dagger \left(\{b_i, b_j^\dagger\} - b_j^\dagger b_i\right) b_j = \sum_{i,j} b_i^\dagger \{b_i, b_j^\dagger\} b_j + \sum_i b_i^\dagger \wt{\cK} b_i.
    \end{align}
    Combining \eqref{eq:tuple_inequality} and positivity of $\wt{\cK}$, we can then get 
    \begin{align}
        \langle X, \wt{\cK}^2(X) \rangle_{\text{HS}} \geq \frac{3}{4} \sum_i \|b_i(X)\|_2^2 = \frac{3}{4} \langle X, \wt{\cK}(X) \rangle_{\text{HS}}.
    \end{align}
    For an eigenvector of $\wt{\cK}$ with eigenvalue $\lambda > 0$, this inequality becomes $\lambda^2 \geq \frac{3}{4}\lambda$, and hence $\lambda \geq 3/4$.
    \Cref{lem:aux_majorana_map} and the sum-of-squares definition in \eqref{eq:defn_K_tilde} imply 
    \begin{align}
        \ker \wt{\cK} = \bigcap_i \ker b_i = \operatorname{span} \{\sqrt{\rho}\}.
    \end{align}
    Expanding in an orthonormal eigenbasis then gives
    \begin{align}
        \langle X, \wt{\cK}(X)\rangle_{\text{HS}} \geq \frac{3}{4} \|X\|_2^2,
    \end{align}
    for $X \perp \sqrt{\rho}$.
    It remains to transfer this bound to $\cK$.
    For even $X$, the definitions give $b_i(X) = c_i(X)$, so the bound follows immediately.
    For odd $X$, multiplication by physical parity gives
    \begin{align}
        b_i(XP_f) = c_i(X) P_f.
    \end{align}
    The operator $XP_f$ is also odd, and is therefore orthogonal to the even operator $\sqrt{\rho}$.
    Since multiplication by $P_f$ preserves the Hilbert-Schmidt norm, 
    \begin{align}
        \langle X, \cK(X)\rangle_{\text{HS}} &= \sum_i \|c_i(X)\|_2^2  = \sum_i \|b_i(XP_f)\|_2^2 \geq \frac{3}{4}\|XP_f\|_2^2 = \frac{3}{4} \|X\|_2^2.
    \end{align}
    Finally, every $X\perp \sqrt{\rho}$ decomposes into $(X+\cP(X))/2$ and $(X-\cP(X))/2$, both orthogonal to $\sqrt{\rho}$.
    Each $c_i$ reverses operator parity, so its outputs on these two components are orthogonal.
    Adding their bounds gives 
    \begin{align}
        \langle X, \cK(X)\rangle_{\text{HS}} \geq \frac{3}{4} \left(\norm{\frac{X+\cP(X)}{2}}_2^2 + \norm{\frac{X-\cP(X)}{2}}_2^2\right) = \frac{3}{4} \|X\|_2^2.
    \end{align}
    Since $\cK(\sqrt{\rho}) = 0$, this also identifies its kernel and proves the asserted spectral gap.
\end{proof}

\section{Bounding the SYK Jacobian with a classical Jacobian}\label{sec:jacob}
We have now reduced the spectral-gap argument to showing
\begin{align}
    \norm{D\gamma(\beta/2)-I} \leq \frac{1}{4}, \qquad \norm{\sM(\gamma(\beta/4),\gamma(-\beta/4))} \leq 150J.
\end{align}
The main obstacle is showing that the above norms do not grow with system size (at small constant $\beta$) although $\norm{H}$ is extensive. Cancellations due to the Gaussian disorder must be accounted for. We do this by analyzing a classical Hamiltonian defined on commuting variables $x_1,\dots,x_N$ with the same disorder,
\begin{align}
    G = \sum_{A\in\binom{[N]}{4}} J_A x_A.
\end{align}
We will later evaluate this on Ising configurations $\sigma\in\{\pm1\}^N$.

In the fermionic setting, recall that we set $\gamma_j(s) = e^{-s \ad_H}\gamma_j$. Observe that the action of the commutator on a fermionic string is
\begin{align}\label{eq:hgam}
    [H, \gamma_{w_1}\cdots\gamma_{w_L}] = -2\sum_{a=1}^L \sum_{A\ni w_a} (-1)^{|\{b\in A:b < w_a\}|} J_A \gamma_{w_1}\cdots \gamma_{w_{a-1}} \gamma_{A\setminus\{w_a\}} \gamma_{w_{a+1}} \cdots \gamma_{w_L},
\end{align}
where the sign is due to reordering fermions.
Classically, this operation is instead achieved by
\begin{align}
    (2\nabla G \cdot \nabla) x_{w_1}\cdots x_{w_L} &= 2\sum_{j=1}^N (\partial_j G) \partial_j (x_{w_1}\cdots x_{w_L} )\\
    &= 2\sum_{a=1}^L \sum_{A \ni w_a} J_A x_{w_1} \cdots x_{w_{a-1}} x_{A\setminus\{w_a\}} x_{w_{a+1}} \cdots x_{w_L}.\label{eq:hx}
\end{align}
Hence, the analogue of $\gamma_j(s)$ will be $x_j(s) = e^{s(2\nabla G \cdot \nabla)} x_j\Bigg|_{x=\sigma}$ for some initial condition $\sigma$ that we will make explicit. Equivalently, $x$ satisfies
\begin{align}
    \frac{d}{ds} x(s;\sigma) = 2\nabla G(x(s;\sigma)), \qquad x(0;\sigma) = \sigma.
\end{align}
We then expand the Jacobians
\begin{align}
    D\gamma(s) = I + \sum_{m \geq 1} s^m F_m, \qquad Dx(s;\sigma) = I + \sum_{m \geq 1} s^m \wt F_m(\sigma),
\end{align}
where the classical Jacobian is evaluated at $\sigma$:
\begin{align}\label{eq:fm}
    (F_m)_{ji} = \frac{(-1)^m}{2m!} \{\gamma_i, \ad_H^m \gamma_j\}, \qquad (\wt F_m(\sigma))_{ji} = \frac{1}{m!} \partial_{x_i} \lr{2\nabla G \cdot \nabla}^m x_j \Bigg|_{x=\sigma}.
\end{align}
Our main lemma bounds the fermionic Jacobian by the classical Jacobian. Note that $F_m$ is an $N\times N$ matrix whose entries are $d \times d$ operators; below, we use $\Tr$ to denote the full trace on $\C^N \otimes \C^d$. On the other hand, $\wt F_m$ is an $N\times N$ matrix. This dimension mismatch introduces the factor of $d$ in the lemma below.
\begin{lemma}\label{lem:jacob}
    For integers $m,p\geq1$, the coefficients in \eqref{eq:fm} satisfy
    \begin{align}
        \frac{1}{d}\E_J\Tr[(F_m^\dagger F_m)^p]
        \leq\E_J\Tr[(\wt F_m(\mathbf1)^\dagger\wt F_m(\mathbf1))^p].
    \end{align}
\end{lemma}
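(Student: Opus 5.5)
Both sides will be expanded as polynomials in the Gaussian couplings $J_A$ with combinatorial coefficients. The classical coefficients are all nonnegative. The fermionic coefficients equal them up to signs, which are then discarded by the triangle inequality after the Gaussian average.

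\textbf{Step 1: path expansions of the coefficients.} Iterating \eqref{eq:hgam} $m$ times on $\gamma_j$ gives $\ad_H^m\gamma_j$ as a sum over \emph{histories}. A history is a sequence of choices: at each step we pick a position $a$ in the current word and a set $A\ni w_a$, and substitute $w_a\mapsto A\setminus\{w_a\}$. Each history $h$ contributes $(-2)^m\epsilon_h J_h\,\gamma_{w(h)}$, where $J_h=\prod_{t}J_{A_t}$, $\epsilon_h\in\{\pm1\}$ collects the reordering signs, and $\gamma_{w(h)}$ is the final word. The word may contain repeated letters; collapsing them with $\gamma_k^2=1$ gives $\pm\gamma_S$.

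Iterating \eqref{eq:hx} produces exactly the same histories with coefficient $2^m J_h$ and monomial $x_{w(h)}$. By \eqref{eq:fm} we therefore have
\[
(F_m)_{ji}=\tfrac{1}{m!}\sum_h \pm 2^mJ_h\,\partial_i\gamma_{w(h)},\qquad (\wt F_m(\mathbf 1))_{ji}=\tfrac{1}{m!}\sum_h 2^mJ_h\,(\#\text{occurrences of } i \text{ in } w(h)),
\]
since $\partial_{x_i}x_w|_{x=\mathbf1}$ counts the occurrences of $i$ in $w$.

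On the fermionic side, $\partial_i$ acting on a word with $i$ occurring $r$ times is a sum of $r$ terms. Each term deletes one occurrence of $\gamma_i$ with a sign, so it is a signed sum of $r$ unitaries. This gives entrywise domination by the classical side, history by history: each fermionic term is a unitary of norm $1$ carrying coefficient $J_h$, and the classical coefficient is the number of such terms.

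\textbf{Step 2: expand the moment and average over the disorder.} Write
\[
\Tr[(F_m^\dagger F_m)^p]=\sum_{i_1,j_1,\dots}\Tr\big[(F_m)_{j_1i_1}^\dagger(F_m)_{j_1 i_2}\cdots\big]
\]
as a sum over closed index cycles of length $2p$. Substituting the history expansion of each entry makes every term of the form $\prod_{t}J_{h_t}\cdot\Tr[U_1\cdots U_{2p}]$, with each $U_t$ a signed Majorana string, hence unitary. Taking $\E_J$ kills every term in which some $J_A$ appears to an odd total power, and the surviving Gaussian moments $\E\prod J_{h_t}$ are all nonnegative. Since $|\Tr[U_1\cdots U_{2p}]|\le d$, we get
\[
\tfrac1d\,\E\Tr[(F_m^\dagger F_m)^p]\le\sum \E\Big[\prod_t J_{h_t}\Big]\cdot 1 .
\]
Here the sum runs over index cycles, histories, and choices of deleted occurrences; the nonnegativity of the Gaussian moments is what allows taking absolute values termwise without losing the cancellations.

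The same expansion of $\E\Tr[(\wt F_m^\dagger\wt F_m)^p]$ produces exactly the same index set and the same Gaussian moments. Its products of occurrence counts match the number of occurrence choices, and every sign is $+1$. The right-hand side is therefore exactly the classical moment, which proves the inequality.

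\textbf{Main obstacle.} The delicate point is making the correspondence between fermionic and classical terms exact. Two issues need care. First, repeated letters within a word must be handled consistently: the plan is to never simplify $\gamma_k^2$ during the expansion and to work with raw words throughout, so that $\partial_i$ is applied by the Leibniz rule $\partial_i(XY)=(\partial_iX)Y+\cP(X)\partial_iY$ occurrence by occurrence. This matches $\partial_{x_i}$ acting on a raw monomial, up to the signs contributed by $\cP$. Second, the conjugate transposes in $F_m^\dagger$ must introduce only signs and unitaries; this holds because $\gamma_S^\dagger=\pm\gamma_S$ and the couplings $J_A$ are real.

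Once every fermionic term is a real coupling product times a norm-one operator, indexed bijectively with the nonnegative classical terms, the inequality follows from the triangle inequality after averaging. This is also why the Gaussian average must be taken before the absolute values.
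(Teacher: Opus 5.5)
Your proposal is correct and follows essentially the paper's argument. Like the paper, you match the fermionic and classical word expansions generated by $\ad_H$ and $2\nabla G\cdot\nabla$, use nonnegativity of the Gaussian moments, and bound each trace term by $d$ only after averaging. The only difference is bookkeeping: you apply the triangle inequality to each history and occurrence separately, each being a unitary word, whereas the paper first groups terms by coupling monomial and uses $\|[J^\alpha](F_m)_{ji}\|\le[J^\alpha](\wt F_m(\mathbf 1))_{ji}$ together with submultiplicativity.
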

\begin{proof}
    To easily evaluate the expectation over Gaussian disorder, we group terms by the Hamiltonian coefficients $J_A$. Let $\alpha_A$ record the number of times each $J_A$ appears; we write $J^\alpha = \prod_A J_A^{\alpha_A}$ and $|\alpha| = \sum_A \alpha_A$. Using
    \begin{align}
        \partial_i(\gamma_{w_1}\cdots\gamma_{w_L}) = \sum_{a=1}^L (-1)^{a-1} \delta_{i,w_a} \gamma_{w_1}\cdots\gamma_{w_{a-1}}\gamma_{w_{a+1}}\cdots\gamma_{w_L},
    \end{align}
    we have for $m=|\alpha|$ by applying the triangle inequality on \eqref{eq:hgam} and comparing the terms to \eqref{eq:hx} (noting the norm of a Majorana string is 1) that
    \begin{align}
        \left\|
            [J^\alpha]\partial_i\ad_H^m\gamma_j
        \right\|
        &\leq
        [J^\alpha]\partial_{x_i}
        (2\nabla G\cdot\nabla)^m x_j
        \bigg|_{x=\mathbf 1}.
    \end{align}
    Since $\ad_H^m\gamma_j$ is odd, \eqref{eq:fm} then gives
    \begin{align}
        \norm{[J^\alpha](F_m)_{ji}}
        &=
        \frac{1}{m!}
        \left\|
            [J^\alpha]\partial_i\ad_H^m\gamma_j
        \right\| \leq
        \frac{1}{m!}
        [J^\alpha]\partial_{x_i}
        (2\nabla G\cdot\nabla)^m x_j
        \bigg|_{x=\mathbf 1} =
        [J^\alpha](\wt F_m(\mathbf 1))_{ji}.
    \end{align}
    Each entry of $F_m$ is homogeneous of degree $m$ in the couplings, and
    $\E_J J^{\alpha_1+\cdots+\alpha_{2p}}\geq0$.
    With $i_{p+1}=i_1$, we obtain
    \begin{align}\label{eq:gaussian_trace_expansion}
        &\frac{1}{d}\E_J\Tr[(F_m^\dagger F_m)^p]\notag\\
        &=\sum_{i,j\in[N]^p}
          \sum_{|\alpha_1|=\cdots=|\alpha_{2p}|=m}
          \E_J J^{\sum_r\alpha_r}\,
          \frac{1}{d}\operatorname{Re}\Tr\prod_{r=1}^p
          \left(\bigl([J^{\alpha_{2r-1}}](F_m)_{j_ri_r}\bigr)^\dagger
          [J^{\alpha_{2r}}](F_m)_{j_ri_{r+1}}\right)\notag\\
        &\leq\sum_{i,j\in[N]^p}
          \sum_{|\alpha_1|=\cdots=|\alpha_{2p}|=m}
          \E_J J^{\sum_r\alpha_r}\,
          \prod_{r=1}^p
          [J^{\alpha_{2r-1}}](\wt F_m(\mathbf1))_{j_ri_r}
          [J^{\alpha_{2r}}](\wt F_m(\mathbf1))_{j_ri_{r+1}}\\
        &=\E_J\Tr[(\wt F_m(\mathbf1)^T\wt F_m(\mathbf1))^p].\notag
    \end{align}
    The inequality uses $|\Tr Z|\leq d\|Z\|$, submultiplicativity of the
    operator norm, and the coefficient bound above.
\end{proof}

Since it is commuting, the classical Jacobian can now be easily bounded with standard techniques; the proof is in \Cref{app:classical}. This yields an immediate bound on the SYK Jacobian.
\begin{lemma}\label{lem:classical}
    For integers $m, p \geq 1$ with $mp \leq 7N$,
    \begin{align}
        \lr{\E_J \norm{\wt F_m(\mathbf 1)}^{2p}}^{1/(2p)} \leq (105J)^m.
    \end{align}
\end{lemma}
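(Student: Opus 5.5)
We need to bound the operator norm of the classical Taylor coefficient $\wt F_m(\mathbf 1)$, an $N\times N$ matrix whose entries are polynomials of degree $m$ in the Gaussian couplings. The plan is to write $\wt F_m(\mathbf 1)$ as a sum over ``histories'' of $m$ applications of the generator $2\nabla G\cdot\nabla$, and to control it by Gaussian hypercontractivity together with an $\epsilon$-net. Since $\|M\|=\sup_{u,v\in S^{N-1}} u^T M v$, we consider the bilinear form $\Phi(u,v)=u^T\wt F_m(\mathbf 1)v$. This is a homogeneous degree-$m$ polynomial in the Gaussian vector $(J_A)$, so by Gaussian hypercontractivity its $q$-th moment is at most $(q-1)^{m/2}$ times its second moment. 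We take $q\approx N$, which is allowed by the constraint $mp\le 7N$. A union bound over a $1/4$-net of $S^{N-1}\times S^{N-1}$, of size $9^{2N}$, then converts this into a tail bound on $\|\wt F_m(\mathbf 1)\|$, and integrating that tail gives the $2p$-th moment. The factor $9^{2N/q}$ is a constant when $q\asymp N$, and the $(q-1)^{m/2}$ growth has to be absorbed by the normalization $6J^2/N^3$.

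The key step is the second-moment bound $\E_J\Phi(u,v)^2\le (CJ^2)^m/\text{(something harmless)}$, uniformly in unit $u,v$. The natural route is to view $x(s)$ as a vector-valued power series: $x(s)=\sum_m s^m c_m$, where $c_m$ is obtained by iterating $c_{m}=\frac{2}{m}\sum_{a+b+c=m-1}\nabla^{(3)}$-type trilinear terms. Concretely, $2\partial_jG(x)=2\sum_{k<l<r}J_{jklr}x_kx_lx_r$ is a random cubic map $T(x,x,x)$. Its Jacobian $\wt F_m$ is then a sum over rooted ternary trees with $m$ internal nodes, each node carrying one copy of $T$, where one leaf is differentiated (replaced by $e_i$) and the remaining leaves are set to $\mathbf 1$. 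There are at most $C^m$ trees (Catalan-type count), which after the $1/m!$ and the combinatorics of the Taylor coefficients gives geometric growth. So it suffices to bound the operator norm of each tree-contraction of independent-ish copies of $T$ against $\mathbf 1$ vectors.

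I expect the main obstacle to be the large-$N$ scaling. Each vertex uses $T$ with arguments that are either $\mathbf 1$ (norm $\sqrt N$) or outputs of subtrees, and naively $\|T(\mathbf 1,\mathbf 1,\cdot)\|$ looks like $N^{-3/2}\cdot N$. The cancellations from random signs must show that $T(\mathbf 1,\mathbf 1,\mathbf 1)_j=O(J)$ per coordinate (true: variance $\sim N^3\cdot N^{-3}$) and that the linearization $v\mapsto T(\mathbf 1,\mathbf 1,v)$ has operator norm $O(J)$ (a Wigner-type random matrix with entry variance $\sim N^{-1}$). I would handle this not tree-by-tree but inductively. I would prove, with probability $1-e^{-cN}$, the two estimates $\sup_{\|y\|_\infty\text{-type}}$ and $\|T(y,z,\cdot)\|_{2\to2}\le CJ\,\|y\|_{*}\|z\|_{*}$ for the relevant vectors, where $\|\cdot\|_*$ is the normalized norm $N^{-1/2}\|\cdot\|_2$. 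This uses the $\epsilon$-net argument on the Gaussian tensor $J$: its injective-type norm $\sup_{\|a\|=\dots=\|d\|=1}\sum J_{abcd}a_ib_jc_kd_l=O(J N^{-1/2}\cdot\sqrt N)$. Then, by induction on tree depth, each coefficient $c_m$ satisfies $\|c_m\|_*\le (CJ)^m$ and each Jacobian coefficient satisfies $\|\wt F_m\|\le (CJ)^m$ on this high-probability event. On the complement, a crude polynomial moment bound combined with Cauchy--Schwarz against $e^{-cN}$ absorbs the contribution, which is where $mp\le 7N$ enters. Tracking constants should give $105$.
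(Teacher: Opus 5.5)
Your final plan has the same architecture as the paper's proof. You bound $\|\wt F_m(\mathbf 1)\|$ deterministically by powers of $N\ell(G)$, where $\ell(G)=\sup_{\|x\|=1}|G(x)|$ is the injective norm of the random coupling tensor, and then control $\ell(G)$ probabilistically. Your majorant recursion on the Taylor coefficients of $x(s;\mathbf 1)$ and $Dx(s;\mathbf 1)$ is a valid replacement for the paper's use of Banach's theorem on homogeneous polynomials. Use $|\langle u,T(y,z,w)\rangle|\le 8\ell(G)$ and $|\langle u,2\nabla^2G(x)v\rangle|\le 24\ell(G)\|x\|^2$ for unit $u,v,y,z,w$, where $T$ is the symmetric trilinear map with $T(x,x,x)=2\nabla G(x)$. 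Then the recursion gives $\|c_m\|_*\le(16N\ell(G))^m$ and exactly the paper's estimate $\|\wt F_m(\mathbf 1)\|\le\frac{(2m+1)!!}{m!}(8N\ell(G))^m\le(24N\ell(G))^m$.

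Two side remarks:
\begin{itemize}
\item The injective norm is $O(J/N)$, since the entries have standard deviation of order $JN^{-3/2}$. It is not $O(JN^{-1/2}\cdot\sqrt N)=O(J)$ as you wrote, and only the former yields your estimate $\|T(y,z,\cdot)\|_{2\to2}\le CJ\|y\|_*\|z\|_*$.
\item Your opening hypercontractivity route cannot close for $m\ge2$. At $q\asymp N$ it costs $N^{m/2}$, while the normalization is already spent in $\E\Phi(u,v)^2$; for instance, $\E\wt F_2(\mathbf 1)$ has an identity component of size about $6J^2$.
\end{itemize}

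The genuine gap is the final step, turning a high-probability bound into the $2p$-th moment bound. The lemma is needed for $mp$ as large as $7N$ (Corollary~\ref{cor:jacobian_high_probability} uses $mp\in[6N,7N]$), and in that regime your mechanism fails.
\begin{itemize}
\item A single good event $\{N\ell(G)\le C_0J\}$ fails with probability $e^{-cN}$, so Cauchy--Schwarz on its complement can absorb at most a factor $e^{cN/2}$.
\item A crude $N$-dependent bound on the complement is far too large. For example, bounding $\ell(G)$ by the Frobenius norm of the tensor gives $N\ell(G)\lesssim JN^{3/2}$, which contributes $\bigl(\E\|\wt F_m\|^{4p}\bigr)^{1/2}\approx(CJN^{3/2})^{2mp}=e^{\Theta(N\log N)}(CJ)^{2mp}$ when $mp\asymp N$.
\item Enlarging $C_0$, and hence $c$, does not fix this.
\end{itemize}

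The missing ingredient is the full subgaussian tail of $\ell(G)$ at scale $J/N^{3/2}$, integrated into moments. In the paper this goes as follows:
\begin{itemize}
\item $\ell(G)$ is the supremum of the Gaussian process $(\eta,u)\mapsto\eta G(u)$, with $\sup_u\E G(u)^2\le J^2/(4N^3)$.
\item Sudakov--Fernique gives $\E\ell(G)\le 2J/N$.
\item Borell--TIS gives $\Pr[\ell(G)>2J/N+t]\le e^{-2N^3t^2/J^2}$.
\item Integrating this tail yields $\bigl(\E\ell(G)^{2k}\bigr)^{1/(2k)}\le\frac JN\bigl(2+\sqrt{k/(2N)}\bigr)$.
\end{itemize}
Your $\epsilon$-net union bound also produces a tail of the form $e^{O(N)-cN^3t^2/J^2}$, and that suffices if you integrate it rather than threshold it. Taking $k=mp$ in $\E\|\wt F_m(\mathbf 1)\|^{2p}\le(24N)^{2mp}\E\ell(G)^{2mp}$ is where the hypothesis $mp\le7N$ genuinely enters, and it gives $24\bigl(2+\sqrt{7/2}\bigr)<105$. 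No split into good and bad events is needed.
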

\begin{corollary}\label{cor:fbound}
    For integers $m, p \geq 1$ with $mp \leq 7N$, the operators in \eqref{eq:fm} satisfy
    \begin{align}
        \E \norm{F_m}^{2p} \leq Nd(105 J)^{2pm}.
    \end{align}
\end{corollary}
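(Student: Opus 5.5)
Corollary~\ref{cor:fbound} follows by combining Lemma~\ref{lem:jacob} with Lemma~\ref{lem:classical}, passing between operator norms and Schatten traces. We lower bound the trace moment on the fermionic side and upper bound it on the classical side.

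\emph{Fermionic side.} Regard $F_m$ as an operator on $\C^N\otimes\C^d$. Then $F_m^\dagger F_m$ is positive semidefinite, and its largest eigenvalue is $\norm{F_m}^2$. Since the trace of a positive operator is at least its largest eigenvalue,
\begin{align}
    \norm{F_m}^{2p} \leq \Tr[(F_m^\dagger F_m)^p].
\end{align}
Taking expectations and multiplying and dividing by $d$ gives
\begin{align}
    \E\norm{F_m}^{2p} \leq d\cdot \frac{1}{d}\E_J\Tr[(F_m^\dagger F_m)^p].
\end{align}
Lemma~\ref{lem:jacob} then bounds the right-hand side by $d\,\E_J\Tr[(\wt F_m(\mathbf 1)^\dagger\wt F_m(\mathbf 1))^p]$.

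\emph{Classical side.} The matrix $\wt F_m(\mathbf1)^\dagger\wt F_m(\mathbf1)$ is an $N\times N$ positive semidefinite matrix. Each of its $N$ eigenvalues is at most $\norm{\wt F_m(\mathbf 1)}^2$, so
\begin{align}
    \Tr[(\wt F_m(\mathbf1)^\dagger\wt F_m(\mathbf1))^p]\leq N\norm{\wt F_m(\mathbf1)}^{2p}.
\end{align}
Taking expectations and applying Lemma~\ref{lem:classical}, raised to the power $2p$ and valid under the hypothesis $mp\leq 7N$, gives
\begin{align}
    \E_J\norm{\wt F_m(\mathbf 1)}^{2p}\leq (105J)^{2pm}.
\end{align}

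Chaining the two sides yields $\E\norm{F_m}^{2p}\leq Nd(105J)^{2pm}$, which is the claim. There is no real obstacle: all the work is in Lemmas~\ref{lem:jacob} and~\ref{lem:classical}. The only points to check are the matching normalizations, namely that $\Tr$ here is the full trace on $\C^N\otimes\C^d$ as in Lemma~\ref{lem:jacob}, and that $\wt F_m^\dagger=\wt F_m^T$ for the real matrix $\wt F_m$.
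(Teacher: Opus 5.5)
Your proposal is correct and follows essentially the same route as the paper: the chain $\E\|F_m\|^{2p}\leq\E\Tr[(F_m^\dagger F_m)^p]\leq d\,\E\Tr[(\wt F_m(\mathbf 1)^\dagger\wt F_m(\mathbf 1))^p]\leq Nd\,\E\|\wt F_m(\mathbf 1)\|^{2p}\leq Nd(105J)^{2pm}$ applies Lemma~\ref{lem:jacob} and then Lemma~\ref{lem:classical}. Your justifications for the two norm–trace comparisons (largest eigenvalue of a PSD operator, and $N$ eigenvalues each at most $\|\wt F_m(\mathbf 1)\|^2$) are exactly what is needed.
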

\begin{proof}
    By \Cref{lem:jacob,lem:classical},
    \begin{align}
        \E\|F_m\|^{2p} \leq\E\Tr[(F_m^\dagger F_m)^p] \leq d\E\Tr[(\wt F_m(\mathbf 1)^\dagger\wt F_m(\mathbf 1))^p] \leq Nd\E\|\wt F_m(\mathbf 1)\|^{2p}
        \leq Nd(105J)^{2pm}.
    \end{align}
\end{proof}
A similar argument also controls $\mathsf M$. Recalling from \eqref{eq:mprop} that
\begin{align}
    \partial_s (D\gamma)_{ij}(s) = \sum_a \sM(\gamma(s), -\gamma(s))_{ia}\lr{(D\gamma)_{aj}(s)},
\end{align}
we observe that the analogous classical equation is
\begin{align}\label{eq:classical_variation}
    \partial_s Dx(s;\sigma) = 2\nabla^2 G(x(s;\sigma)) Dx(s;\sigma).
\end{align}
At $\sigma = (1,\dots,1)$, both $2\nabla^2 G(x(s;\mathbf 1))$ and $Dx(s;\mathbf 1)-I$ have nonnegative coefficients in $s$ and the disorder $J_A$. Consequently, the Hessian is bounded coefficient-wise by $\partial_s Dx(s;\mathbf 1)$. This allows us to show the following result analogous to \Cref{lem:jacob}, as well as a corollary similar to \Cref{cor:fbound}.
\begin{lemma}\label{lem:m_jacobian}
    For an integer $m\geq0$, let $M_m=[s^m]\sM(\gamma(s),\gamma(-s))$.
    Then for every integer $p\geq1$,
    \begin{align}
        \frac{1}{d^2}\E_J\Tr[(M_m^\dagger M_m)^p]
        \leq(m+1)^{2p}\E_J\Tr[(\wt F_{m+1}(\mathbf1)^\dagger
          \wt F_{m+1}(\mathbf1))^p].
    \end{align}
    Here the first trace is on $\C^N\otimes\cB(\cH)$ and the second is on $\C^N$.
\end{lemma}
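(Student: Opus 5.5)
We follow the template of \Cref{lem:jacob}: expand $M_m$ entrywise in monomials $J^\alpha$, bound the operator norm of each coefficient by the corresponding nonnegative coefficient of a classical quantity, and then use nonnegativity of the Gaussian moments $\E_J J^{\sum_r\alpha_r}$ to pass the bound through the trace expansion. The new features are that each entry $\sM_{ij}$ is a superoperator on $\cB(\cH)$, not an operator on $\cH$, and that the classical comparison object is the Hessian $2\nabla^2G(x(s;\mathbf 1))$ rather than the Jacobian itself.

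\textbf{Step 1: explicit expansion of $\sM_{ij}$.} We expand $\gamma_k(s)=\sum_n \frac{(-s)^n}{n!}\ad_H^n\gamma_k$ and $\gamma_l(-s)$ similarly. The $s^m$ coefficient of $\sM(\gamma(s),\gamma(-s))_{ij}$ is then
\[
(M_m)_{ij}=\frac13\sum_{k,l}J_{ijkl}\sum_{n+n'=m}(\text{signs}/n!n'!)\,\bigl(L_{a_ka_l}-L_{a_k}R_{b_l}+R_{b_lb_k}\bigr),
\]
where $a_k,b_l$ denote the Taylor pieces $\ad_H^n\gamma_k/n!$ and $L,R$ denote left and right multiplication.

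Each such superoperator is a sum over Majorana strings. A left or right multiplication by a string has $2\to2$ norm $1$. Its natural realization as a $d^2\times d^2$ matrix explains the normalization $1/d^2$ in place of $1/d$.

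\textbf{Step 2: coefficient-wise bound.} By the triangle inequality, as in \eqref{eq:hgam} versus \eqref{eq:hx}, we get
\[
\|[J^\alpha]\ad_H^n\gamma_k\|\le [J^\alpha](2\nabla G\cdot\nabla)^n x_k\big|_{\mathbf 1}.
\]
Hence $\|[J^\alpha](M_m)_{ij}\|_{2\to2}$ is at most $[J^\alpha]$ of the coefficient of $s^m$ in
\[
\tfrac13\sum_{k,l}|J_{ijkl}|\cdot 3\,x_k(s)x_l(s),
\]
computed with all signs made positive. The difficulty is the absolute value on $J_{ijkl}$. We handle it by working at the level of monomials: we count $J_{ijkl}$ as one more coupling in $\alpha$ and compare directly with
\[
[s^m]\,2\partial_i\partial_jG(x(s;\mathbf 1))=[s^m]\,2\sum_{k<l}J_{\{ijkl\}}x_kx_l .
\]
Every $J$-monomial there carries coefficient $+1$ (up to positive combinatorial factors), and the three fermionic orderings collapse onto the single commuting product. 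This yields the bound by $[J^\alpha][s^m]\,2\nabla^2G(x(s;\mathbf 1))_{ij}$, possibly up to a constant that we expect the $1/3$ absorbs.

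\textbf{Step 3: Hessian versus Jacobian derivative.} From \eqref{eq:classical_variation}, all coefficients of $2\nabla^2G(x(s;\mathbf1))$ and of $Dx-I$ are nonnegative in $s$ and $J$. Therefore
\[
[s^m]\,2\nabla^2G\le [s^m]\,\partial_sDx(s;\mathbf 1)=(m+1)\wt F_{m+1}(\mathbf 1)
\]
coefficientwise, the $I$ term of $Dx$ contributing exactly the Hessian and the remaining terms contributing nonnegatively.

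\textbf{Step 4: trace expansion.} We expand $\Tr[(M_m^\dagger M_m)^p]$ exactly as in \eqref{eq:gaussian_trace_expansion}, using $|\Tr Z|\le d^2\|Z\|_{2\to2}$ on $\cB(\cH)$ together with submultiplicativity. This produces the factor $(m+1)^{2p}$ from the $2p$ factors.

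The main obstacle is Step 2: verifying that the sign and ordering bookkeeping of the three terms $L L$, $LR$, $RR$ with $\sum_{k,l}$ (ordered pairs, $J_{ijkl}$ antisymmetric) is dominated by the single classical monomial without losing a factor beyond what the $1/3$ and the $\tfrac12$ in $\partial_i\partial_j$ allow.
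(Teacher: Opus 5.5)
Your proposal is correct and follows the paper's proof essentially step for step. The shared steps are: the word-expansion bound $\|[s^rJ^\alpha]\gamma_k(\pm s)\|\le[s^rJ^\alpha]x_k(s;\mathbf 1)$; the $2\to2$ bound on each of the three terms of $\sM$, compared against $(2\nabla^2G(x(s;\mathbf 1)))_{ij}$; coefficientwise domination of that Hessian by $\partial_sDx(s;\mathbf 1)$, whose $s^m$ coefficient is $(m+1)\wt F_{m+1}(\mathbf 1)$, via the variational equation and nonnegativity; and the Gaussian trace expansion with $|\Tr Z|\le d^2\|Z\|_{2\to2}$. The constant you flagged in Step 2 works out exactly with no loss: the $\tfrac13$ cancels the three terms, and the sum over ordered pairs $(k,l)$ equals $2\sum_{k<l}J_{\{i,j,k,l\}}x_kx_l=(2\nabla^2G)_{ij}$.
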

\begin{proof}
    We use the same word expansion as in \Cref{lem:jacob}, now before taking the fermionic derivative.
    Its coefficient bound is
    \begin{align}\label{eq:word_coefficient_bound}
        \|[s^rJ^\alpha]\gamma_i(\pm s)\|
        \leq[s^rJ^\alpha]x_i(s;\mathbf1),
        \qquad |\alpha|=r.
    \end{align}
    The right side is nonnegative. For each term in \eqref{eq:mdef}, use
    $\|X\mapsto AXB\|_{2\to2}\leq\|A\|\|B\|$.
    For $i\neq j$, the three terms and the factor $1/3$ give
    \begin{align}\label{eq:m_hessian_coefficients}
        \|[J^\alpha](M_m)_{ij}\|_{2\to2}
        &\leq[s^mJ^\alpha]
          \sum_{\substack{k,l\\i,j,k,l\ \mathrm{distinct}}}
          J_{\{i,j,k,l\}}x_k(s;\mathbf1)x_l(s;\mathbf1) =[s^mJ^\alpha]\bigl(2\nabla^2G(x(s;\mathbf1))\bigr)_{ij}.
    \end{align}
    For $i=j$, both sides of \eqref{eq:m_hessian_coefficients} vanish.
    Every coefficient of $x_i(s;\mathbf1)$, $Dx(s;\mathbf1)-I$, and $2\nabla^2G(x(s;\mathbf1))$ is nonnegative.
    Thus \eqref{eq:classical_variation} gives, for $|\alpha|=m+1$,
    \begin{align}\label{eq:m_derivative_coefficients}
        [s^mJ^\alpha]\bigl(2\nabla^2G(x(s;\mathbf1))\bigr)_{ij}
        &\leq[s^mJ^\alpha]
             \bigl(2\nabla^2G(x(s;\mathbf1))Dx(s;\mathbf1)\bigr)_{ij} =[s^mJ^\alpha]\partial_s(Dx)_{ij}(s;\mathbf1) \\
        &=(m+1)[J^\alpha](\wt F_{m+1}(\mathbf1))_{ij}.
    \end{align}
    Combining \eqref{eq:m_hessian_coefficients} and \eqref{eq:m_derivative_coefficients},
    \begin{align}\label{eq:m_coefficient_comparison}
        \|[J^\alpha](M_m)_{ij}\|_{2\to2}
        \leq(m+1)[J^\alpha](\wt F_{m+1}(\mathbf1))_{ij}.
    \end{align}
    Each block of $M_m$ is homogeneous of degree $m+1$ in the couplings.
    Apply the trace expansion \eqref{eq:gaussian_trace_expansion} with blocks
    acting on the $d^2$-dimensional Hilbert-Schmidt space.
    Using \eqref{eq:m_coefficient_comparison} and nonnegativity of the Gaussian moments,
    \begin{align}
        &\frac{1}{d^2}\E_J\Tr[(M_m^\dagger M_m)^p]\notag\\
        &\leq(m+1)^{2p}\sum_{i,j\in[N]^p}
          \sum_{|\alpha_1|=\cdots=|\alpha_{2p}|=m+1}
          \E_J J^{\sum_r\alpha_r}\,
          \prod_{r=1}^p
          [J^{\alpha_{2r-1}}](\wt F_{m+1}(\mathbf1))_{j_ri_r}
          [J^{\alpha_{2r}}](\wt F_{m+1}(\mathbf1))_{j_ri_{r+1}}\notag\\
        &=(m+1)^{2p}\E_J\Tr[(\wt F_{m+1}(\mathbf1)^T
          \wt F_{m+1}(\mathbf1))^p],
    \end{align}
    where $i_{p+1}=i_1$.
\end{proof}
\begin{corollary}\label{cor:mbound}
    For integers $m \geq 0, p \geq 1$ with $(m+1)p \leq 7N$,
    \begin{align}
        \E \norm{M_m}^{2p} \leq Nd^2 (m+1)^{2p} (105J)^{2p(m+1)}.
    \end{align}
\end{corollary}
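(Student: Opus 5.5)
The plan mirrors the proof of \Cref{cor:fbound}: bound the operator norm by a trace power, apply \Cref{lem:m_jacobian}, and then invoke \Cref{lem:classical} with index $m+1$.

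First, $M_m$ is a block operator on $\C^N\otimes\cB(\cH)$, a space of dimension $Nd^2$, with respect to the Hilbert--Schmidt structure on each block. Hence $M_m^\dagger M_m$ is a positive semidefinite operator there, and
\begin{align}
    \norm{M_m}^{2p} = \norm{M_m^\dagger M_m}^p \leq \Tr[(M_m^\dagger M_m)^p],
\end{align}
since the largest eigenvalue raised to the $p$-th power is at most the sum of all eigenvalues raised to the $p$-th power.

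Second, take expectations and apply \Cref{lem:m_jacobian}:
\begin{align}
    \E\norm{M_m}^{2p}\leq \E\Tr[(M_m^\dagger M_m)^p] \leq d^2(m+1)^{2p}\,\E\Tr[(\wt F_{m+1}(\mathbf 1)^\dagger\wt F_{m+1}(\mathbf 1))^p].
\end{align}

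Third, the classical object is an $N\times N$ matrix, so
\begin{align}
    \Tr[(\wt F^\dagger\wt F)^p]\leq N\norm{\wt F}^{2p}.
\end{align}
The hypothesis $(m+1)p\leq 7N$ is exactly the condition needed to apply \Cref{lem:classical} with $m+1$ (which is at least $1$) in place of $m$. That lemma gives
\begin{align}
    \E\norm{\wt F_{m+1}(\mathbf 1)}^{2p}\leq (105J)^{2p(m+1)}.
\end{align}
Chaining these three estimates yields $Nd^2(m+1)^{2p}(105J)^{2p(m+1)}$, as claimed.

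There is no real obstacle here. The one point to check is that the trace in \Cref{lem:m_jacobian} is the trace on the $Nd^2$-dimensional space, so that the operator norm of $M_m$ (the induced norm on tuples with the Hilbert--Schmidt norm) is indeed dominated by that trace. This holds because the tuple norm is exactly the Hilbert space norm on $\C^N\otimes\cB(\cH)$.
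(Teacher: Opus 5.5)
Your proposal is correct and is essentially the paper's own proof. Both dominate $\norm{M_m}^{2p}$ by the trace on the $Nd^2$-dimensional space, apply \Cref{lem:m_jacobian}, bound the $N\times N$ classical trace by $N\norm{\wt F_{m+1}(\mathbf 1)}^{2p}$, and invoke \Cref{lem:classical} at index $m+1$, which the hypothesis $(m+1)p\leq 7N$ permits.
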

\begin{proof}
    By \Cref{lem:m_jacobian,lem:classical},
    \begin{align}
        \E\|M_m\|^{2p}
        \leq\E\Tr[(M_m^\dagger M_m)^p]
        \leq Nd^2(m+1)^{2p}\E\|\wt F_{m+1}(\mathbf 1)\|^{2p}
        \leq Nd^2(m+1)^{2p}(105J)^{2p(m+1)}.
    \end{align}
\end{proof}
Finally, with these moment bounds in hand, we can prove \Cref{thm:main_result} with standard concentration inequalities. We use an elementary bound on the SYK operator norm.

\begin{fact}[SYK norm]\label{fac:norm}
    The SYK Hamiltonian satisfies $\Pr[\norm{H} > 2JN] \leq 2^{-5N/2}$.
\end{fact}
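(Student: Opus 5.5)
We bound $\|H\|$ by a trace moment and then apply Markov's inequality. Fix an even integer $2p$. Since $\|H\|^{2p}\le \Tr(H^{2p})$, we get
\begin{align}
    \Pr[\|H\|>2JN] \le \frac{\E\Tr(H^{2p})}{(2JN)^{2p}} = \frac{d\,\E\,\tfrac1d\Tr(H^{2p})}{(2JN)^{2p}}.
\end{align}
So the task is to bound the normalized moment $\frac1d\E\Tr(H^{2p})$.

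Expand $H^{2p}=\sum_{A_1,\dots,A_{2p}} J_{A_1}\cdots J_{A_{2p}}\gamma_{A_1}\cdots\gamma_{A_{2p}}$. Each Majorana string has operator norm $1$, so $|\frac1d\Tr(\gamma_{A_1}\cdots\gamma_{A_{2p}})|\le 1$. The Gaussian moments $\E J_{A_1}\cdots J_{A_{2p}}$ are nonnegative and vanish unless every coupling appears an even number of times. Taking the triangle inequality over words gives
\begin{align}
    \frac1d\E\Tr(H^{2p})\le \E\Big(\sum_A |J_A|\Big)^{2p}\ \text{restricted to even words}= \E\Big[\Big(\sum_A g_A\Big)^{2p}\Big],
\end{align}
where the $g_A$ are i.i.d.\ $\mathcal N(0,6J^2/N^3)$. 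This is the same sign-dropping comparison as in \Cref{lem:jacob}, now with the commuting variables all set to $1$. The sum $\sum_A g_A$ is Gaussian with variance $\binom N4\cdot 6J^2/N^3\le J^2N/4$, so
\begin{align}
    \frac1d\E\Tr(H^{2p})\le (2p-1)!!\,(J^2N/4)^{p}\le (2pJ^2N/4)^{p}.
\end{align}

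We choose $p$ proportional to $N$, say $p=\lceil cN\rceil$ with a constant $c$ to be fixed. Then
\begin{align}
    \Pr[\|H\|>2JN]\le 2^{N/2}\Big(\frac{pJ^2N/2}{4J^2N^2}\Big)^{p}=2^{N/2}\Big(\frac{p}{8N}\Big)^{p}.
\end{align}
With $p=N$ the bracket is $1/8$, and the bound becomes $2^{N/2}\cdot 2^{-3N}=2^{-5N/2}$, which is exactly the claimed tail. The normalization factor $d=2^{N/2}$ is what forces $p$ to be of order $N$ rather than a constant.

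The one step that needs care is the variance count, $\binom N4 \cdot 6/N^3\le N/4$, which holds since $\binom N4\le N^4/24$. The rest is bookkeeping of constants: $(2p-1)!!\le p^p2^p$, and for $p=N$ this gives $(pJ^2N/2)^p$, as used above. An alternative route would use Gaussian concentration of $\|H\|$, which is Lipschitz in $J$, combined with a bound on the mean. The moment route, however, matches the paper's style and gives the exact constant $2^{-5N/2}$ directly.
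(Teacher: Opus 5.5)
Your proof is correct and is essentially the paper's argument. Like the paper, you bound $\E\Tr H^{2N}\le d\,\E\bigl(\sum_A J_A\bigr)^{2N}=d\,(2N-1)!!\bigl(\sum_A\E J_A^2\bigr)^N$ using nonnegative Gaussian moments and unit-norm Majorana strings, use $\sum_A \E J_A^2\le J^2N/4$, and apply Markov's inequality with $p=N$ to obtain $2^{N/2}\cdot 8^{-N}=2^{-5N/2}$.
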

\begin{proof}
    The bound $\E \Tr H^{2N} \leq d \E \lr{\sum_A J_A}^{2N} = d(2N-1)!! \lr{\sum_A \E J_A^2}^N \leq d(2N)^N (J^2N/4)^N$, where the last inequality used  $\sum_A \E J_A^2 = \binom{N}{4}(6J^2/N^3) \leq J^2N/4$, implies that $\Pr[\norm{H} > 2JN] \leq \E \Tr H^{2N}/(2JN)^{2N} \leq 2^{-5N/2}$.
\end{proof}

This fact and the above bounds are used to show the required properties of $D\gamma$ and $\sM$ for fast mixing.

\begin{corollary}[High-probability Jacobian bounds]\label{cor:jacobian_high_probability}
    With probability at least $1-2e^{-N}$ over the disorder, $\|H\|\leq2JN$ and, simultaneously for every real $s$ with $133J|s|<1$,
    \begin{align}\label{eq:uniform_jacobian_bounds}
        \|D\gamma(s)-I\|&\leq\frac{133J|s|}{1-133J|s|},
        \qquad \|\mathsf M(\gamma(s),\gamma(-s))\|
        \leq\frac{133J}{(1-133J|s|)^2}.
    \end{align}
\end{corollary}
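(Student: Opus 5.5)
The plan is to turn the moment bounds of \Cref{cor:fbound,cor:mbound} into almost-sure-on-a-good-event coefficient bounds via Markov's inequality, and then sum the Taylor series in $s$. Fix the constant $c=133$ and choose $p=p(m)$ as large as the constraint allows, i.e.\ $p=\lfloor 7N/m\rfloor$ for $F_m$ (with $1\le m\le 7N$) and $p=\lfloor 7N/(m+1)\rfloor$ for $M_m$. Markov gives
\begin{align}
\Pr\bigl[\|F_m\|>(cJ)^m\bigr]\le Nd\,(105/c)^{2pm},\qquad
\Pr\bigl[\|M_m\|>(m+1)(cJ)^{m+1}\bigr]\le Nd^2(105/c)^{2p(m+1)}.
\end{align}
Since $pm\ge 7N/2$ for $m\le 7N$ (using $\lfloor x\rfloor\ge x/2$ for $x\ge1$), the exponent is at least $7N\log(133/105)\approx 1.65N$. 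This beats the prefactor $Nd^2=N2^{N}$ only marginally ($\log 2\approx0.69$), so the net failure is $e^{-0.9N}$ per $m$ up to polynomial factors. I would either tighten with $p=\lfloor 7N/m\rfloor$ exactly (giving $pm\ge 7N-m$, close to $7N$ for small $m$), or note that $105$ versus $133$ leaves room: $2\cdot 7N\log(133/105)\approx3.3N$ when $pm\approx7N$.

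The large-$m$ regime ($m>7N$), where the moment bounds are unavailable, is the main obstacle. There I would use the deterministic bound on the event $\|H\|\le 2JN$ from \Cref{fac:norm}: $\|\ad_H^m\gamma_j\|\le(4JN)^m$, so $\|F_m\|\le N(4JN)^m/m!$. Stirling with $m>7N$ gives $m!\ge(m/e)^m\ge(7N/e)^m$, hence $\|F_m\|\le N(4eJ/7)^m\le N(1.6J)^m$, which is far below $(cJ)^m$ for large $m$ (absorb the factor $N$ since $m>7N$). The same argument bounds $M_m$, whose coefficient is a sum of $O(N^2)$ terms of products of Taylor coefficients of $\gamma_k(s)$, each bounded by $(4JN)^r/r!$ (a convolution, still of the same factorial order). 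Finally I union bound over $m\le 7N+1$; the polynomial number of terms and the $2^{-5N/2}$ from \Cref{fac:norm} fit inside $2e^{-N}$ provided the per-$m$ failure is at most about $e^{-1.1N}$. This is where the constants must be checked carefully, and possibly the choice of $p$ or of the split point adjusted.

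On the good event, summing gives
\begin{align}
\|D\gamma(s)-I\|\le\sum_{m\ge1}(cJ|s|)^m=\frac{cJ|s|}{1-cJ|s|},
\end{align}
simultaneously for all $s$ with $cJ|s|<1$. For $\sM(\gamma(s),\gamma(-s))=\sum_m s^mM_m$ we have $\sum_m(m+1)(cJ)^{m+1}|s|^m=cJ/(1-cJ|s|)^2$, matching \eqref{eq:uniform_jacobian_bounds}. Convergence of the operator-valued series is legitimate since $D\gamma$ and $\sM$ are entire in $s$ for finite $N$.
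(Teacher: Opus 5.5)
\textbf{Gap: the failure-probability accounting does not close with Markov used up to $m\approx 7N$.}

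Your architecture matches the paper's: Markov on the moment bounds of \Cref{cor:fbound,cor:mbound} for small $m$, the deterministic factorial bound on $\{\|H\|\le 2JN\}$ for large $m$, then summing the series, which you do correctly. The problem is the split point, and you leave exactly this step unresolved. For every $M_m$ with $m+1\in(7N/2,7N]$, the constraint $(m+1)p\le 7N$ forces $p=1$. So at $m+1=7N/2+1$ your Markov bound is $Nd^2(105/133)^{2(m+1)}$, which is of order $N\bigl(2(105/133)^7\bigr)^N\approx N(0.38)^N$. For every even $N\ge4$ this already exceeds the whole budget $2e^{-N}\approx 2(0.37)^N$ for that single $m$. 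Your tightening $pm\ge 7N-m$ only helps when $m$ is small compared with $N$. So no admissible $p$ rescues the range $m\gtrsim 7N/2$, and the union bound cannot give $1-2e^{-N}$.

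\textbf{Fix: the deterministic bound already covers all $m>N$.} On $\{\|H\|\le 2JN\}$,
\[
\|F_m\|\le \frac{N(4JN)^m}{m!}\le N\Bigl(\frac{4eJN}{m}\Bigr)^m\le N(4eJ)^m\le(133J)^m\qquad(m>N).
\]
Hence Markov is needed only for $m\le N$. There, taking $p=\lceil 6N/m\rceil$ (or your $\lfloor 7N/m\rfloor$) gives $pm\ge 6N$. The union bound over these $m$ then fails with probability at most $N^2(d+d^2)(105/133)^{12N}\le 2\cdot 4^{-N}$. Adding the $2^{-5N/2}$ from \Cref{fac:norm} still fits within $2e^{-N}$. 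This is exactly the paper's choice.

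\textbf{Secondary point on $M_r$ for $r\ge N$.} Your convolution sketch omits the random factors $\sum_{k,l}|J_{ijkl}|$. These can be controlled via $|J_A|=|\Tr(\gamma_A H)|/d\le\|H\|$. The paper avoids the issue with the identity $\mathsf M(\gamma(s),\gamma(-s))_{ij}(X)=e^{-sH}(M_0)_{ij}(e^{sH}Xe^{sH})e^{-sH}$. This gives
\[
\|M_r\|\le\|M_0\|\frac{(4\|H\|)^r}{r!}\le 133J(8eJ)^r\le(r+1)(133J)^{r+1},
\]
with $\|M_0\|\le 133J$ supplied by the $m=1$ Markov event.
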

\begin{proof}
    For each $1\leq m\leq N$, take $p=\lceil6N/m\rceil$, so $6N\leq pm\leq7N$.
    By \Cref{cor:fbound,cor:mbound} and Markov's inequality,
    \begin{align}
        \Pr\left[\exists m\in[N]:
          \max\{\|F_m\|,\|M_{m-1}\|/m\}>(133J)^m\right] \leq N^2(d+d^2)(105/133)^{12N}
         \leq2\cdot4^{-N}.
    \end{align}
    Intersect this event's complement with $\{\|H\|\leq2JN\}$;
    by \Cref{fac:norm}, the latter event fails with probability at most $2^{-5N/2}$.
    For $m>N$, the bound $m!\geq(m/e)^m$ gives
    \begin{align}
        \|F_m\|
        &\leq\frac{N(2\|H\|)^m}{m!}
         \leq N\lr{\frac{4eJN}{m}}^m
         \leq N(4eJ)^m\leq(133J)^m.
    \end{align}
    The identity $\mathsf M(\gamma(s),\gamma(-s))_{ij}(X)=e^{-sH}(M_0)_{ij}(e^{sH}Xe^{sH})e^{-sH}$ and $\|M_0\|\leq133J$ imply, for $r\geq N$,
    \begin{align}
        \|M_r\|
        &\leq\|M_0\|\sum_{a_1+\cdots+a_4=r}
          \frac{\|H\|^r}{a_1!\cdots a_4!}
         =\|M_0\|\frac{(4\|H\|)^r}{r!} \leq133J\lr{\frac{8eJN}{r}}^r
         \leq(r+1)(133J)^{r+1}.
    \end{align}
    Thus, on a single event of probability at least
    $1-2\cdot4^{-N}-2^{-5N/2}\geq1-2e^{-N}$,
    \begin{align}
        \|F_m\|\leq(133J)^m,
        \qquad \|M_{m-1}\|\leq m(133J)^m,
        \qquad m\geq1.
    \end{align}
    Summing the series gives
    \begin{align}
        \|D\gamma(s)-I\|
        &\leq\sum_{m\geq1}(133J|s|)^m
         =\frac{133J|s|}{1-133J|s|},\\
        \|\mathsf M(\gamma(s),\gamma(-s))\|
        &\leq\sum_{m\geq1}|s|^{m-1}m(133J)^m
         =\frac{133J}{(1-133J|s|)^2}.
    \end{align}
\end{proof}

\begin{proof}[Proof of \Cref{thm:main_result}]
    On the event in \Cref{cor:jacobian_high_probability}, for $0\leq\beta J\leq1/1000$,
    \begin{align}
        \|D\gamma(\beta/2)-I\|
        &\leq\frac{133\beta J/2}{1-133\beta J/2}
         <\frac{1}{4},\\
        \|\mathsf M(\wt\gamma,\wt\gamma^\dagger)\|
        &\leq\frac{133J}{(1-133\beta J/4)^2}
         <150J.
    \end{align}
    Applying \Cref{lem:curvature_implies_gap,cor:pseudo_comparisons,lem:filtered_to_lind_gap},
    \begin{align}
        \operatorname{gap}(\cL)
        &\geq2\operatorname{gap}(\cK_f)
         \geq2\lr{\frac{13}{14}}^2\operatorname{gap}(\cK)
         \geq\frac{507}{392}>1,\\
        t_{\mathrm{mix}}(\epsilon)
        &\leq\frac{392}{507}\lr{\frac{N}{4}\log2+2\beta JN+\log\frac{1}{\epsilon}}
         =O\lr{N+\log\frac{1}{\epsilon}}.
    \end{align}
\end{proof}

\paragraph{Acknowledgment.} The authors thank Mathews Boban, Shayan Oveis Gharan, Jake Hofgard, Petar Nizic-Nikolac and Robert Wang for discussions on classical results obtained by Bochner inequalities. YC is supported in part by the Air Force Office of Scientific Research under grant agreement FA9550-21-1-0392, by the Stanford Graduate Fellowship in Science \& Engineering, and by the National Science Foundation Graduate Research Fellowship Program under Grant No. DGE-2146755.YZ acknowledges funding from the
National Science Foundation (PHY-1733907). The Institute for Quantum Information and Matter is an NSF Physics Frontiers Center. AZ is supported by a Hertz Fellowship and a grant from the Simons Foundation (MP-SIP-00001553, AWH).

\paragraph{AI disclosure.} The main idea behind the proven theorem statement was proposed by ChatGPT 6. The authors developed and substantially simplified the argument, introducing the comparison between fermionic and classical Jacobians that is central to the proof presented here. All proofs were checked by the authors, who take responsibility for their correctness.

\bibliographystyle{alpha}
\bibliography{ref}

\newpage
\appendix 
\section{Jacobian bounds for the classical Hamiltonian}\label{app:classical}
The proof is short using three facts from the literature. All vector norms are Euclidean.

\begin{fact}[Banach's theorem, Eq.~(23) of \cite{friedland2016nuclear}]
    For an integer $r\geq1$ and a symmetric $r$-linear form $T:(\mathbb R^n)^r\to\mathbb R$,
    \begin{align}
        \sup_{\|u_1\|=\cdots=\|u_r\|=1}|T(u_1,\ldots,u_r)|
        =\sup_{\|u\|=1}|T(u,\ldots,u)|.
    \end{align}
\end{fact}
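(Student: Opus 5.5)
Since $T(u,\ldots,u)$ is a special case of $T(u_1,\ldots,u_r)$, the inequality ``$\geq$'' is immediate, so the plan is to prove ``$\leq$''. By compactness of the product of unit spheres, $M:=\sup_{\|u_i\|=1}|T(u_1,\ldots,u_r)|$ is attained. I will then show that some maximizer can be taken with all $u_i$ equal up to sign. Because $T$ is multilinear, sign flips only change the sign of $T$, so such a maximizer gives $|T(u,\ldots,u)|=M$ for a single unit vector $u$.

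\textbf{Pairwise step.} The basic tool is the case $r=2$, which is the spectral theorem. Fix a maximizer $(u_1,\ldots,u_r)$ and indices $i\neq j$. Freeze all other arguments to get the bilinear form $B(x,y)=T(\ldots,x,\ldots,y,\ldots)$, with $x$ in slot $i$ and $y$ in slot $j$. It is symmetric because $T$ is, so $B(x,y)=\langle x,By\rangle$ for a symmetric matrix $B$.
\begin{itemize}
\item Maximality of the tuple forces $|\langle u_i,Bu_j\rangle|=\|B\|=:\lambda=M$.
\item Equality in Cauchy--Schwarz then gives $Bu_j=\sigma\lambda u_i$ for some sign $\sigma$.
\item By symmetry of $B$, $Bu_i=\sigma\lambda u_j$.
\item Hence $B(u_i+\sigma u_j)=\lambda(u_i+\sigma u_j)$.
\end{itemize}
If $u_i\neq-\sigma u_j$, set $w=(u_i+\sigma u_j)/\|u_i+\sigma u_j\|$. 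It is a unit eigenvector of $B$ with eigenvalue $\lambda$, so $|B(w,w)|=M$, and replacing $(u_i,u_j)$ by $(w,w)$ yields another maximizer. If instead $u_i=-\sigma u_j$, the pair already agrees up to sign. In either case, any two slots of a maximizer can be made equal up to sign while staying in the maximizer set $\mathcal S$.

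\textbf{Globalization.} The pairwise operation must be turned into a single maximizer with all slots equal up to sign. $\mathcal S$ is compact, so I would choose $(u_1,\ldots,u_r)\in\mathcal S$ maximizing the potential $\Phi=\sum_{i<j}\langle u_i,u_j\rangle^2$. Note $\Phi\le\binom r2$, with equality exactly when all slots agree up to sign. Suppose some pair satisfies $u_i\neq\pm u_j$. I would try to show that the pairwise replacement strictly increases $\Phi$:
\begin{itemize}
\item the $(i,j)$ term rises from $\langle u_i,u_j\rangle^2<1$ to $1$;
\item for each $k\neq i,j$, the two terms $\langle u_i,u_k\rangle^2+\langle u_j,u_k\rangle^2$ are replaced by $2\langle w,u_k\rangle^2$.
\end{itemize}
The second change can decrease, so this is the step I expect to be the main obstacle. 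If the direct potential computation fails, I would try two fallbacks in this order.
\begin{itemize}
\item \emph{Symmetrized replacement.} Choose the pair $(i,j)$ and the sign $\sigma$ to maximize the gain, and average over the remaining slots using the symmetry of $T$ to show that the total gain is nonnegative and strictly positive unless all slots coincide up to sign.
\item \emph{Sweeping argument.} Repeatedly apply the pairwise step cyclically over all pairs. Use compactness of $\mathcal S$ to extract a limit point, and argue that at a limit point every pair must already be equal up to sign. At such a point the pairwise step is continuous and leaves the tuple fixed, which forces $\langle u_i,u_j\rangle=\pm1$ for every pair.
\end{itemize}
Either route ends with a maximizer of the form $(\pm u,\ldots,\pm u)$, which gives $M=\sup_{\|u\|=1}|T(u,\ldots,u)|$.
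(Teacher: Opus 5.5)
The paper gives no proof of this fact; it quotes it from Friedland--Lim. So your argument has to stand on its own. Your pairwise step is correct, after disposing of the trivial case $M=0$. $B$ is symmetric with $\|B\|=M$, Cauchy--Schwarz equality gives $Bu_j=\sigma\lambda u_i$ and $Bu_i=\sigma\lambda u_j$, and $(w,w)$ is again a maximizer.

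The genuine gap is the globalization, which as written does not prove anything. With only the $+\lambda$ eigenvector $w$, the potential $\Phi$ can decrease. Nothing you have shown excludes a maximizer with $\langle u_i,u_j\rangle=0$ and two further slots equal to $(u_i-\sigma u_j)/\sqrt2$. There the $(i,j)$ term gains $1$, while each of those two slots loses $1$ because they become orthogonal to $w$, so $\Phi$ drops by $1$.

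Neither fallback fills this. The ``symmetrized replacement'' is not yet an argument. The sweeping argument has no strictly monotone quantity, so compactness yields a limit point but no reason for it to be fixed by the sweep. Moreover, the replacement map is not continuous: the normalization degenerates as $u_i\to-\sigma u_j$, and $\sigma$ can jump. So you cannot pass to the limit in it either.

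The missing idea is already in your computation. When $u_i\neq\pm u_j$, the vector $u_i-\sigma u_j$ is also an eigenvector of $B$, with eigenvalue $-\lambda$. Hence $w'=(u_i-\sigma u_j)/\|u_i-\sigma u_j\|$ satisfies $|B(w',w')|=M$, and replacing $(u_i,u_j)$ by $(w',w')$ also stays in $\mathcal S$. One of the two replacements always strictly increases $\Phi$.

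To see this, put $G=\sum_k u_ku_k^T$, so $\|G\|_{\mathrm F}^2=r+2\Phi$, and let $c=\sigma\langle u_i,u_j\rangle\in(-1,1)$. Since $w\perp w'$ and $u_iu_i^T+u_ju_j^T=(1+c)ww^T+(1-c)w'w'^T$, the two replacements change $G$ by $tD$. Here $D=ww^T-w'w'^T$, with $t=1-c>0$ for $(w,w)$ and $t=-(1+c)<0$ for $(w',w')$. As $\|D\|_{\mathrm F}^2=2$,
\[
\|G+tD\|_{\mathrm F}^2-\|G\|_{\mathrm F}^2=2t\bigl(g+t\bigr),\qquad g=\sum_k\bigl(\langle w,u_k\rangle^2-\langle w',u_k\rangle^2\bigr).
\]
This is strictly positive for $t=1-c$ if $g\ge 0$, and for $t=-(1+c)$ if $g<0$. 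Thus a maximizer of $\Phi$ over the compact set $\mathcal S$ must have all slots equal up to sign. With this, your plan goes through and the fallbacks are unnecessary.
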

\noindent Consequently, every homogeneous polynomial map
$P:\mathbb R^n\to\mathbb R^a$ of degree $r\geq1$ satisfies
\begin{align}\label{eq:homogeneous_derivative_bound}
    \sup_{\|x\|=1}\|DP(x)\|
    \leq r\sup_{\|x\|=1}\|P(x)\|.
\end{align}
Indeed, for a unit vector $w$, write
$\langle w,P(x)\rangle=T_w(x,\ldots,x)$ with $T_w$ symmetric.
For unit $x,h,w$, Banach's theorem gives
\begin{align}
    |\langle w,DP(x)h\rangle|
    &=r|T_w(h,x,\ldots,x)|
    \leq r\sup_{\|z\|=1}|\langle w,P(z)\rangle|
    \leq r\sup_{\|z\|=1}\|P(z)\|.
\end{align}

\begin{fact}[Sudakov-Fernique, Theorem~2.2.3 of~\cite{adler2007random}]
    Let $(X_t)_{t\in T}$ and $(Y_t)_{t\in T}$ be centered real Gaussian
    processes with continuous sample paths on a compact metric space $T$.
    If $\mathbb E(X_t-X_u)^2\leq\mathbb E(Y_t-Y_u)^2$ for all $t,u\in T$,
    then
    \begin{align}
        \mathbb E\sup_{t\in T}X_t\leq\mathbb E\sup_{t\in T}Y_t.
    \end{align}
\end{fact}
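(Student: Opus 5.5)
The statement is a classical Gaussian comparison inequality. I would prove it by the standard smooth-interpolation argument: first for a finite index set, then pass to the compact $T$ by continuity of sample paths.

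\emph{Finite case.} Let $T=\{1,\dots,n\}$, with covariance matrices $\Sigma^X,\Sigma^Y$. Write $\gamma^X_{ij}=\E(X_i-X_j)^2=\Sigma^X_{ii}+\Sigma^X_{jj}-2\Sigma^X_{ij}$, and similarly $\gamma^Y_{ij}$. Take $X,Y$ independent and define $Z(\tau)=\sqrt{\tau}X+\sqrt{1-\tau}Y$ for $\tau\in[0,1]$. Use the soft-max $F_b(z)=b^{-1}\log\sum_i e^{bz_i}$, with Gibbs weights $p_i=e^{bz_i}/\sum_k e^{bz_k}$. Gaussian integration by parts gives
\begin{align}
    \frac{d}{d\tau}\E F_b(Z(\tau))=\frac12\sum_{i,j}\lr{\Sigma^X_{ij}-\Sigma^Y_{ij}}\E\,\partial_i\partial_jF_b(Z(\tau)).
\end{align}
Here $\partial_i\partial_jF_b=b(p_i\delta_{ij}-p_ip_j)$. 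Since $\sum_j p_j=1$, for any symmetric $\Sigma$ one has
\begin{align}
    \sum_{i,j}\Sigma_{ij}(p_i\delta_{ij}-p_ip_j)=\frac12\sum_{i,j}p_ip_j\,\gamma_{ij},
\end{align}
where $\gamma_{ij}=\Sigma_{ii}+\Sigma_{jj}-2\Sigma_{ij}$. Hence
\begin{align}
    \frac{d}{d\tau}\E F_b(Z(\tau))=\frac b4\,\E\sum_{i,j}p_ip_j\lr{\gamma^X_{ij}-\gamma^Y_{ij}}\leq0 .
\end{align}
Integrating from $\tau=0$ to $\tau=1$ gives $\E F_b(X)\leq\E F_b(Y)$. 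Since $\max_i z_i\leq F_b(z)\leq\max_i z_i+b^{-1}\log n$, letting $b\to\infty$ yields $\E\max_i X_i\leq\E\max_i Y_i$. Two routine checks remain: differentiation under the expectation, which holds because $F_b$ has bounded second derivatives and Gaussian tails, and the endpoint behaviour of $\sqrt{\tau}$, which is harmless after integration.

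\emph{Passage to compact $T$.} Fix a countable dense set $\{t_1,t_2,\dots\}\subseteq T$ and let $T_n=\{t_1,\dots,t_n\}$. By continuity of sample paths, $\sup_{T_n}X\uparrow\sup_TX$ almost surely, and likewise for $Y$. The variables $\sup_{T_n}X-X_{t_1}$ are nonnegative and nondecreasing, and $\E X_{t_1}=0$. Monotone convergence therefore gives $\E\sup_{T_n}X\uparrow\E\sup_TX$, with the value $+\infty$ allowed, and similarly for $Y$. Applying the finite case to each $T_n$ and taking limits gives the claim. If $\E\sup_TY=\infty$ the inequality is trivial, so no a priori integrability is needed. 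Measurability of $\sup_T X$ is automatic, because it coincides with the supremum over the countable set.

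The main obstacle is the exact algebra turning the Hessian of the soft-max into a nonnegative combination of increment variances. Everything hinges on the identity above, which uses only $\sum_jp_j=1$, and on getting the sign of the interpolation derivative right. The limiting arguments are standard once that identity is in place.
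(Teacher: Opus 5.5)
Your proof is correct: the soft-max Hessian identity, the sign of the interpolation derivative, the limit $b\to\infty$, and the monotone-convergence passage from a countable dense subset to compact $T$ all check out. The paper gives no proof of its own; it imports this fact from Adler--Taylor, and your argument is the standard Gaussian-interpolation (Chatterjee) proof of Sudakov--Fernique, which to my recollection is essentially the proof in that reference.
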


\begin{fact}[Borell-TIS, Theorem~2.1.1 of~\cite{adler2007random}]
    Let $(X_t)_{t\in T}$ be a centered real Gaussian process with continuous
    sample paths on a compact metric space $T$, and let
    $v^2=\sup_{t\in T}\mathbb E X_t^2>0$.
    Then $\mathbb E\sup_{t\in T}X_t$ is finite and, for $u\geq0$,
    \begin{align}
        \Pr\left[\sup_{t\in T}X_t>
          \mathbb E\sup_{t\in T}X_t+u\right]
        \leq e^{-u^2/(2v^2)}.
    \end{align}
\end{fact}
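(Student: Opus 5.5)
We prove the Borell-TIS bound first for finite index sets and then pass to the compact space $T$ by continuity of sample paths. The engine is Gaussian concentration for Lipschitz functions, with sharp constant: if $g\sim\mathcal N(0,I_n)$ and $F:\R^n\to\R$ is $L$-Lipschitz, then $\Pr[F(g)>\E F(g)+u]\leq e^{-u^2/(2L^2)}$. We derive this with the Herbst argument from the Gaussian log-Sobolev inequality $\mathrm{Ent}(f^2)\leq 2\E\|\nabla f\|^2$. Take $f^2=e^{\lambda F}$, assuming $F$ smooth (by mollification). This gives $\frac{d}{d\lambda}\bigl(\lambda^{-1}\log\E e^{\lambda(F-\E F)}\bigr)\leq L^2/2$, hence $\E e^{\lambda(F-\E F)}\leq e^{\lambda^2L^2/2}$. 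The tail bound then follows from Chernoff with $\lambda=u/L^2$.

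\textbf{Finite case.} Let $T_0=\{t_1,\dots,t_n\}\subseteq T$. Write the covariance matrix as $\Sigma=AA^T$, so $(X_{t_k})_k\stackrel{d}{=}Ag$ with $g\sim\mathcal N(0,I_n)$. Set $F(g)=\max_k (Ag)_k$. Each coordinate $g\mapsto(Ag)_k$ is linear with Lipschitz constant $\|A_{k\cdot}\|=(\E X_{t_k}^2)^{1/2}\leq v$, so $F$ is $v$-Lipschitz as a maximum of $v$-Lipschitz functions. $F$ is integrable because $|F|\leq\sum_k|(Ag)_k|$. The concentration inequality then gives $\Pr[\max_{T_0}X_t>\E\max_{T_0}X_t+u]\leq e^{-u^2/(2v^2)}$ with the same $v$, uniformly in $T_0$.

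\textbf{Passage to $T$ and finiteness of $\E\sup$.} Since $T$ is compact metric and the sample paths are continuous, $S:=\sup_T X_t$ is finite almost surely. Choose finite sets $T_1\subseteq T_2\subseteq\cdots$ whose union is dense. Then $S_n:=\max_{T_n}X_t\uparrow S$ almost surely. The main obstacle is showing $\E S<\infty$, which is needed before the monotone limit can be taken.

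For this we first control the growth of $\E S_n$. Let $m$ be a median of $S$, which is finite because $S<\infty$ a.s. Since $S_n\leq S$, we have $\Pr[S_n\leq m]\geq 1/2$. On the other hand, the finite-case bound gives $\Pr[S_n\leq \E S_n-u]\leq e^{-u^2/(2v^2)}$; the lower tail follows by applying the same argument to $-F$. Taking $u$ with $e^{-u^2/(2v^2)}<1/2$, for instance $u=2v$, forces $\E S_n\leq m+2v$. The sequence $\E S_n$ is nondecreasing and bounded above, and it is bounded below by $\E X_{t_1}=0$.

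Monotone convergence, applied to $S_n-X_{t_1}\geq0$, then gives $\E S=\lim\E S_n\leq m+2v<\infty$. Finally, the events $\{S_n>\E S+u\}$ increase to $\{S>\E S+u\}$, and each satisfies
\[
\Pr[S_n>\E S+u]\leq\Pr[S_n>\E S_n+u]\leq e^{-u^2/(2v^2)}
\]
because $\E S_n\leq\E S$. Passing to the limit yields the stated inequality for $\sup_T X_t$.
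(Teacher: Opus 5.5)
Your proof is correct. The paper itself gives no proof of this statement: it is quoted as a black box from Adler--Taylor and used only once, to obtain the tail bound for $\ell(G)$ in the proof of Lemma~\ref{lem:classical}. So there is no in-paper argument to compare against.

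Your argument follows the standard route. It has three ingredients:
\begin{itemize}
\item sharp dimension-free concentration for Lipschitz functions of a standard Gaussian vector, which you derive from the Gaussian log-Sobolev inequality via Herbst (isoperimetry or Gaussian interpolation would serve equally well);
\item the observation that $g\mapsto\max_k(Ag)_k$ is $v$-Lipschitz, since each row of $A$ has norm $(\E X_{t_k}^2)^{1/2}$;
\item a median argument plus monotone convergence, which passes from finite subsets to $T$ and establishes $\E\sup_T X_t<\infty$.
\end{itemize}

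Three small points each deserve a line.
\begin{itemize}
\item The bound $\E S_n\le m+2v$ gives finiteness only if $v<\infty$. This holds because almost-sure continuity forces $t\mapsto\E X_t^2$ to be continuous (almost-sure limits of centered Gaussians have convergent variances), hence bounded on the compact $T$.
\item You should take $t_1\in T_1$, so that $S_n\ge X_{t_1}$ for all $n$. This is what makes the monotone-convergence step and the lower bound $\E S_n\ge 0$ valid.
\item Applying the log-Sobolev inequality to the unbounded $e^{\lambda F/2}$, and differentiating the moment generating function, are justified by $|F(g)|\le|F(0)|+L\|g\|$. This gives the needed Gaussian exponential moments and also covers the mollification limit.
\end{itemize}
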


\begin{proof}[Proof of \Cref{lem:classical}]
    Put $\ell(G)=\sup_{\|x\|=1}|G(x)|$, which equals the multilinear norm
    by Banach's theorem. For real unit vectors $u,v$,
    \begin{align}
        \mathbb E_J(G(u)-G(v))^2
        &=\frac{6J^2}{N^3}\sum_{A\in\binom{[N]}4}(u_A-v_A)^2 \leq\frac{J^2}{4N^3}\|u^{\otimes4}-v^{\otimes4}\|^2
         =\frac{J^2}{2N^3}(1-\langle u,v\rangle^4)
         \leq\frac{J^2}{N^3}\|u-v\|^2.
    \end{align}
    Sudakov-Fernique compares $G(u)$ with
    $JN^{-3/2}\langle g,u\rangle$, where $g\sim\mathcal N(0,I_N)$.
    Since $G(e_1)=0$ and $G$ and $-G$ have the same law,
    \begin{align}
        \mathbb E_J\ell(G)
        &\leq2\mathbb E_J\sup_{\|u\|=1}G(u)
         \leq\frac{2J}{N^{3/2}}\mathbb E\|g\|
         \leq\frac{2J}{N},\qquad \sup_{\|u\|=1}\mathbb E_JG(u)^2
        \leq\frac{J^2}{4N^3}.
    \end{align}
    Apply Borell-TIS to the continuous Gaussian process
    $(\eta,u)\mapsto\eta G(u)$ on $\{\pm1\}\times S^{N-1}$.
    For $t\geq0$, we have $\Pr[\ell(G)>2J/N+t]\leq e^{-2N^3t^2/J^2}$. Integrating this tail gives, for every integer $k\geq1$,
    \begin{align}\label{eq:tensor_moment}
        \bigl(\mathbb E_J\ell(G)^{2k}\bigr)^{1/(2k)}
        &\leq\frac{2J}{N}
          +\left(2k\int_0^\infty t^{2k-1}e^{-2N^3t^2/J^2}\,dt\right)^{1/(2k)} \hspace{-0.5em} = \frac{2J}{N}+\frac{J(k!)^{1/(2k)}}{\sqrt2\,N^{3/2}}
         \leq\frac{J}{N}\left(2+\sqrt{\frac{k}{2N}}\right).
    \end{align}
    The polynomial $(2\nabla G\cdot\nabla)^m x$ has degree $2m+1$.
    Applying \eqref{eq:homogeneous_derivative_bound} to $G$ and then
    successively to these polynomials gives
    \begin{align}
        \sup_{\|x\|=1}\|(2\nabla G\cdot\nabla)^m x\|
        &\leq8\ell(G)(2m-1)
          \sup_{\|x\|=1}\|(2\nabla G\cdot\nabla)^{m-1}x\| \leq(8\ell(G))^m(2m-1)!!,\\
        \|\widetilde F_m(\mathbf 1)\|
        &=\frac1{m!}\left\|D\bigl((2\nabla G\cdot\nabla)^m x\bigr)
            \big|_{x=\mathbf 1}\right\| \leq\frac{(8N\ell(G))^m(2m+1)!!}{m!}
         \leq(24N\ell(G))^m.
    \end{align}
    Here $\|\mathbf 1\|=\sqrt N$ and
    $(2m+1)!!/m!=\prod_{r=1}^m(2+1/r)\leq3^m$.
    Taking $k=mp\leq7N$ in \eqref{eq:tensor_moment},
    \begin{align}
        \bigl(\mathbb E_J
             \|\widetilde F_m(\mathbf 1)\|^{2p}\bigr)^{1/(2p)}
        &\leq(24N)^m\bigl(\mathbb E_J\ell(G)^{2mp}\bigr)^{1/(2p)} \leq\left[24J\left(2+\sqrt{\frac{mp}{2N}}\right)\right]^m
        \leq(105J)^m.
    \end{align}
\end{proof}

\end{document}